\documentclass[letterpaper,USenglish]{lipics-v2018}

\usepackage{framed}
\usepackage[framemethod=tikz]{mdframed}
\usepackage{wrapfig}
\usepackage{cleveref}
\usepackage{microtype}
\usepackage{boxedminipage}
\allowdisplaybreaks
\usetikzlibrary{shapes.geometric}
\usetikzlibrary{arrows}
\usetikzlibrary{arrows.meta}
\usetikzlibrary{patterns}
\usetikzlibrary{shapes.misc}

\newcommand{\local}{${\mathsf{LOCAL}}$}
\newcommand{\congest}{${\mathsf{CONGEST}}$}

\newcommand{\dist}{\mbox{\rm dist}}
\newcommand{\Hcal}{\mathcal{H}}

\newcommand{\polylog}{\mathsf{polylog}} 
\newcommand{\NearestNeighbors}{\mathsf{NearestNeighbors}}
\newcommand{\ConsSpannerSparseRegion}{\mathsf{SpannerSparseRegion}}
\newcommand{\ConsSpannerDenseRegion}{\mathsf{SpannerDenseRegion}}

\newcommand{\LocalSpanner}{\mathsf{LocalSpanner}}

\newcommand{\pr}[1]{\Pr\! \left[ {#1} \right]}
\newcommand{\prob}[2]{\Pr_{#1}\! \left[ #2 \right]}
\newcommand{\E}[1]{\mathop{\mathbf E}\! \left[ {#1} \right]}
\newcommand{\BB}{\{ 0,1 \}}
\newcommand{\PRG}{\mathsf{PRG}}
\newcommand{\Exp}[2]{\mathop{\mathbf E}_{#1}\! \left[ {#2} \right]}
\newcommand{\ith}[1]{{#1}\textsuperscript{th}}
\newcommand{\ie}  {i.e.,\ }
\newcommand{\eg}  {e.g.,\ }
\newcommand{\set}[1]{\left\{#1\right\}}

\newcommand{\cH}{{\mathcal H}}

\def\cA{{\cal A}}
\def\cC{{\cal C}}

\def\cA{{\cal A}}

\def\cC{{\cal C}}

\renewcommand{\paragraph}[1]{\vspace{0.15cm}\noindent {\bf #1}}
\newtheorem{observation}[theorem]{Observation}
\newtheorem{fact}[theorem]{Fact}
\newtheorem{claim}[theorem]{Claim}
\def\Proof{\par\noindent{\bf Proof:~}}
\def\blackslug{\hbox{\hskip 1pt \vrule width 4pt height 8pt
    depth 1.5pt \hskip 1pt}}
\def\QED{\quad\blackslug\lower 8.5pt\null\par}

%

\title{Congested Clique Algorithms for Graph Spanners}
\author{Merav Parter}{Weizmann IS, Rehovot, Israel}{merav.parter@weizmann.ac.il}{}{}
\author{Eylon Yogev}{Weizmann IS, Rehovot, Israel}{eylon.yogev@weizmann.ac.il}{}{}
\authorrunning{M. Parter and E. Yogev}
\Copyright{Merav Parter and Eylon Yogev}
\subjclass{Theory of computation, Distributed Algorithms}
\keywords{Distributed Graph Algorithms, Spanner, Congested Clique}

\nolinenumbers
\begin{document}
\maketitle

\begin{abstract}
Graph spanners are sparse subgraphs that faithfully preserve the distances in the original graph up to small stretch.
Spanner have been studied extensively as they have a wide range of applications ranging from distance oracles,
labeling schemes and routing to solving linear systems and spectral sparsification.
A $k$-spanner maintains pairwise distances up to multiplicative factor of $k$.
It is a folklore that for every $n$-vertex graph $G$, one can construct a $(2k-1)$ spanner with $O(n^{1+1/k})$ edges. In a distributed setting, such spanners can be constructed in the standard \congest\ model using $O(k^2)$ rounds, when randomization is allowed.  

In this work, we consider spanner constructions in the congested clique model, and show:
\begin{itemize}
\item
A randomized construction of a $(2k-1)$-spanner with $\widetilde{O}(n^{1+1/k})$ edges in $O(\log k)$ rounds. The previous best algorithm runs in $O(k)$ rounds.
\item
A deterministic construction of a $(2k-1)$-spanner with $\widetilde{O}(n^{1+1/k})$ edges in $O(\log k +(\log\log n)^3)$ 
rounds. The previous best algorithm runs in $O(k\log n)$ rounds. This improvement is achieved by a new derandomization theorem for hitting sets which might be of independent interest.
\item
A deterministic construction of a $O(k)$-spanner with $O(k \cdot n^{1+1/k})$ edges in $O(\log k)$ rounds.
\end{itemize}
\end{abstract}


\section{Introduction \& Related Work}

Graph spanners introduced by Peleg and Sch{\"a}ffer \cite{peleg1989graph} are 
fundamental graph structures, more precisely, subgraphs of an input graph $G$, 
that faithfully preserve the distances in $G$ up to small multiplicative 
stretch. 
Spanners have a wide-range of distributed  applications \cite{Peleg:2000} for routing \cite{thorup2001compact}, broadcasting, synchronizers \cite{peleg1989optimal}, and shortest-path computations \cite{BeckerKKL17}. 

The common objective in distributed computation of spanners is to achieve the best-known existential size-stretch trade-off within small number of \emph{rounds}. It is a folklore that for every graph $G=(V, E)$, there exists a $(2k-1)$-spanner $H \subseteq G$ with $O(n^{1+1/k})$ edges. Moreover, this size-stretch tradeoff is believed to be optimal, by the girth conjecture of Erd\H{o}s.

There are plentiful of distributed constructions of spanners for both the \local\ 
and the \congest\ models of distributed computing 
\cite{derbel2006fast,baswana2007simple,derbel2007deterministic,DerbelGPV08,derbel2009local,pettie2010distributed,derbel2010sublinear,grossmanparter17}.
 The standard setting is a synchronous message passing model where per round 
each node can send one message to each of its neighbors. In the \local\ model, 
the message size is unbounded, while in the \congest\ model it is limited to 
$O(\log n)$ bits. One of the most notable distributed randomized constructions 
of $(2k-1)$ spanners is by Baswana \& Sen \cite{baswana07} which can be 
implemented in $O(k^2)$ rounds in the \congest\ model.

Currently, there is an interesting gap between deterministic and randomized 
constructions in the \congest\ model, or alternatively between the 
deterministic construction of spanners in the \local\ vs.\ the \congest\ model. 
Whereas the deterministic round complexity of $(2k-1)$ spanners in the \local\ 
model is $O(k)$ due to \cite{DerbelGPV08}, the best deterministic algorithm in 
the \congest\ model takes $O(2^{\sqrt{\log n\cdot\log\log n}})$ rounds 
\cite{GFDetSpanner18}.

We consider the 
\emph{congested clique} model, introduced by Lotker et al. 
\cite{lotker2003mst}. In this model, in every round, each vertex can send 
$O(\log n)$ bits to each of the vertices in the graph.
The congested clique model has been receiving a lot of attention recently due to its relevance to overlay networks and large scale distributed computation  \cite{hegeman2015lessons,ghaffari2018improved,behnezhad2018brief}. 

\paragraph{Deterministic local computation in the congested clique model.} 
Censor et al. 
\cite{Censor-HillelPS16} initiated the study of \emph{deterministic} local 
algorithms in the congested clique model by means of derandomization of randomized \local\ algorithms. The approach of \cite{Censor-HillelPS16} can be summarized as follows. The randomized 
complexity of  the classical local problems is $\polylog(n)$ rounds
(in both \local\ and \congest\ models). For these randomized algorithms, it is 
usually sufficient 
that the random choices made by vertices are sampled from distributions with 
\emph{bounded independence}.  
Hence, any round of a randomized algorithm can be simulated by giving 
all nodes a shared random seed of $\polylog(n)$ bits.

To completely derandomize 
such a round, 
nodes should compute (deterministically) a 
seed which is at least as ``good''\footnote{The random seed is usually shown 
provide a large progress in expectation. The deterministically computed seed 
should provide a progress at least as large as the expected progress of a 
random seed.} as a random seed would be. This is achieved by
estimating their ``local progress'' when simulating the random choices using 
that 
seed. Combining the techniques of conditional expectation, pessimistic estimators and 
bounded independence, leads to a simple ``voting''-like algorithm in which the 
bits of the seed are computed 
\emph{bit-by-bit}. The power of the congested clique is hence in providing some global leader that collects all votes in $1$ round and broadcasts the winning bit value. This approach led to deterministic MIS in $O(\log \Delta \log n)$ rounds and deterministic $(2k-1)$ spanners with $\widetilde{O}(n^{1+1/k})$ edges in $O(k \log n)$ rounds, which also works for weighted graphs. Barenboim and Khazanov \cite{Barenboim18} presented deterministic local algorithms as a function of the graph's \emph{arboricity}.

\paragraph{Deterministic spanners via derandomization of hitting sets.}
As observed by 
\cite{roditty2005deterministic,bhattacharyya2012transitive,GFDetSpanner18}, the 
derandomization of the Baswana-Sen algorithm boils down into a derandomization 
of $p$-dominating sets or \emph{hitting-sets}. It is a well known fact that 
given 
a collection of $m$ sets $\mathcal{S}$, each containing at least $\Delta$ 
elements coming from a universe of size $n$, one can construct a hitting set 
$Z$ of size $O((n \log m)/\Delta)$. A randomized construction of such a set is 
immediate by picking each element into $Z$ with probability $p$ and applying 
Chernoff. A centralized deterministic construction is also well known by the 
greedy approach (\eg Lemma 2.7 of \cite{bhattacharyya2012transitive}).

In our setting we are interested in deterministic constructions of hitting sets 
in the congested clique model. In this setting, each vertex $v$ knows a subset 
$S_v$ of size at least $\Delta$, that consists of vertices in the $O(k)$-neighborhood of $v$, and it is required to compute a small set $Z$ 
that hits (i.e., intersects) all subsets.  Censor et al. \cite{Censor-HillelPS16} showed that the 
above mentioned randomized construction of hitting sets still holds with 
$g=O(\log n)$-wise independence, and presented an $O(g)$-round algorithm that 
computes a hitting set deterministically by finding a good seed of $O(g \log 
n)$ bits. Applying this hitting-set algorithm to compute each of the $k$ levels 
of clustering of the Baswana-Sen algorithm resulted in a deterministic $(2k-1)$ 
spanner construction with $O(k \log n)$ rounds. 
\vspace{-10pt}
\subsection*{Our Results and Approach in a Nutshell}
We provide improved randomized and deterministic constructions of graph 
spanners in the congested clique model. Our randomized solution is based on an 
$O(\log k)$-round algorithm that computes the $O(\sqrt{n})$ nearest vertices in 
radius $k/2$ for every vertex $v$\footnote{To be more precise, the algorithm computes the $O(n^{1/2-1/k})$ nearest vertices at distance at most $k/2-1$.}. This induces a partitioning of the graph 
into sparse and dense regions. The sparse region is solved ``locally'' and the 
dense region simulates only two phases of Baswana-Sen, leading
to a total round complexity of $O(\log k)$. We show the following for $n$-vertex unweighted graphs.
\begin{mdframed}[hidealllines=true,backgroundcolor=gray!25]
\vspace{-8pt}
\begin{theorem}\label{lem:randspanner}
There exists a randomized algorithm in the congested clique model that constructs a $(2k-1)$-spanner with $\widetilde{O}(k \cdot n^{1+1/k})$ edges within $O(\log k)$ rounds w.h.p.
\vspace{-3pt}
\end{theorem}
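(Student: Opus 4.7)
The plan follows the sketch given in the paper: use an $O(\log k)$-round truncated BFS to partition vertices by local density, then handle each part in $O(1)$ rounds, giving an overall round complexity of $O(\log k)$. Set $L = \widetilde\Theta(n^{1/2-1/k})$. For every vertex $v$, compute the set $B_v$ consisting of the (at most) $L$ vertices closest to $v$ within graph distance $k/2 - 1$, breaking ties by identifier. Call $v$ \emph{sparse} if $|B_v| < L$ (so $B_v$ is the entire $(k/2-1)$-ball around $v$), and \emph{dense} otherwise.

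The first step is to compute $\{B_v\}_{v\in V}$ in $O(\log k)$ rounds by radius squaring. Starting from $B_v^{(0)} = N(v)\cup\{v\}$, I would apply the update
\[
B_v^{(t+1)} \gets \mathrm{Top}_L\!\bigl(\textstyle\bigcup_{u\in B_v^{(t)}} B_u^{(t)}\bigr)
\]
with distances added, doubling the known radius each iteration. Implementing one squaring step amounts to delivering $B_u^{(t)}$ to every $v$ that currently has $u\in B_v^{(t)}$. The total traffic is $\sum_v |B_v^{(t)}|\cdot L \le nL^2 = \widetilde O(n^{2-2/k}) \le \widetilde O(n^2)$, which by Lenzen's routing fits in $O(1)$ rounds per squaring provided the send/receive loads can be balanced. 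After $O(\log k)$ squarings the effective radius exceeds $k/2$.

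Given the $\{B_v\}$, I would build the spanner using two parallel $O(1)$-round subroutines. For every sparse $v$, its $(k/2-1)$-ball is entirely contained in $B_v$, so $v$ can gather the induced subgraph $G[B_v]$ in $O(1)$ rounds by Lenzen's routing and locally compute a $(2k-1)$-spanner of it via the folklore greedy algorithm. The union of these local spanners covers every edge incident to a sparse vertex with stretch $2k-1$, and standard counting bounds the resulting size by $\widetilde O(n^{1+1/k})$. For the dense region, sample $S\subseteq V$ by including each vertex independently with probability $\widetilde\Theta(1/L) = \widetilde\Theta(n^{-1/2+1/k})$. A Chernoff bound gives $|S\cap B_v|\ge 1$ w.h.p.\ for every dense $v$, so every dense vertex has a center of $S$ within distance at most $k/2-1$. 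The spanner then contains Voronoi-like BFS clusters rooted at $S$ (of total size $\le n$) together with, following the final Baswana--Sen phase, one edge per dense vertex into each neighboring cluster, giving stretch $2k-1$ and an additional $\widetilde O(k\cdot n^{1+1/k})$ edges by the standard analysis.

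The main obstacle, I expect, is the bandwidth accounting of the squaring step. While the average per-vertex traffic is $\widetilde O(L^2) = \widetilde O(n)$ and thus within Lenzen's budget, the fan-outs $|\{v: u\in B_v^{(t)}\}|$ can be skewed; making the update implementable in $O(1)$ rounds requires either an explicit load-balancing of the transmissions or an argument that the top-$L$ relation with consistent tie-breaking keeps individual fan-outs within $\widetilde O(L)$. A second subtlety is correctly propagating distances together with identifiers under the $\mathrm{Top}_L$ truncation so that the final $B_v$ genuinely consists of closest vertices, not merely reachable ones. These are the points on which the formal proof will need the most care.
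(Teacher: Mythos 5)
Your high-level decomposition agrees with the paper's: gather the closest $\widetilde O(n^{1/2-1/k})$ vertices in the $(k/2-1)$-ball by radius squaring, declare a vertex sparse if its whole ball was captured, solve the sparse region by local simulation and the dense region by a Baswana--Sen-style clustering with a randomized hitting set. However, each of the three steps has a genuine gap, and in each case the paper's actual proof needs a substantively different mechanism.

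\emph{Topology collection.} You correctly flag the fan-out problem but do not resolve it, and your suggested fix is not enough. The issue is two-fold. First, working in $G$ itself already fails at initialization: a neighbor of a heavy vertex (degree $>\sqrt n$) cannot even learn its $2$-neighborhood in $O(1)$ rounds, because that heavy vertex would have to push $L$ identifiers to more than $\sqrt n$ neighbors. The paper sidesteps this by removing heavy vertices and running the collection inside $G_{light}$, handling heavy vertices and their neighbors separately when clustering the dense region. Second, in the squaring step you deliver $B_u^{(t)}$ to every $v$ with $u\in B_v^{(t)}$; the ``in my top $L$'' relation is not symmetric, so a single $u$ may be in the ball of $\omega(\sqrt n)$ vertices $v$ and exceed its send budget. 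The paper only exchanges trees between \emph{mutually} close pairs ($u\in\widehat N_i(v)$ and $v\in\widehat N_i(u)$), which trivially caps both send and receive loads at $O(n)$, and then proves (via a careful case analysis on the first dense vertex $z$ on the shortest path, and its predecessor $z'$) that this restricted exchange still lets every sparse vertex reconstruct its entire ball. That correctness argument is the nontrivial part you are missing; it is also why the radius grows as $\gamma(i+1)=2\gamma(i)-1$ rather than $2\gamma(i)$.

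\emph{Sparse region.} Running the folklore greedy $(2k-1)$-spanner independently inside each $G[B_v]$ does not work. Greedy is a sequential algorithm whose output depends on the global edge order and on the whole graph; it has no locality guarantee, so the decisions made in $G[B_u]$ and in $G[B_w]$ about a shared edge are inconsistent, and neither the stretch nor the $\widetilde O(n^{1+1/k})$ size of the union is justified. The paper instead uses the Derbel et al.\ algorithm $\LocalSpanner$, chosen precisely because it is a genuine $\local$ algorithm with the property that a vertex whose $i$-ball has fewer than $n^{i/k}$ vertices terminates by round $i$; hence sparse vertices terminate by round $k/2-1$ and can be simulated on $G_{sparse}(u)$. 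You also need the extra communication step (sparse vertices informing neighbors about the edges added in their local runs) to handle edges whose other endpoint is dense and did not simulate anything; your write-up omits this.

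\emph{Dense region.} With a single hitting set $S$ of size $\widetilde\Theta(n/L)=\widetilde\Theta(n^{1/2+1/k})$, the rule ``one edge per dense vertex into each neighboring cluster'' can add up to $\widetilde O(n\cdot n^{1/2+1/k})=\widetilde O(n^{3/2+1/k})$ edges, which is far above the target. This is not the ``final Baswana--Sen phase'': in Baswana--Sen the last clustering has only $O(n^{1/k})$ clusters adjacent to any vertex, whereas here a dense vertex can neighbor $\Omega(n^{1/2+1/k})$ clusters of $\mathcal C_1$. The paper therefore needs a \emph{second} level of clustering: vertices adjacent to at most $n^{1/k}\log n$ clusters add one edge per cluster; the rest are re-clustered by a second hitting set $Z_2$ of size $\widetilde O(\sqrt n)$; and the spanner then takes one edge per \emph{pair} of adjacent clusters in $\mathcal C_1\times\mathcal C_2$, for $|Z_1|\cdot|Z_2|=\widetilde O(n^{1+1/k})$ edges. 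Without this second level the size bound does not follow.
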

\end{mdframed}

Our deterministic algorithms are based on constructions of 
hitting-sets with short seeds. Using the pseudorandom generator of Gopalan et 
al.\ \cite{GopalanMRTV12}, we construct a hitting set with seed length 
$O(\log n \cdot (\log\log n)^3)$ which yields the following for $n$-vertex unweighted graphs.

\begin{mdframed}[hidealllines=true,backgroundcolor=gray!25]
\vspace{-8pt}
\begin{theorem}\label{lem:detspanner}
There exists a \emph{deterministic} algorithm in the congested clique model that constructs a $(2k-1)$-spanner with $\widetilde{O}(k\cdot n^{1+1/k})$ edges within $O(\log k+(\log\log n)^3)$ rounds.
\vspace{-3pt}
\end{theorem}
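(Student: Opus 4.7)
The plan is to derandomize the $O(\log k)$-round algorithm underlying Theorem~\ref{lem:randspanner} by replacing its random sampling steps with pseudorandom samples drawn via a short seed. The first observation is that the randomized algorithm spends most of its rounds on deterministic subroutines---computing the $O(n^{1/2-1/k})$ nearest vertices within radius $k/2$ and handling the sparse region locally---so the only randomized primitive that needs derandomizing is the hitting-set / clustering step used in the $O(1)$ Baswana--Sen phases applied to the dense region. Each such step requires a set $Z\subseteq V$ that intersects every $S_v$ of size at least $\Delta$ held by a vertex $v$, where $\Delta$ is polynomially related to $n^{1/k}$ or $\sqrt{n}$ depending on the phase.

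For the pseudorandom sampler I would invoke the PRG of Gopalan et al.~\cite{GopalanMRTV12}, which fools read-once CNF formulas and combinatorial rectangles with seed length $s=O(\log n\cdot(\log\log n)^3)$. The key reduction step is to express the ``all sets are hit'' guarantee as a conjunction of $n$ rectangle tests: for each $v$, the bad event ``$Z$ misses $S_v$'' is a conjunction over the $|S_v|\ge\Delta$ coordinates asserting that each element is absent from $Z$, which is exactly the type of test the PRG is guaranteed to fool. Setting the PRG error to $1/\poly(n)$ and union-bounding over the $n$ sets shows that at least one seed of length $s$ yields a valid hitting set of the appropriate size.

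Given the existence of a good seed, I would derandomize it in the congested clique using the bit-fixing framework of Censor-Hillel et al.~\cite{Censor-HillelPS16}. Each vertex maintains a local pessimistic estimator bounding its own failure probability, and in each round $\Theta(\log n)$ seed bits are fixed in parallel: every vertex enumerates the $n$ possible completions for the next block of bits, computes its local estimator for each completion, and sends the resulting vector to a designated leader in one round (each value fits in $O(\log n)$ bits and there are $n$ values per vertex, well within the clique bandwidth). The leader selects the completion minimizing the aggregate estimator and broadcasts it. This yields $O(s/\log n)=O((\log\log n)^3)$ rounds per hitting set; since only $O(1)$ hitting sets are invoked across the whole algorithm, the derandomization overhead adds $O((\log\log n)^3)$ on top of the $O(\log k)$ deterministic backbone, giving the claimed bound.

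The main obstacle is the second paragraph above: formally showing that the Gopalan et al.~PRG suffices for the hitting-set predicate at the parameters we need. Since the sets $S_v$ can overlap arbitrarily and are indexed by vertices rather than by disjoint coordinate blocks, expressing the failure event as a genuine read-once CNF or combinatorial rectangle may require an auxiliary splitting/bucketing step (e.g.\ partitioning elements according to their membership patterns and analyzing the PRG action blockwise). Establishing this reduction is exactly the ``new derandomization theorem for hitting sets'' advertised in the abstract, and it is the technical heart of Theorem~\ref{lem:detspanner}; once in place, the rest of the proof is a routine composition with the framework of~\cite{Censor-HillelPS16} and the algorithm of Theorem~\ref{lem:randspanner}.
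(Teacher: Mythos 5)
Your framework is right: derandomize the two hitting-set invocations of the dense-region algorithm via the Gopalan et al.\ PRG combined with the conditional-expectation bit-fixing scheme of Censor-Hillel et al., fixing $\Theta(\log n)$ seed bits per round for a total of $O((\log\log n)^3)$ extra rounds on top of the $O(\log k)$ deterministic backbone. However, you explicitly stop short of the one step that carries the theorem, and the obstacle you name --- that the sets $S_v$ overlap and are not indexed by disjoint coordinate blocks --- is not actually the difficulty. Your proposed fix (bucketing elements by their membership pattern across sets) addresses a non-problem and would lead you away from the clean argument.

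The reduction that is actually used (Lemma~\ref{lem:hittingrandloglog}) allocates a disjoint block of $\ell=\lfloor\log(1/p)\rfloor$ output bits, with $p=\Theta(\log n/\Delta)$, to each \emph{universe element} $i\in[n]$ rather than to each set, and declares $i\in Z$ iff all $\ell$ bits of $\PRG(s)$ in block $i$ equal~$1$. With this encoding, for a \emph{single fixed} set $S$ the event ``$Z$ hits $S$'' is $g_S(y)=\bigvee_{i\in S}\bigwedge_{j\in\mathrm{block}(i)}y_j$, which is a genuine read-once DNF: the blocks are disjoint across distinct elements, and any one set $S$ contains each element at most once, so no variable of $y$ is read twice inside $g_S$. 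Whether distinct sets $S_u, S_v$ share elements is immaterial, because the PRG guarantee is invoked once per set at error $\epsilon=1/\poly(n)$ and a union bound over the $n$ sets closes the argument; you never need a single formula that encodes all sets simultaneously. The size bound is handled in the same spirit by fooling each single-term formula $f_i(y)=\bigwedge_{j\in\mathrm{block}(i)}y_j$, summing expectations, and applying Markov. Once this lemma is established, composing it with the generic derandomization of Theorem~\ref{thm:generalderand} and then with Lemma~\ref{lem:spannerhit} yields the theorem exactly along the lines you sketched.
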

\end{mdframed}
In addition, we also show that if one settles for stretch of $O(k)$, then a hitting-set seed of $O(\log n)$ bits is sufficient for this purpose, yielding the following construction:
\begin{mdframed}[hidealllines=true,backgroundcolor=gray!25]
\vspace{-8pt}
\begin{theorem}\label{lem:detspanner2}
There exists a \emph{deterministic} algorithm in the congested clique model that constructs a $O(k)$-spanner with $O(k \cdot n^{1+1/k})$ edges within $O(\log k)$ rounds.
\vspace{-3pt}
\end{theorem}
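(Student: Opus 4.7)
The plan is to derandomize the $O(\log k)$-round randomized construction of Theorem \ref{lem:randspanner}, paying a constant factor in the stretch but no factor in the round complexity. Because the randomized algorithm only invokes its hitting-set subroutine a constant number of times (the sparse-region handling is purely combinatorial, and the dense region simulates only two Baswana--Sen phases), it suffices to replace each hitting-set call by a deterministic one running in $O(1)$ congested-clique rounds.

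The crucial observation is that for an $O(k)$-stretch guarantee, as opposed to a tight $(2k-1)$-stretch one, the cluster radii are allowed to grow by a constant factor per phase, which in turn allows the hitting-set sampling to use only $O(1)$-wise (pairwise) independence. Standard second-moment arguments show that a pairwise-independent sample of appropriate size hits every prescribed subset of size $\Omega(n^{1/k})$ with constant probability, and by enlarging the target size by a constant factor we obtain a hitting-set distribution supported on seeds of length $O(\log n)$. This is in contrast to Theorem \ref{lem:detspanner}, where $O(\log n)$-wise independence is needed to drive the failure probability below $1/n$ and obtain the tighter $(2k-1)$-stretch, forcing the longer seeds of the Gopalan et al.\ PRG.

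With seed length $\ell = O(\log n)$, the number of candidate seeds is only $2^{\ell} = \poly(n)$. In the congested clique, a set of $\poly(n)$ seeds can be exhaustively searched in $O(1)$ rounds: evenly partition the seeds among the $n$ vertices so that each is ``owner'' of $\poly(n)/n$ seeds; each node $v$ broadcasts, for every seed $s$, a single bit indicating whether the predicate ``$v$'s prescribed set is hit by the sample induced by $s$'' holds; using Lenzen's routing, the $\poly(n)$ predicate bits are aggregated to the respective seed owners in $O(1)$ rounds, since the total load is $O(n^2)$ bits; the owners identify their locally-good seeds, and a designated leader picks the first globally-good seed and broadcasts it in $O(1)$ additional rounds. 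Substituting this derandomized hitting-set routine into the $O(1)$ randomized invocations of the algorithm behind Theorem \ref{lem:randspanner} yields the claimed $O(\log k)$-round deterministic $O(k)$-spanner of size $O(k \cdot n^{1+1/k})$.

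The main obstacle I anticipate is formalizing the reduction from $O(1)$-wise independence to a \emph{simultaneous} guarantee over all $n$ vertex predicates. Pairwise independence by itself only yields constant success probability per vertex, so one must either use a pessimistic estimator that is additive over vertices (so the objective is captured by a single aggregated score that can be minimized deterministically), or boost per-vertex success by concatenating a constant number of independent pairwise-independent seeds into a single $O(\log n)$-bit seed. Either route keeps the seed length at $O(\log n)$ and the predicate locally computable in zero rounds once the seed is fixed, which are precisely the two properties needed for the $O(1)$-round enumeration above to go through.
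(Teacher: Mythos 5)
There is a genuine gap, and it concerns the size of the hitting set you obtain from an $O(\log n)$-bit seed. Your claim that ``enlarging the target size by a constant factor'' lets pairwise independence hit all $n$ prescribed sets simultaneously is not correct: with $O(1)$-wise independence one can only drive the per-set failure probability below $1/n$ (as needed for a union bound, or equivalently for the conditional-expectations estimator $\sum_u X_u$ to have expectation below $1$) by oversampling \emph{polynomially}, not by a constant factor. Concretely, the paper's Lemma~\ref{lem:hittingrandsubopt} gives a hitting set of size $O(n^{17/16}/\sqrt{\Delta})$, which for $\Delta = n^{1/2-1/k}$ is roughly $n^{7/8}$ --- polynomially larger than the $\widetilde O(n^{1/2+1/k})$ that the two-phase dense-region algorithm of Lemma~\ref{lem:spannerhit} requires. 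If you plug such a hitting set into that algorithm unchanged, the final ``connect every pair of adjacent clusters'' step contributes $|Z_1|\cdot|Z_2| = n^{\Theta(1)} \gg n^{1+1/k}$ edges, so the spanner is far too large.

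This is exactly why the paper's proof of this theorem does not simply substitute a cheap hitting-set routine into the algorithm behind Theorem~\ref{lem:randspanner}. Instead it runs a multi-phase \emph{recursive contraction}: after building a $(k/2-1)$-depth clustering around the (large) hitting set $Z$, it contracts each cluster to a super-node and repeats the entire sparse/dense decomposition on the contracted graph, now with a fixed constant stretch parameter $k'=7$. After $O(1)$ such phases the contracted graph has $o(\sqrt n)$ nodes, at which point connecting all pairs is affordable; the stretch only degrades by a $7^{O(1)}$ factor (Claim~\ref{cl:dettreesepth}), which is absorbed into the $O(k)$ bound. Your proposal omits this recursion entirely, and the ``constant factor per phase'' intuition you state, while pointing in the right direction (the $O(k)$ stretch budget is indeed what makes the recursion affordable), does not by itself reconcile the polynomially oversized hitting set with the edge budget. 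Separately, your exhaustive seed enumeration also needs care: with a seed of $c\log n$ bits for some constant $c>1$, the total number of predicate bits is $n\cdot n^{c}$, exceeding the $O(n^2)$ that $O(1)$ rounds of Lenzen routing can move; the paper sidesteps this by fixing the seed $\log n$ bits at a time via conditional expectations (Theorem~\ref{thm:generalderand}), which costs $O(1)$ rounds for any $O(\log n)$-bit seed.
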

\end{mdframed}

A summary of our results\footnote{\cite{BaswanaS07} does not mention the congested clique model, but the best randomized solution in the congested clique is given by simulating \cite{BaswanaS07}.} are given in the Table \ref{table-results}. All 
results in 
the table are with respect to spanners with $\widetilde{O}(n^{1+1/k})$ edges 
for an unweighted $n$-vertex graph $G$.
\begin{flushleft}
	\centering
	\begin{tabular}{|c|c|c|c|}
		\hline
									& \textbf{Stretch} 	& 
									\textbf{\#Rounds}             		& 
									\textbf{Type}                           \\ 
									\hline
		{Baswana \& Sen \cite{BaswanaS07}}		& $2k-1$    & 
		$O(k)$                    	
		& \multirow{2}{*}{Randomized}    \\ 
		This Work 					& $2k-1$    & $O(\log k)$             	
		&                                \\ \hline
		{Censor-Hillel et al.~\cite{Censor-HillelPS16}}	& $2k-1$    & $O(k\log 
		n)$            	& 
		\multirow{3}{*}{Deterministic} \\ 
		This Work 					& $2k-1$    & $O(\log k + (\log \log 
		n)^3)$ 
		&                                \\ 
		This Work 					& $O(k)$ 	& $O(\log k)$                 	
		&                                \\ \hline
	\end{tabular}
	\label{table-results}
\end{flushleft}
In what follows we provide some technical background and then present the high level ideas of these construction. 

\paragraph{A brief exposition of Baswana-Sen \cite{baswana07}.}
The algorithm is based on constructing $k$ levels of clustering $\mathcal{C}_0,\ldots, \mathcal{C}_{k-1}$, where a
clustering $\mathcal{C}_i=\{C_{i,1},\ldots,\}$ consists of  vertex disjoint subsets which we call clusters. Every cluster $C \in \mathcal{C}_i$ has a special node that we call \emph{cluster center}. For each $C \in \mathcal{C}_i$, the spanner contains a depth-$i$ tree rooted at its center and spanning all cluster vertices. 
Starting with the trivial clustering $\mathcal{C}_0=\{\{v\}, v \in V\}$, in 
each phase $i$, the algorithm is given a clustering $\mathcal{C}_i$ and it 
computes a clustering $\mathcal{C}_{i+1}$ by sampling the cluster center of 
each cluster in $\mathcal{C}_{i-1}$ with probability $n^{-1/k}$. Vertices that 
are adjacent to the sampled clusters join them and the remaining vertices 
become unclustered. For the latter, the algorithm adds some of their edges to 
the spanner. This construction yields a $(2k-1)$ spanner with $O(kn^{1+1/k})$ 
edges in expectation.

It is easy to see that this algorithm can be simulated in the congested clique 
model using $O(k)$ rounds. 
As observed in \cite{roditty2005deterministic,grossmanparter17}, the only 
randomized step in Baswana-Sen is picking the cluster centers of the 
$(i+1)^{th}$ clustering. That is, given the $n^{1-i/k}$ cluster centers of 
$\mathcal{C}_{i}$, it is required to compute a subsample of $n^{1-(i+1)/k}$ 
clusters without having to add to many edges to the spanner (due to unclustered vertices). This is exactly the hitting-set problem where the neighboring clusters of each vertex 
are the sets to cover, and the universe is the set of centers in 
$\mathcal{C}_{i}$ (ideas along these lines also appear in 
\cite{roditty2005deterministic,GFDetSpanner18}). 

\paragraph{Our Approach.} 
In the following, we provide the high level description of our construction while omitting many careful details and technicalities. We note that some of these technicalities stems from the fact that we insist on achieving the (nearly) optimal spanners, as commonly done in this area. Settling for an $O(k)$-spanner with $\widetilde{O}(k n^{1+1/k})$ edges could considerably simplify the algorithm and its analysis.
The high-level idea is simple and it is based on dividing the graph $G$ into 
sparse edges and dense edges,  constructing a spanner for each of these 
subgraphs using two different techniques. This is based on the following 
intuition inspired by the Baswana-Sen algorithm.

In Baswana-Sen, the vertices that are clustered in level-$i$ of the clustering 
are morally vertices whose $i$-neighborhoods is sufficiently \emph{dense}, 
i.e., containing at least $n^{i/k}$ vertices. We then divide the vertices into 
\emph{dense} vertices $V_{dense}$ and \emph{sparse} vertices $V_{sparse}$, 
where $V_{dense}$ consists of vetices that have $\Omega(\sqrt{n})$ vertices in 
their $k/2$-ball, and $V_{sparse}$ consists of the remaining vertices. This 
induces a partition of $G$ edges into $E_{sparse}=(V_{sparse} \times V)\cap 
E(G)$ and $E_{dense}$ that contains the remaining $G$-edges, i.e., edges whose 
both endpoints are dense. 

\paragraph{Collecting Topology of Closed Neighborhood.}
One of the key-building blocks of our construction is an $O(\log k)$-round 
algorithm that computes for each vertex $u$ the subgrpah $G_{k/2}(u)$ induced 
on its closest $O(\sqrt{n})$ vertices within distance at most $k/2$ in $G$. 
Hence the algorithm computes the entire $k/2$-neighborhoods for the sparse 
vertices. For the sake of the following discussion, assume that the maximum 
degree in $G$ is $O(\sqrt{n})$. Our algorithm handles the general case as well. 
Intuitively, collecting the $k/2$-neighborhood can be done in $O(\log k)$ 
rounds if the graph is sufficiently \emph{sparse} by employing the graph 
exponentiation idea of \cite{LenzenW10}. In this approach, in each phase the 
radius of the collected neighborhood is doubled. Employing this technique in 
our setting gives raise to several issues. First, the input graph $G$ is not 
entirely sparse but rather 
consists of interleaving sparse and dense regions, i.e.,  the
$k/2$-neighborhood of a sparse vertex might contain dense vertices. For that 
purpose, in phase $i$ of our algorithm, each vertex (either sparse or dense) 
should obtain a subset of its closest $O(\sqrt{n})$ vertices in its $2^{i}$ 
neighborhood. Limiting the amount collected information is important for being 
able to route this information via Lenzen's algorithm \cite{lenzen2013route} 
in $O(1)$ rounds in each phase. 

Another technicality concerns the fact that the relation ``$u$ is 
in the $\sqrt{n}$ nearest vertices to $v$'' is not necessarily symmetric. This 
entitles a problem where a given vertex $u$ is ``close''\footnote{By 
\emph{close} we mean being among the $\sqrt{n}$ nearest vertices.} to many 
vertices $w$, and $u$ is not close to any of these vertices. In case where 
these $w$ vertices need to receive the information from $u$ regarding its 
closest neighbors (i.e., where some their close vertices are close to $u$), $u$ 
ends up sending too many messages in a single phase. To overcome this, we 
carefully set the growth of the radius of the collected neighborhood in the 
graph exponentiation algorithm. We let only vertices that are close to 
each other to exchange their topology information and show that this is sufficient for computing the $G_{k/2}(u)$ subgraphs.  This procedure is the basis 
for or constructions as explained 
next.

\paragraph{Handling the Sparse Region.}
The idea is to let every sparse vertex $u$ locally simulate a \local\ spanner algorithm on its subgraph $G_{k/2}(u)$. For that purpose, we show that the deterministic spanner algorithm of \cite{DerbelGPV08} which takes $k$ rounds in general, in fact requires only $k/2$ rounds when running by a sparse vertex $u$. This implies that the subgraph $G_{k/2}(u)$ contains all the information needed for $u$ to locally simulate the spanner algorithm. 
This seemingly harmless approach has a subtle defect.
Letting only the sparse vertices locally simulate a spanner algorithm might 
lead to a case where a certain edge $(u,v)$ is not added by a sparse vertex due 
to a decision made by a dense vertex $w$ in the local simulation $u$ in 
$G_{k/2}(u)$. Since $w$ is a dense vertex it did not run the algorithm locally 
and hence is not aware of adding these edges. To overcome that, the sparse 
vertices notify the dense vertices about their edges added in their local 
simulations. We show how to do it in $O(1)$ rounds. 

\paragraph{Handling the Dense Region.}
In the following, we settle for stretch of $(2k+1)$ for ease of description. 
By applying the topology collecting procedure, every dense vertex $v$ obtains a set 
$N_{k/2}(v)$ consisting of its closest $\Theta(\sqrt{n})$ vertices within distance $k/2$. 
The main benefit in computing these $N_{k/2}(v)$ sets, is that it allows the dense vertices to ``skip'' over the first 
$k/2-1$ phases of Baswana-Sen, ready to apply the $(k/2)$ phase.

As 
described earlier, picking the centers of the clusters can be done by computing 
a hitting set for the set $\mathcal{S}=\{N_{k/2}(v), 
~\mid~ v \in V_{dense}\}$. It is easy to construct a random subset $Z \subseteq 
V$ of cardinality $O(n^{1/2})$ that hits all these sets and to cluster all the
dense vertices around this $Z$ set. This creates clusters of strong diameter $k$ (in the spanner) that cover all the dense vertices. The final step connects each pair of adjacent clusters by adding to the spanner a single edge between each such pair, this adds $|Z|^2=O(n)$ edges to the spanner.

\paragraph{Hitting Sets with Short Seed.}
The description above used a randomized solution to the following hitting set problem: given $n$ subsets of vertices $S_1,\ldots,S_n$, each $|S_i|\geq \Delta$, find a small set $Z$ that intersects all $S_i$ sets.
A simple randomized 
solution is to choose each node $v$ to be in $Z$ with probability $p=O(\log n 
/\Delta)$. 
The standard approach for derandomization is by using distributions with 
limited independence. Indeed, for the randomized solution to hold, it is sufficient
to sample the elements from a $\log n$-wise distribution. However, sampling an 
element with probability $p=O(\log n 
/\Delta)$ requires roughly $\log n$ random 
bits, leading to a total seed length of $(\log^2 n)$, which is too large for our 
purposes.

Our key observation is that for any set $S_i$ the event that $S_i \cap Z \ne 
\emptyset$ can be expressed by a {\em read-once DNF formula}. Thus, in order to 
get a short seed it suffices to have a pseudoranom generator (PRG) that can 
``fool'' read-once DNFs. A PRG is a function that gets a short random seed 
and expands it to a long one which is indistinguishable from a random seed of the same length for such 
a formula. Luckily, such PRGs with seed length of $O(\log n \cdot (\log\log n)^3)$ exist due to Gopalan et 
al.\ \cite{GopalanMRTV12}, leading to deterministic hitting-set algorithm with $O((\log\log n)^3)$ 
rounds.

\paragraph{Graph Notations.}
For a vertex $v \in V(G)$, a subgraph $G'$ and an integer $\ell \in \{1,\ldots, n\}$, let $\Gamma_{\ell}(v,G')=\{ u ~\mid~ \dist(u,v,G')\leq \ell\}$. When $\ell=1$, we omit it and simply write $\Gamma(v,G')$, also when the subgraph $G'$ is clear from the context, we omit it and write $\Gamma_{\ell}(v)$. 
For a subset $V' \subseteq V$, let $G[V']$ be the 
induced subgraph of $G$ on $V'$.  Given a disjoint subset of vertices $C,C'$, let $E(C,C',G)=\{(u,v) \in E(G) ~\mid~ u \in C \mbox{~~and~~} v\in C\}$.
we say that $C$ and $C'$ are \emph{adjacent} if $E(C,C',G)\neq \emptyset$. Also, for $v \in V$, $E(v,C,G)=\{(u,v) \in E(G) ~\mid~ u \in C\}$. 
A vertex $u$ is \emph{incident} to a subset $C$, if $E(v,C,G)\neq \emptyset$.

\paragraph{Road-Map.} \Cref{sec:topology} presents algorithm $\NearestNeighbors$ to collect the topology of nearby vertices. At the end of this section, using this collected topology, the graph is partitioned into sparse and dense subgraphs. \Cref{sec:detsanner} describes the spanner construction for the sparse regime. \Cref{dense} considers the dense regime and is organized as follows. First, \Cref{sec:denseg} describes a deterministic construction spanner given an hitting-set algorithm as a black box. Then, \Cref{sec:hitting} fills in this missing piece and shows deterministic constructions of small hitting-sets via derandomization. Finally, \Cref{sec:det2} provides an alternative deterministic construction, with improved runtime but larger stretch.

\vspace{-15pt}
\section{Collecting Topology of Nearby Neighborhood}\label{sec:topology}
For simplicity of presentation, assume that $k$ is even,  for 
$k$ odd, we replace the term $(k/2-1)$ with $\lfloor k/2 \rfloor$. 
In addition, we assume $k\geq 6$. Note that 
randomized constructions with $O(k)$ rounds are known and hence one benefits 
from an $O(\log k)$ algorithm for a non-constant $k$. In the full version, we show 
the improved deterministic constructions for $k \in \{2,3,4,5\}$.
\vspace{-10pt}
\subsection{Computing Nearest Vertices in the $(k/2-1)$ Neighborhoods}\label{sec:nearsestn}
In this subsection, we present an algorithm that computes the $n^{1/2-1/k}$ nearest vertices with distance $k/2-1$ for every vertex $v$.  This provides the basis for the subsequent procedures presented later on. 
Unfortunately, computing the nearest vertices of each vertex might require 
many rounds when $\Delta=\omega(\sqrt{n})$.  In particular, using Lenzen's 
routing\footnote{Lenzen's routing can be viewed as a $O(1)$-round algorithm applied when each vertex $v$ is a target and a sender of $O(n)$ messages.}\cite{lenzen2013route}, in the congested clique model, the vertices can 
learn their 
$2$-neighborhoods in $O(1)$ rounds, when the maximum degree is bounded by 
$O(\sqrt{n})$. Consider a vertex $v$ that is incident to a heavy vertex $u$ (of 
degree at least $O(\sqrt{n})$). Clearly $v$ has $\Omega(n^{1/2-1/k})$ vertices 
at distance $2$, but it is not clear how $v$ can learn their identities. Although, $v$ is capable of receiving $O(n^{1/2-1/k})$ messages, the heavy neighbor $u$ might need 
to send 
$n^{1/2-1/k}$ messages to each of its neighbors, thus $\Omega(n^{3/2-1/k})$ 
messages in total. To avoid this, we compute the $n^{1/2-1/k}$ nearest 
vertices in a \emph{lighter} subgraph 
$G_{light}$ of $G$ with maximum degree $\sqrt{n}$. The neighbors of heavy vertices 
might not learn their $2$-neighborhood and would be handled slightly differently in \Cref{dense}.
\begin{definition}
A vertex $v$ is \emph{heavy} if $\deg(v,G)\geq \sqrt{n}$, the set of heavy vertices is denoted by $V_{heavy}$. Let $G_{light}=G[V\setminus V_{heavy}]$. 
\end{definition}

\begin{definition}
For each vertex $u \in V(G_{light})$ define $N_{k/2-1}(u)$ to be the set of $y(u)=\min\{n^{1/2-1/k}, |\Gamma_{k/2-1}(u,G_{light})|\}$ 
closest vertices at distance at 
most $(k/2-1)$ from $u$ (breaking ties based on IDs) in $G_{light}$. Define 
$T_{k/2-1}(u)$ to be the truncated BFS tree rooted at $u$ consisting of the 
$u-v$ shortest path in $G_{light}$, for every $v \in N_{k/2-1}(u)$. 
\end{definition}

\begin{lemma}\label{lem:topology}
There exists a deterministic algorithm $\NearestNeighbors$ that within $O(\log k)$ rounds, computes the truncated BFS tree $T_{k/2-1}(u)$ for each vertex $u \in V(G_{light})$. That is, after running Alg. $\NearestNeighbors$, each $u \in V(G_{light})$ knows the entire tree $T_{k/2-1}(u)$.
\end{lemma}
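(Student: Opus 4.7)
The plan is to build this with the graph-exponentiation paradigm of Lenzen--Wattenhofer, executing $O(\log k)$ phases in which every vertex doubles the radius of the $G_{light}$-neighborhood it knows while keeping the amount of stored information at $O(n^{1/2-1/k})$ identifiers. Concretely, for every $u \in V(G_{light})$ and every $i=0,1,\dots,\lceil \log(k/2-1)\rceil$, I maintain a set $N^{(i)}(u)$ consisting of the $y(u)$ closest vertices to $u$ within $G_{light}$-distance $2^i$ (ordered by distance, ties broken by identifier), together with the truncated shortest-path tree $T^{(i)}(u)$ realizing it. At phase $0$, each $u$ extracts $N^{(0)}(u)$ directly from its immediate neighbors in $G_{light}$; since $u$ is light, $|\Gamma(u,G_{light})|\le \sqrt{n}$, and no extra communication is needed. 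The inductive step is carried out by a single exchange: for every ordered pair $(u,w)$ that is \emph{mutually close}, meaning $w \in N^{(i)}(u)$ and $u \in N^{(i)}(w)$, vertex $w$ sends $T^{(i)}(w)$ to $u$; then $u$ concatenates paths through the received trees, reprunes to the nearest $y(u)$ vertices within distance $2^{i+1}$, and records $N^{(i+1)}(u),T^{(i+1)}(u)$.

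The mutual-closeness filter is exactly what makes the communication fit into $O(1)$ rounds of Lenzen's routing per phase: for every $w$, the number of partners in the exchange is bounded by $|N^{(i)}(w)|\le n^{1/2-1/k}$, and each transmitted tree has $O(n^{1/2-1/k})$ edges/identifiers, so the total send and receive load on every vertex is $O(n^{1-2/k})=o(n)$. Over $O(\log k)$ phases this yields the round bound stated in the lemma, and since $2^{\lceil\log(k/2-1)\rceil}\ge k/2-1$, the final $N^{(\cdot)}(u),T^{(\cdot)}(u)$ are the desired $N_{k/2-1}(u),T_{k/2-1}(u)$ provided correctness holds.

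Correctness is where I expect to put the bulk of the effort, since the worry is that the mutual-closeness filter might throw away exactly the exchange that $u$ needs in order to discover some $v \in N_{k/2-1}(u)$. I would prove by induction on $i$ that $(N^{(i)}(u),T^{(i)}(u))$ agrees with its definition, using two standing observations. First, if $v$ is among the $y(u)$ closest vertices to $u$ within distance $2^{i+1}$, then every internal vertex $w$ on a shortest $u$-$v$ path in $G_{light}$ is also among the $y(u)$ closest vertices to $u$ within that same radius, because strict distance ordering with ID tie-breaking forces every vertex strictly closer to $u$ than $v$ to outrank $v$. Second, a symmetric invocation of the same ordering property from $w$'s side shows that $u$ itself sits in $N^{(i)}(w)$ whenever the appropriate mid-path vertex $w$ is chosen so that $u$ is strictly closer to $w$ than the tail of the path that $w$ already knows about.

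The hard step, and the main obstacle, is calibrating the choice of the intermediate $w$ on the shortest $u$-$v$ path so that the two closeness conditions ($w\in N^{(i)}(u)$ and $u\in N^{(i)}(w)$) are \emph{simultaneously} satisfied at some phase $i\le \log(k/2-1)$. My plan is to take $w$ to be a mid-path vertex at distance $\approx 2^i$ from both $u$ and $v$ and argue that, by the inductive hypothesis, $w$ is already in $N^{(i)}(u)$, and that among the candidate close vertices of $w$ within radius $2^i$ the vertex $u$ itself must survive the pruning---otherwise $v$ would be displaced from the top $y$ nearest to $w$, contradicting that $v$ lies inside $u$'s top $y$ in the wider radius. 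Once this mid-path lemma is in place, the inductive step closes, and the theorem follows.
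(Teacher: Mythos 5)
Your high-level plan is the paper's: graph exponentiation in $G_{light}$ with a mutual-closeness filter so that Lenzen routing handles each phase in $O(1)$ rounds. But the correctness argument — the very step you flag as "the hard step and main obstacle" — is where your plan as stated does not close, and the fix is not a small calibration.

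Two concrete problems. First, the midpoint $w$ you choose at distance $\approx 2^i$ from $u$ need not satisfy $u\in N^{(i)}(w)$, and your suggested contradiction does not hold: if $u$ is pruned from $w$'s top-$y(w)$ list, there is no implication that $v$ is pruned as well. The pruned set of $w$ consists of vertices that may be clustered entirely on $w$'s side of the path or scattered arbitrarily, and none of them need compete with $v$ for a slot in $u$'s top-$y(u)$ list at radius $2^{i+1}$; one can easily build a dense blob around $w$ that evicts $u$ from $N^{(i)}(w)$ while $v$ remains comfortably in $N_{k/2-1}(u)$. Second, even if $u\in N^{(i)}(w)$ held, you would also need $v \in N^{(i)}(w)$ (so that the tree $T^{(i)}(w)$ that $w$ ships to $u$ actually contains $v$ and its shortest path), and you never argue this; when $w$ is "$\gamma(i)$-dense" its list is truncated to $y(w)$ entries and $v$ need not be among them.

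The paper's proof avoids both issues by not using a midpoint. It walks along the $\gamma(i)$-length prefix of the shortest $u$-$w$ path and locates the \emph{first} $\gamma(i)$-dense vertex $z$. If $u\in\widehat{N}_i(z)$, the exchange with $z$ suffices. Otherwise, the asymmetry "$w\in\widehat{N}_i(z)$ but $u\notin\widehat{N}_i(z)$" forces $\dist(z,w)\le\dist(u,z)$, which together with the total path length $\le 2\gamma(i)-1$ gives $\dist(z,w)\le\gamma(i)-1$. Then the predecessor $z'$ of $z$ is $\gamma(i)$-sparse (so its list is a full ball, not truncated), contains $u$ by sparsity, and contains $w$ because $\dist(z',w)\le\gamma(i)$. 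Note that the radius recurrence $\gamma(i+1)=2\gamma(i)-1$ rather than pure doubling is load-bearing here: with your $2^{i+1}$ you only get $\dist(z,w)\le\gamma(i)$, and then $\dist(z',w)\le\gamma(i)+1$ overshoots the sparse predecessor's radius. The paper also starts from $\gamma(1)=2$ (collecting the $2$-neighborhood via Lenzen at the outset) rather than from immediate neighbors, which this recurrence needs to make progress. So while your skeleton matches the paper, the intermediate-vertex argument you sketch is wrong in a way that needs the first-dense-vertex case split and the $2\gamma-1$ growth, not a rescue of the midpoint idea.
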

%
\vspace{-5pt}
\paragraph{Algorithm $\NearestNeighbors$.}
For every integer $j\geq 0$, we say that a vertex $u$ is $j$-sparse if 
$|\Gamma_{j}(u,G_{light})|\leq n^{1/2-1/k}$, otherwise we say it is $j$-dense. The algorithm starts by 
having each non-heavy vertex compute $\Gamma_2(u,G_{light})$ in $O(1)$ rounds using Lenzen's algorithm. In 
each phase $i$, vertex $u$ collects information on vertices in its 
$\gamma(i+1)$-ball in $G_{light}$, where:\\
$$\gamma(1)=2, \mbox{~and~} \gamma(i+1)=\min\{2\gamma(i)-1,k/2\}, \mbox{~for every~} i \in \{1,\ldots, \lceil \log(k/2) \rceil\}.$$
At phase $i \in \{1,\ldots, \lceil \log(k/2) \rceil\}$ the 
algorithm maintains the invariant that a vertex $u$ holds a partial BFS tree 
$\widehat{T}_i(u)$ in $G_{light}$ consisting of the vertices 
$\widehat{N}_i(u):=V(\widehat{T}_i(u))$, such that: 
\begin{description}
\item{(I1)} For an $\gamma(i)$-sparse vertex $u$, $\widehat{N}_i(u)=\Gamma_{\gamma(i)}(u)$.
\item{(I2)} For an $\gamma(i)$-dense vertex $u$, $\widehat{N}_i(u)$ consists of the closest $n^{1/2-1/k}$ vertices to $u$ in $G_{light}$.
\end{description}
Note that in order to maintain the invariant in phase $(i+1)$, it is only required that in phase $i$, the $\gamma(i)$-sparse vertices would collect the relevant information, as for the $\gamma(i)$-dense vertices, it already holds that $\widehat{N}_{i+1}(u)=\widehat{N}_i(u)$. 
In phase $i$, each vertex $v$ (regardless of being sparse or dense) sends its partial BFS tree $\widehat{T}_i(v)$ to each vertex $u$ only if (1) $u \in \widehat{N}_i(v)$ and (2) $v \in \widehat{N}_i(u)$. This condition can be easily checked in a single round, as every vertex $u$ can send a message to all the vertices in its set $\widehat{N}_i(u)$.
Let $\widehat{N}'_{i+1}(u)=\bigcup_{v \in \widehat{N}_i(u) ~\mid~ u \in \widehat{N}_i(v)}\widehat{N}_{i}(v)$ be the 
subset of all received $\widehat{N}_i$ sets at vertex $u$.  It then uses the distances to 
$\widehat{N}_i(u)$, and the received distances to the vertices in the $\widehat{N}_i$ sets, to 
compute the shortest-path distance to each $w \in \widehat{N}_i(v)$ . As a result it computes 
the partial tree $\widehat{T}_{i+1}(u)$. The subset $\widehat{N}_{i+1}(u) \subseteq \widehat{N}'_{i+1}(u)$ 
consists of the (at most $n^{1/2-1/k}$) vertices within distance $\gamma(i+1)$ 
from $u$. This completes the description of phase $i$. We next analyze the 
algorithm and show that each phase can be implemented in $O(1)$ rounds and that 
the invariant on the $\widehat{T}_i(u)$ trees is maintained. 

\paragraph{Analysis.}
We first show that phase $i$ can be implemented in $O(1)$ rounds.
Note that by definition, $|\widehat{N}_i(u)|\leq \sqrt{n}$ for every $u$, and every $i\geq 1$.
Hence, by the condition of phase $i$, each vertex sends $O(n)$ messages and 
receives $O(n)$ messages, which can be done in $O(1)$ rounds, using Lenzen's 
routing algorithm \cite{lenzen2013route}.

We show that the invariant holds, by induction on $i$. Since all vertices first 
collected their second neighborhood, the invariant holds\footnote{This is the 
reason why we consider only $G_{light}$, as otherwise $\gamma(1)=0$ and we would not 
have any progress.} for $i=1$. Assume it holds up to the beginning of phase 
$i$, and we now show that it holds in the beginning of phase $i+1$.
If $u$ is $\gamma(i)$-dense, then $u$ should not collect any further information in phase $i$ and the assertion holds trivially. 

Consider an $\gamma(i)$-sparse vertex $u$ and let $N_{\gamma(i+1)}(u)$ be the target set of the $n^{1/2-1/k}$ closest vertices at distance $\gamma(i+1)$ from $u$.
We will fix $w \in N_{\gamma(i+1)}(u)$, and show that $w \in \widehat{N}_{i+1}(u)$ and in addition, $u$ has computed the shortest path to $w$ in $G_{light}$. Let $P$ be $u$-$w$ shortest path in $G_{light}$. If all vertices $z$ on the $\gamma(i)$-length prefix of $P$ are $\gamma(i)$-sparse, then the claim holds as $z \in \widehat{N}_i(u)$, $u \in \widehat{N}_i(z)$, and $w \in \widehat{N}_i(z')$ where $z'$ in the last vertex on the $\gamma(i)$-length prefix of $P$. Hence, by the induction assumption for the $\widehat{N}_i$ sets, $u$ can compute in phase $i$ its shortest-path to $w$. 

We next consider the remaining case where not all the vertices on the $\gamma(i)$-length path are sparse. Let $z \in \widehat{N}_i(u)$ be the first $\gamma(i)$-dense vertex (closest to $u$) on the $\gamma(i)$-length prefix of $P$.
Observe that $w \in \widehat{N}_i(z)$. Otherwise, $\widehat{N}_i(z)$ contains $n^{1/2-1/k}$ vertices that are closer to $z$ than $w$, which implies that these vertices are also closer to $u$ than $w$, and hence $w$ should not be in $N_{\gamma(i+1)}(u)$ (as it is not among the closest $n^{1/2-1/k}$ vertices to $u$), leading to contradiction. Thus, if also $u \in \widehat{N}_i(z)$, then $z$ sends to $u$ in phase $i$ its shortest-path to $w$. By the induction assumption for the $\widehat{N}_i(u),\widehat{N}_i(z)$ sets, we have that $u$ has the entire shortest-path to $w$.
It remains to consider the case where the first $\gamma(i)$-dense vertex on $P$, $z$, does not contain $u$ in its $\widehat{N}_i(z)$ set, hence it did not send its information on $w$ to $u$ in phase $i$. 
Denote $x=\dist(u,z,G_{light})$ and $y= \dist(z,w,G_{light})$, thus $x+y=|P|\leq 2\gamma(i)-1$.
Since $w \in \widehat{N}_i(z)$ but $u \notin \widehat{N}_i(z)$, we have that $y \leq x$ and $2y \leq |P|$, which implies that $y \leq \gamma(i)-1$. Let $z'$ be the vertex preceding $z$ on the $P$ path, hence $z'$ also appear on the $\gamma(i)$-length prefix of $P$ and $z' \in N_i(u)$. By definition, $z'$ is $\gamma(i)$-sparse and it also holds that $u \in \widehat{N}_i(z')$. Since $\dist(z',w,G_{light})=y+1\leq \gamma(i)$, it holds that $w \in \widehat{N}_i(z')$. Thus, $u$ can compute the $u$-$w$ shortest-path using the $z'$-$w$ shortest-path it has received from $z'$. For an illustration, see \Cref{fig:nearest}. 

\begin{figure}[t]
  \begin{minipage}[c]{0.38\textwidth}
    \caption{\label{fig:nearest} Shown is a path $P$ between $u$ and $w$ where $z$ is the first dense vertex on the $\gamma(i)$-length prefix of $P$. If $u \notin \widehat{N}_i(z)$ then $u,w \in \widehat{N}_i(z')$.}
  \end{minipage}\hfill
    \begin{minipage}[c]{0.61\textwidth}\vspace{-10mm}
\includegraphics[scale=0.45]{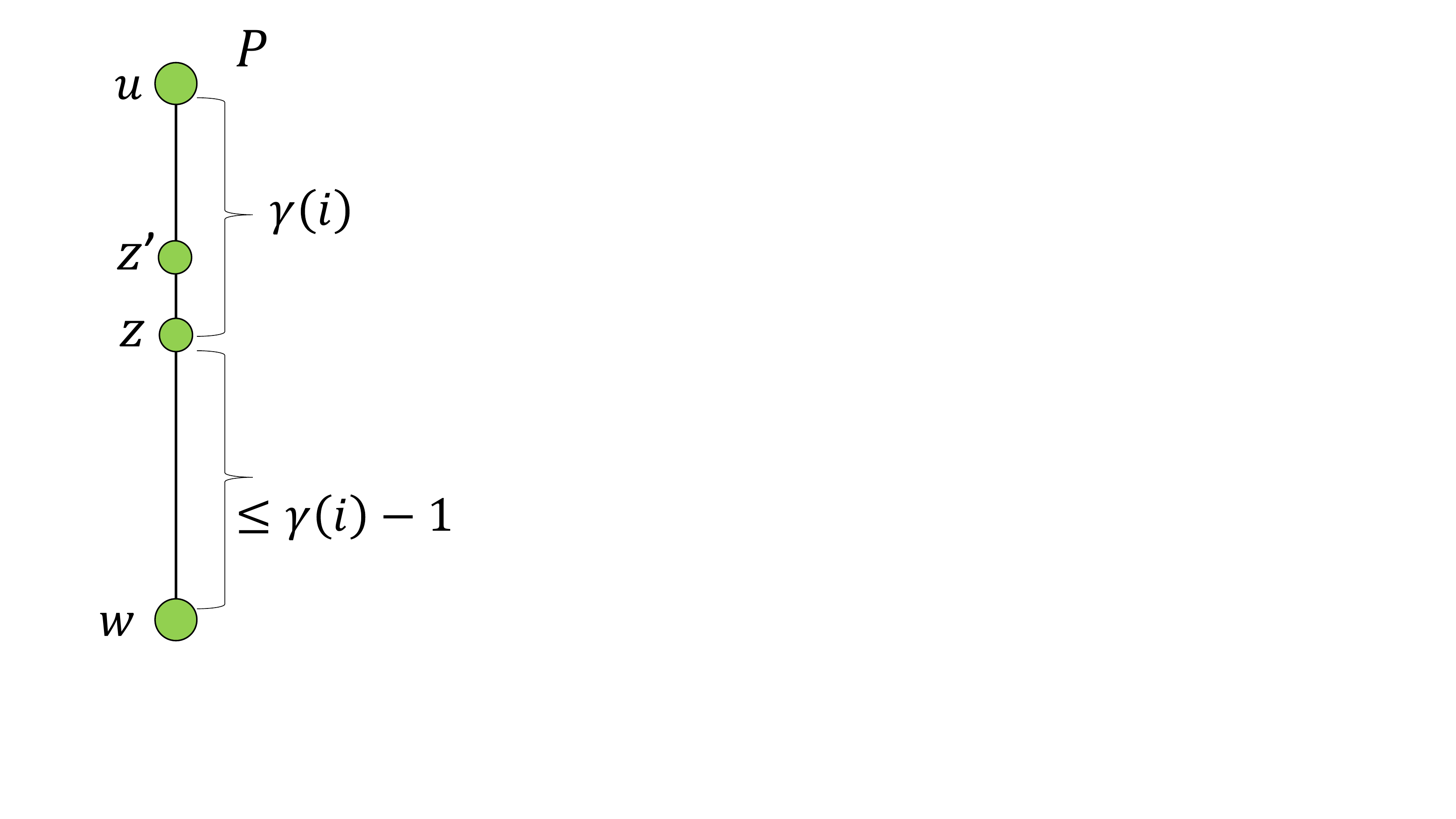}
  \end{minipage}
\vspace{-10mm}
\end{figure}

\subsection{Dividing $G$ into Sparse and Dense Regions}
Thanks for Alg. $\NearestNeighbors$ every non-heavy vertex $v$ computes the sets $N_{k/2-1}(v)$ and the corresponding tree $T_{k/2-1}(v)$. The vertices are next divided into dense vertices $V_{dense}$ and sparse vertices $V_{sparse}$. Morally, the dense vertices are those that have at least $n^{1/2-1/k}$ vertices at distance at most $k/2-1$ in $G$. Since the subsets of nearest neighbors are computed in $G_{light}$ rather than in $G$, this vertex division is more delicate. 
\begin{definition}
A vertex $v$ is \emph{dense} if either (1) it is heavy, (2) a neighbor of a heavy vertex or (3) $|\Gamma_{k/2-1}(v,G_{light})|> n^{1/2-1/k}$. Otherwise, a vertex is \emph{sparse}. Let $V_{dense}, V_{sparse}$ be the dense (resp., sparse) vertices in $V$.
%
%
\end{definition}
\begin{observation}
For $k\geq 6$, for every dense vertex $v$ it holds that $|\Gamma_{k/2-1}(v,G)|\geq n^{1/2-1/k}$.
\end{observation}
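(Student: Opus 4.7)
The plan is to prove the observation by a simple case analysis matching the three conditions in the definition of a dense vertex. In each case I want to exhibit $n^{1/2-1/k}$ vertices within distance $k/2-1$ from $v$ in $G$. Throughout I will use the elementary monotonicity $\Gamma_{\ell}(v,G_{light}) \subseteq \Gamma_{\ell}(v,G)$, which holds because $G_{light}$ is a subgraph of $G$, and $n^{1/2-1/k} \le \sqrt{n}$.

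First I dispose of case (3). If $|\Gamma_{k/2-1}(v,G_{light})| > n^{1/2-1/k}$, then by monotonicity $|\Gamma_{k/2-1}(v,G)| \ge |\Gamma_{k/2-1}(v,G_{light})| > n^{1/2-1/k}$, so the bound holds immediately with no constraint on $k$. Next, for case (1), suppose $v$ is heavy, i.e.\ $\deg(v,G) \ge \sqrt{n}$. Then $v$ has at least $\sqrt{n}$ vertices at distance exactly $1$, so $|\Gamma_{k/2-1}(v,G)| \ge |\Gamma_1(v,G)| \ge \sqrt{n} \ge n^{1/2-1/k}$, provided $k/2-1 \ge 1$, which holds for $k \ge 4$ and in particular for $k \ge 6$.

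The interesting case is (2), where $v$ is a neighbor of some heavy vertex $u$. Since $\deg(u,G)\ge \sqrt{n}$, the set $\Gamma_1(u,G)$ contains at least $\sqrt{n}$ vertices, and every such vertex lies at distance at most $2$ from $v$ (via $u$). Therefore $|\Gamma_2(v,G)| \ge \sqrt{n} \ge n^{1/2-1/k}$, and so $|\Gamma_{k/2-1}(v,G)| \ge n^{1/2-1/k}$ as long as $k/2-1 \ge 2$, i.e.\ $k \ge 6$. This is precisely where the hypothesis $k \ge 6$ is used; it is the main (and only) obstacle, and it is the whole reason the statement restricts to $k \ge 6$ rather than $k \ge 4$.

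Combining the three cases covers all dense vertices and completes the proof. No inductive or global argument is needed; the claim follows purely from the definitions of heavy/dense and from subgraph monotonicity of balls.
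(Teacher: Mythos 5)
Your proof is correct and takes essentially the same approach as the paper: case (3) is subgraph monotonicity of balls, and cases (1)--(2) reduce to observing that a heavy vertex's $\sqrt{n}$ neighbors lie within distance $2 \le k/2-1$ of $v$. Your version is a bit more explicit in separating cases (1) and (2) and in pinpointing that $k\ge 6$ is needed only for case (2), but the underlying argument is the same.
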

\begin{proof}
If a vertex $v$ is incident to a heavy vertex, then it has at least $\sqrt{n}$ vertices at distance $2\leq k/2-1$. 
Since $G_{light} \subseteq G$, a non-sparse vertex $v$ it holds that $|\Gamma_{k/2-1}(v,G)|\geq |\Gamma_{k/2-1}(v,G_{light})|\geq n^{1/2-1/k}$.
\end{proof}
The edges of $G$ are partitioned into:
$$
E_{dense}=\left((V_{dense} \times V_{dense})\cap E(G)\right), ~E_{sparse}=(V_{sparse} \times V_{dense})\cap E(G)$$
Since all the neighbors of heavy vertices are dense, it also holds that $E_{sparse}=(V_{sparse} \times (V\setminus V_{heavy}))\cap E(G_{light})$.%

%
\paragraph{Overview of the Spanner Constructions.}
The algorithm contains two subprocedures, the first takes care of the sparse 
edge-set by constructing a spanner $H_{sparse} \subseteq G_{sparse}$ and the 
second takes care of the dense edge-set by constructing $H_{dense}\subseteq G$. 
Specifically, these spanners will satisfy that for every $e=(u,v) \in G_i$, 
$\dist(u,v,H_{i})\leq 2k-1$ for $i \in \{sparse,dense\}$. We note that the 
spanner $H_{dense} \subseteq G$ rather than being contained in $G_{dense}$. The 
reason is that the spanner $H_{dense}$ might contain edges incident to sparse 
vertices as will be shown later. 
The computation of the spanner $H_{sparse}$ for the sparse edges, $E_{sparse}$, 
is done by letting each sparse vertex locally simulating a local spanner algorithm. The computation of $H_{dense}$ is based on applying two levels of clustering as in Baswana-Sen. The selection of the cluster centers will be made by applying an hitting-set algorithm.
%
%
\section{Handling the Sparse Subgraph}\label{sec:detsanner}
In the section, we construct the spanner $H_{sparse}$ that will provide a bounded stretch for the sparse edges. As we will see, the topology collected by applying Alg. $\NearestNeighbors$ allows every sparse vertex to \emph{locally} simulate a deterministic spanner algorithm in its collected subgraph, and deciding which of its edges to add to the spanner based on this local view.

Recall that for every sparse vertex $v$ it holds that 
$|\Gamma_{k/2-1}(v,G_{light})|\leq n^{1/2-1/k}$ where $G_{light}=G[V \setminus 
V_{heavy}]$ and that $E_{sparse}=(V_{sparse}\times V_{dense})\cap E(G)$. Let 
$G_{sparse}(u)=G_{sparse}[\Gamma_{k/2-1}(u,G)]$.  By applying Alg. $\NearestNeighbors$, and letting sparse vertices sends their edges to the sparse vertices in their $(k/2-1)$ neighborhoods in $G_{light}$, we have:
\begin{claim}
There exists a $O(\log k)$-round deterministic algorithm, that computes for each sparse vertex $v$  its subgraph $G_{sparse}(v)$.
\end{claim}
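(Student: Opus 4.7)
The plan is to combine Algorithm $\NearestNeighbors$ of \Cref{lem:topology} with a single round of Lenzen's routing. First, I would invoke $\NearestNeighbors$, costing $O(\log k)$ rounds, after which every non-heavy vertex $u$ holds its truncated BFS tree $T_{k/2-1}(u)$; for a sparse $u$ this captures the entire ball $\Gamma_{k/2-1}(u,G_{light})$, whose size is at most $n^{1/2-1/k}$ by the definition of sparse.

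Second, in one additional round, every sparse vertex $v$ would transmit its list of incident $G_{sparse}$-edges to each sparse vertex $u$ satisfying $v\in \Gamma_{k/2-1}(u,G_{light})$. Symmetry of distance in $G_{light}$ lets $v$ identify these recipients as the sparse vertices inside $\Gamma_{k/2-1}(v,G_{light})$, which $v$ reads off its own $T_{k/2-1}(v)$. I would then verify that this exchange fits one round of Lenzen's routing: a sparse $v$ has $\deg(v,G)\le \sqrt{n}$ (since non-heavy), and at most $|\Gamma_{k/2-1}(v,G_{light})|\le n^{1/2-1/k}$ sparse recipients, so $v$ emits $O(n^{1-1/k})=O(n)$ messages; symmetrically, each sparse $u$ ingests $O(n^{1-1/k})$ messages. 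Hence one Lenzen round suffices, and the overall complexity is $O(\log k)$.

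For correctness, the key structural fact I would invoke is that $G_{sparse}\subseteq G_{light}$: every edge of $G_{sparse}$ is incident to a sparse vertex, and a sparse vertex has no heavy neighbor by definition, so no $G_{sparse}$-edge has a heavy endpoint. Consequently, for any edge $e=(v,w)\in G_{sparse}(u)$ with sparse endpoint $v$, one argues $v\in\Gamma_{k/2-1}(u,G_{light})$, so $v$ did report $e$ to $u$, and $u$ correctly assembles $G_{sparse}(u)$.

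The step I expect to be the main obstacle is bridging the $G$-ball in the definition $G_{sparse}(u)=G_{sparse}[\Gamma_{k/2-1}(u,G)]$ with the $G_{light}$-ball that the collection procedure actually supplies. A sparse $u$'s $G$-ball can in principle strictly contain its $G_{light}$-ball via shortcuts through dense neighbors of heavy vertices, so some case analysis is needed to confirm that every sparse endpoint of a $G_{sparse}$-edge within $\Gamma_{k/2-1}(u,G)$ still lies in $\Gamma_{k/2-1}(u,G_{light})$. Resolving this case analysis, leveraging $G_{sparse}\subseteq G_{light}$ together with the sparsity of both $u$ and $v$, is the technical crux of the correctness proof.
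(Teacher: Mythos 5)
Your algorithm is the paper's: run $\NearestNeighbors$ in $O(\log k)$ rounds, then one Lenzen round in which each sparse vertex ships its incident $G_{sparse}$-edges to the sparse vertices in its $(k/2-1)$-ball, with $O(n^{1-1/k})$ messages sent and received per vertex. Regarding the obstacle you flag (the $G$-ball versus $G_{light}$-ball mismatch in the definition of $G_{sparse}(u)$): you are right that it exists, but do not try to argue that detours through heavy vertices cannot shorten distances, since that can fail. The paper's own proof of this claim silently works with $\Gamma_{k/2-1}(u,G_{sparse})$ instead of $\Gamma_{k/2-1}(u,G)$, and this weaker statement is what the downstream argument actually uses: \Cref{cl:localsim} and \Cref{cl:localsimpath} only require $u$ to know the $G_{sparse}$-edges incident to $\Gamma_{k/2-2}(u,G_{sparse})$, all of which lie in $G_{sparse}[\Gamma_{k/2-1}(u,G_{sparse})]$. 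Since $G_{sparse}\subseteq G_{light}$, that ball is contained in the $G_{light}$-ball your exchange serves, so correctness follows; just state and prove the claim for $G_{sparse}[\Gamma_{k/2-1}(u,G_{sparse})]$ rather than for the $G$-ball.

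There is one genuine small gap to close. You assert that $v$ reads off the sparse recipients from $T_{k/2-1}(v)$, but that tree records only distances and parent pointers; whether a nearby vertex $w$ is itself sparse depends on $w$'s own ball size and on whether $w$ neighbors a heavy vertex, neither of which $v$ has learned. Restricting recipients to sparse vertices is not cosmetic here: a \emph{dense} vertex may lie in the $G_{light}$-balls of far more than $n^{1/2-1/k}$ sparse senders, so having $v$ send to all of $\Gamma_{k/2-1}(v,G_{light})$ would overload the receive budget of such a vertex. The paper fixes this with one additional round in which every sparse vertex broadcasts a single bit announcing that it is sparse; insert the same step before the Lenzen exchange.
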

\begin{proof}
By running Alg. $\NearestNeighbors$, every sparse vertex computes all the vertices in $\Gamma_{k/2-1}(v,G_{light})$. Note that all the neighbors of a sparse vertex are non-heavy and thus $G_{sparse} \subseteq G_{light}$.
Next, we let every sparse vertex $u$ broadcasts that it is sparse. Every sparse vertex $u$ sends its edges in $G_{sparse}$ to every sparse vertex $v \in \Gamma_{k/2-1}(u,G_{sparse})$. Since every sparse vertex sends $O(n)$ messages and receives $O(n)$ messages, this can be done in $O(1)$ many rounds using Lenzen's routing algorithm. Consider an edge $(x,y) \in G_{sparse}(u)$ for a sparse vertex $v$. By definition, both $x,y \in \Gamma_{k/2-1}(u,G_{sparse})$ and thus at least one endpoint is sparse, say $x$. By symmetry, it holds that $u \in \Gamma_{k/2-1}(x,G_{sparse})$ and thus $u$ has received all the edges incident to $x$. The claim follows.
\end{proof}
Our algorithm is based on an adaptation of the local algorithm of 
\cite{DerbelGPV08}, which is shown to satisfy the following in our context. 
\begin{lemma}\label{lem:alglocalsim}
There exists a deterministic algorithm $\LocalSpanner$ that constructs a $(k-3)$ spanner in the \local\ model, such that every sparse vertex $u$ decides about its spanner edges within $k/2-1$ rounds. In particular,
$u$ can simulate Alg. $\LocalSpanner$ locally on $G_{sparse}$ and for every edge $(u,z)$ not added to the spanner $H_{sparse}$, there is a path of length at most $(k-3)$ in $G_{sparse}(u) \cap H_{sparse}$.
\end{lemma}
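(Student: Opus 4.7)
The plan is to take for $\LocalSpanner$ the deterministic \local\ spanner algorithm of Derbel--Gavoille--Peleg--Viennot~\cite{DerbelGPV08}, instantiated with parameter $k':=k/2-1$. That algorithm produces a $(2k'-1)=(k-3)$-spanner in exactly $k'$ rounds, and each vertex's decision about its incident spanner edges is a deterministic function of the topology and identifiers of its $k'$-ball in the input graph. I would run this on $G_{sparse}$ to define $H_{sparse}$.

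For the local simulation claim, note that $G_{sparse}\subseteq G$, so $\Gamma_{k'}(u,G_{sparse})\subseteq \Gamma_{k'}(u,G)=\Gamma_{k/2-1}(u,G)$. Hence the induced subgraph $G_{sparse}(u)=G_{sparse}[\Gamma_{k/2-1}(u,G)]$ already contains $u$'s $k'$-ball in $G_{sparse}$ together with all vertex identifiers, and a sparse vertex $u$ can therefore replay the $k'$ DGPV rounds from its own viewpoint and emit the correct verdict for every edge incident to it.

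For the stretch, recall that DGPV with parameter $k'$ maintains a hierarchy of clusters at levels $0,\ldots,k'-1$, each level-$i$ cluster having radius at most $i$ in its BFS tree, and adds to $H_{sparse}$ both these trees and, at every level, one inter-cluster edge per pair of adjacent clusters. For a non-spanner edge $(u,z)\in G_{sparse}$, the standard argument exhibits a replacement path $u=a_0,\ldots,a_j=c,\,c'=b_j,\ldots,b_0=z$ of length $2j+1\leq 2k'-1=k-3$, where $c$ is $u$'s level-$j$ center, $c'$ is $z$'s level-$j$ center for some $j\leq k'-1$, the two halves live in the corresponding BFS trees of $H_{sparse}$, and $(c,c')$ is added as an inter-cluster edge. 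For any vertex $w$ at position $i$ on this path, its distance from $u$ in $G$ is at most $i$, while since $(u,z)\in E(G)$ its distance from $u$ is also at most $(2j+1-i)+1$. Consequently,
\[
\dist(u,w,G)\;\leq\;\min\!\left\{\,i,\ 2j+2-i\,\right\}\;\leq\;j+1\;\leq\;k'\,,
\]
so the entire path lies in $\Gamma_{k/2-1}(u,G)$ and hence in $G_{sparse}(u)\cap H_{sparse}$, as required.

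The main obstacle is not the well-known $2k'-1$ stretch bound but the locality of the replacement path: I would need to revisit DGPV to confirm that the path factors through centers of a common level $j$ and uses an inter-cluster edge added at that level, so that the triangle-inequality argument above applies. A secondary technicality is the case in which $u$ or $z$ happens to be unclustered at some level, in which case DGPV already spans each incident edge to neighboring clusters and the replacement path only shortens.
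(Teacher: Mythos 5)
Your plan instantiates DGPV with the \emph{shrunken} parameter $k' = k/2-1$. This indeed gives stretch $2k'-1 = k-3$ and exactly $k'=k/2-1$ rounds, but the spanner it produces has $O(k'\cdot n^{1+1/k'}) = O(k\cdot n^{1+2/(k-2)})$ edges, not $O(k\cdot n^{1+1/k})$. The lemma statement does not spell out a size bound, but the size bound is essential to how the lemma is used: the text in Section~\ref{sec:detsanner} justifies the size of $H_{sparse}$ by saying we only add edges that $\LocalSpanner$ would have added ``when running by the entire graph,'' and the paper's headline results promise $\widetilde O(n^{1+1/k})$ edges. Your choice of parameter silently inflates the sparse part of the spanner by roughly an $n^{1/k}$ factor, which breaks the overall constructions.

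The paper takes a different route precisely to avoid this. It runs the DGPV-style algorithm with the \emph{original} density threshold $\sigma = n^{1/k}$ (hence the usual $O(n^{1+1/k})$-edge bound, since the size is controlled by $\sigma$ alone), and the whole content of the lemma is the separate observation (\Cref{cl:inactivefast}) that a \emph{sparse} vertex, because $|\Gamma_{k/2-1}(u,G_{light})|\le n^{1/2-1/k}$, cannot keep the invariant $|R_i(u)|\ge n^{i/k}$ past round $k/2-1$ and therefore becomes inactive early. Combining this early-termination claim with the per-round stretch bound of \Cref{cl:spasepath} (an edge discarded in round $i$ has a replacement path of length $2i-1$) gives stretch $k-3$ for exactly the edges incident to sparse vertices, while the global edge count remains tied to $n^{1/k}$. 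In short: the paper keeps $\sigma=n^{1/k}$ and proves the early termination of sparse vertices as a lemma; you instead force early termination by changing the parameter, which gives up the size guarantee. Your locality argument (that $G_{sparse}(u)$ suffices to replay $k/2-1$ rounds, and that the replacement path stays inside $\Gamma_{k/2-1}(u,G)$) is essentially the same as the paper's \Cref{cl:localsim} and \Cref{cl:localsimpath} and is fine as far as it goes, but it is built on the wrong instantiation.
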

A useful property 
of the algorithm\footnote{This algorithm works only for unweighted graphs and hence our deterministic algorithms are for unweighted graphs. Currently, there are no local deterministic algorithms for weighted graphs.} by Derbel et al.\ (Algorithm 1 in \cite{DerbelGPV08}) is that if a vertex $v$ did not terminate after $i$ rounds, 
then it must hold that $|\Gamma_i(v,G)|\geq n^{i/k}$. Thus in our context, every 
sparse vertex terminates after at most $k/2-1$ rounds\footnote{By definition we 
have that
$|\Gamma_{k/2-1}(u,G_{light})|\leq n^{1/2-1/k}$. Moreover, since $G_{sparse} 
\subseteq 
G_{light}$ it also holds that $|\Gamma_{k/2-1}(u,G_{sparse})|\leq 
n^{1/2-1/k}$.}. We also show that for simulating these $(k/2-1)$ rounds of Alg. 
$\LocalSpanner$ by $u$, it is sufficient for $u$ to know all the neighbors of 
its $(k/2-2)$ neighborhood in $G_{sparse}$ and these edges are
contained in $G_{sparse}(u)$. The analysis of Lemma \ref{lem:alglocalsim} is in \Cref{sec:localspanner}.

We next describe Alg. $\ConsSpannerSparseRegion$ that 
computes $H_{sparse}$. 
Every vertex $u$ computes $G_{sparse}(u)$ in $O(\log k)$ rounds and simulate Alg. $\LocalSpanner$ in that subgraph. Let $H_{sparse}(u)$ be the edges added to the spanner in the local simulation of Alg. $\LocalSpanner$ in $G_{sparse}(u)$. A sparse vertex $u$ sends to each sparse vertex $v \in \Gamma_{k/2-1}(u,G_{sparse})$, the set of all $v$-edges in $H_{sparse}(u)$. Hence, each sparse vertex sends $O(n)$ messages (at most $\sqrt{n}$-edges to each of its at most $\sqrt{n}$ vertices in $\Gamma_{k/2-1}(v,G_{sparse})$). In a symmetric manner, every vertex receives $O(n)$ messages and this step can be done in $O(1)$ rounds using Lenzen's algorithm. The final spanner is given by $H_{sparse}=\bigcup_{u \in V_{sparse}}H_{sparse}(u)$.
The stretch argument is immediate by the correctness of Alg. $\LocalSpanner$ and the fact that all the edges added to the spanner in the local simulations are indeed added to $H_{sparse}$. The size argument is also immediate since we only add edges that Alg. $\LocalSpanner$ would have added when running by the entire graph.

\begin{mdframed}[hidealllines=false]
Algorithm $\ConsSpannerSparseRegion$ (Code for a sparse vertex $u$)
\begin{enumerate}	
	\item Apply Alg. $\NearestNeighbors$ to compute $G_{sparse}(u)$ for each sparse vertex $u$.
	\item Locally simulate Alg. $\LocalSpanner$ in $G_{sparse}(u)$ and let $H_{sparse}(u)$ be the edges added to the spanner in $G_{sparse}(u)$.
	\item Send the edges of $H_{sparse}(u)$ to the corresponding sparse endpoints.
	\item Add the received edges to the spanner $H_{sparse}$.
\end{enumerate} 
\end{mdframed}
\vspace{-15pt}
\section{Handling the Dense Subgraph}\label{dense}
In this section, we present the construction of the spanner $H_{dense}$ satisfying that $\dist(u,v,H_{dense})\leq 2k-1$ for every $(u,v) \in E_{dense}$. Here we enjoy the fact the $(k/2-1)$ neighborhood of each dense vertex is large and hence there exists a small hitting that covers all these neighborhoods. 
The structure of our arguments is as follows. First, we describe a deterministic construction of $H_{dense}$ using an hitting-set algorithm as a black box. This would immediately imply a randomized spanner construction in $O(\log k)$-rounds.  Then in \Cref{sec:hitting}, we fill in this last missing piece and show deterministic constructions of hitting sets.

\paragraph{Constructing spanner for the dense subgraph via hitting sets.} 
Our goal is to cluster all dense vertices into small number of low-depth clusters. This translates into the following \emph{hitting-set} problem defined in \cite{bhattacharyya2012transitive,HittingSet,GFDetSpanner18}: 
Given a collection $\mathcal{S}=\{S(v) ~\mid~ v \in V'\}$ where each 
$|S(v)|\geq \Delta$ and $\bigcup_{ v \in V'}S(v) \subseteq V''$, compute a 
subset $Z \subseteq V''$ of cardinality $O(|V''|\log n/\Delta)$ that 
intersects (i.e., \emph{hits}) each subset $S \in \mathcal{S}$. A hitting-set of size $O(|V''|\log n/\Delta)$ is denoted as \emph{small} hitting-set.

We prove the next lemma by describing an the construction of the spanner $H_{dense}$ \emph{given} an algorithm $\cA$ that computes small hitting sets. In \Cref{sec:hitting}, we complement this lemma by describing several constructions of hitting sets.
\begin{lemma}\label{lem:spannerhit}
Let $G=(V,E)$ be an $n$-vertex graph, let $V',V'' \subset V$ and 
$\mathcal{S}=\{S_u \subset V : u \in V'\}$ be a set of subsets such that each 
node $u \in V'$ knows the set $S_u$, $|S_u| \ge \Delta$ for any $\Delta \in 
[1,n]$ and $\bigcup S_u \subseteq V''$. Let $\cA$ be a hitting set algorithm that constructs a hitting set $Z$ 
for $\mathcal{S}$ such that $|Z|=O(\log n|V''|/\Delta)$ in $r_{\cA}$ 
rounds. Then, there exists a \emph{deterministic} algorithm 
$\ConsSpannerDenseRegion$ for constructing $H_{dense}$ within $O(\log 
k+r_{\cA})$ rounds.
\end{lemma}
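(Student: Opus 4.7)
The plan is to build $H_{dense}$ in four stages after the topology-collection phase. Stage~1 sets up the hitting-set input: each dense vertex $v$ prepares a set $S_v$ of size at least $\Delta = n^{1/2-1/k}$, using $N_{k/2-1}(v)$ (from \Cref{lem:topology}) when $v$ is dense by condition~(3), or a $\ge\sqrt n$-subset of a heavy vertex's 1-neighborhood when $v$ is dense by conditions~(1) or~(2); the latter is assembled in $O(1)$ rounds via direct communication with the incident heavy vertex. Setting $V' = V_{dense}$ and $V'' = V$, Stage~2 invokes $\cA$ to produce a hitting set $Z$ of size $O(\log n \cdot |V''|/\Delta)$ in $r_\cA$ rounds, after which $Z$ is broadcast to every vertex in $O(1)$ rounds using the congested-clique bandwidth.

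Stage~3 builds the cluster BFS forest inside $H_{dense}$. Each dense vertex $v$ locally selects $c_v$ as the closest element of $Z\cap S_v$ (breaking ties by ID), which is nonempty because $Z$ hits $S_v$; $v$ also reads off $d_v := \dist(v, c_v)$ from its stored tree $T_{k/2-1}(v)$. I would verify that this $c_v$ coincides with $v$'s globally closest center in $Z$: any center outside $N_{k/2-1}(v)$ is at least as far as every vertex in $N_{k/2-1}(v)$, and $Z$ already hits that set, so a closer center cannot exist outside. Every vertex then broadcasts $(c_v, d_v)$ to its neighbors in a single round, and each non-center $v$ adopts as its parent any neighbor $u$ with $c_u = c_v$ and $d_u = d_v - 1$, adding $(v, u)$ to $H_{dense}$. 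Such a $u$ always exists: the next vertex on the stored $v$-to-$c_v$ shortest path works, and a standard contradiction argument using the ID-based tiebreak shows $c_u$ must equal $c_v$ (otherwise $v$ itself would have preferred $c_u$). Following these parent pointers guarantees $\dist(v, c_v, H_{dense}) \le k/2 - 1$ for every dense $v$.

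Stage~4 installs one bridge per pair of adjacent clusters. With the neighbor-exchange from Stage~3 already done, each vertex knows the cluster center of each of its neighbors; for every incident $E_{dense}$-edge crossing between distinct clusters, the vertex routes a single candidate to a deterministically chosen coordinator indexed by the pair $(\mathrm{ID}(c_u), \mathrm{ID}(c_v))$. Lenzen's routing handles this in $O(1)$ rounds, and the coordinator keeps one candidate. Stretch now follows: for $(u, v) \in E_{dense}$, if $c_u = c_v$ then $\dist(u, v, H_{dense}) \le 2(k/2-1) = k-2$ via the common tree, and if $c_u \ne c_v$ the bridge edge $(x, y)$ between their clusters yields the spanner path $u \to c_u \to x \to y \to c_v \to v$ of length at most $4(k/2-1) + 1 = 2k - 3 \le 2k-1$. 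Summing the per-stage costs gives $O(\log k) + r_\cA + O(1) = O(\log k + r_\cA)$ rounds.

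The main obstacle I anticipate is Stage~4: there may be up to $|Z|^2$ ordered cluster pairs and a single vertex may touch many clusters, so a naive enumeration of bridges threatens bandwidth. The congested-clique fix is to index each pair by its pair of center IDs, giving every candidate edge a unique destination; with each vertex's incident candidates spread over at most its degree, Lenzen's $O(1)$-round routing absorbs the entire load. A subtler point is ensuring the parent-pointer argument in Stage~3 genuinely lands at the correct center; this rests on the fact that $v$'s hitting-set input $S_v$ actually contains the closest vertices to $v$, which is why the interplay between $N_{k/2-1}(v)$ computed in $G_{light}$ and the heavy-vertex substitute must be set up consistently in Stage~1.
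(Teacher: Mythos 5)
There is a genuine gap, and it is exactly where you flagged concern but then dismissed it. In Stage~4 you bridge every pair of adjacent clusters in a \emph{single} clustering $\mathcal{C}_1$ with center set $Z$. With $V''=V$ and $\Delta=n^{1/2-1/k}$, the hitting-set guarantee gives $|Z|=O(\log n\cdot n/\Delta)=\widetilde{O}(n^{1/2+1/k})$, so bridging all pairs costs up to $|Z|^2=\widetilde{O}(n^{1+2/k})$ edges. That exceeds the target $\widetilde{O}(n^{1+1/k})$ by a multiplicative $n^{1/k}$ factor, and no amount of clever routing fixes it -- the issue is the edge count, not the bandwidth. Your ``congested-clique fix'' addresses only how to schedule the bridges in $O(1)$ rounds; it does not reduce how many of them you commit to the spanner. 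The lemma as used in the paper has to produce an $H_{dense}$ of size $\widetilde{O}(n^{1+1/k})$, so this blows the budget.

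The paper's $\ConsSpannerDenseRegion$ uses a \emph{two-level} clustering precisely to avoid this. After $\mathcal{C}_1$ centered at $Z_1$ (your $Z$), each dense vertex $v$ computes the set $N_{k/2}(v)$ of adjacent $\mathcal{C}_1$-centers. Vertices with $|N_{k/2}(v)|\le n^{1/k}\log n$ add one edge per adjacent cluster, contributing $O(n^{1+1/k}\log n)$ edges total. The remaining vertices each know $\ge n^{1/k}\log n$ adjacent centers, so a second hitting-set call with $\Delta=n^{1/k}\log n$ and $V''=Z_1$ returns $Z_2$ of size $O(|Z_1|\log n/\Delta)=\widetilde{O}(\sqrt n)$, giving a $(k/2)$-depth clustering $\mathcal{C}_2$ that covers them. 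Now bridging $\mathcal{C}_1\times\mathcal{C}_2$ costs $|Z_1|\cdot|Z_2|=\widetilde{O}(n^{1+1/k})$, within budget, while the stretch analysis goes through a $\mathcal{C}_1$-tree of depth $k/2-1$ on one side and a $\mathcal{C}_2$-tree of depth $k/2$ on the other, giving $2k-1$. Your single-level argument never reaches a small enough center set to afford pairwise bridging.

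A secondary point worth flagging: in Stage~3 you let only \emph{dense} vertices pick cluster centers and parent pointers. But a shortest $v$-to-$c_v$ path in $G_{light}$ may pass through sparse vertices, and unless those sparse vertices also compute their centers and add their parent edges, the cluster tree need not be connected inside $H_{dense}$. The paper handles this explicitly by having \emph{all} of $G_{light}$ (sparse and dense alike) compute $c_1(\cdot)$ and add the edge to the next hop whenever $Z_1\cap N_{k/2-1}(v)\neq\emptyset$, and it even remarks that this is crucial for connectivity. Your ``standard contradiction argument using the ID-based tiebreak'' is the right local consistency argument, but it only makes sense if the intermediate vertices actually participate.
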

The next definition is useful in our context.

\paragraph{$\ell$-depth Clustering.}
A \emph{cluster} is a subset of vertices and a clustering $\mathcal{C}=\{C_1,\ldots, C_\ell\}$ consists of vertex disjoint subsets. For a positive integer $\ell$, a clustering $\mathcal{C}$ is a $\ell$-depth clustering if for each cluster $C \in \mathcal{C}$, the graph $G$ contains a tree of depth at most $\ell$ rooted at the cluster center of $C$ and spanning all its vertices.

\subsection{Description of Algorithm $\ConsSpannerDenseRegion$}\label{sec:denseg}
The algorithm is based on clustering the dense vertices in two levels of clustering, in a Baswana-Sen like manner. The first clustering $\mathcal{C}_1$ is an $(k/2-1)$-depth clustering covering all the dense vertices. The second clustering, $\mathcal{C}_2$ is an $(k/2)$-depth clustering that covers only a \emph{subset} of the dense vertices. For $k$ odd, let $\mathcal{C}_2$ be equal to $\mathcal{C}_1$.  

\paragraph{Defining the first level of clustering.}
Recall that by employing Algorithm $\NearestNeighbors$, every non-heavy vertex $v \in G_{light}$ knows the set $N_{k/2-1}(v)$ containing its $n^{1/2-1/k}$ nearest neighbors in $\Gamma_{k/2-1}(v,G_{light})$. For every heavy vertex $v$, let $N_{k/2-1}(v)=\Gamma(v,G)$. 
Let $V'$ be the set of all non-heavy vertices that are neighbors of heavy vertices. By definition, $V' \subseteq V_{dense}$. Note that for every dense vertex $v \in V_{dense}\setminus V'$, it holds that $|N_{k/2-1}(v)| \geq n^{1/2-1/k}$. The vertices $u$ of $V'$ are in $G_{light}$ and hence have computed the set $N_{k/2-1}(u)$, however, there is in guarantee on the size of these sets. 

To define the clustering of the dense vertices, Algorithm 
$\ConsSpannerDenseRegion$ applies the hitting-set algorithm $\cA$ on the 
subsets $\mathcal{S}_1=\{N_{k/2-1}(v) ~\mid~ v \in V_{dense}\setminus V'\}$. 
Since every set in $\mathcal{S}_1$ has size at least $\Delta:=n^{1/2-1/k}$, the 
output of algorithm $\cA$ is a subset $Z_1$ of cardinality $O(n^{1/2+1/k})$ that 
hits all the sets in $\mathcal{S}_1$.

We will now construct the clusters in $\mathcal{C}_1$ with $Z_1$ as the cluster 
centers. 
To make sure that the clusters are vertex-disjoint and connected, we first compute the clustering in 
the subgraph $G_{light}$, and then cluster the remaining dense vertices that are not yet clustered. 
For every $v \in G_{light}$ (either dense or sparse), we say that $v$ is \emph{clustered} if 
 $Z_1 \cap N_{k/2-1}(v) \neq \emptyset$. In particular, every dense vertex $v$ for which $|\Gamma_{k/2-1}(v,G_{light})|\geq n^{1/2-1/k}$ is clustered (the neighbors of heavy vertices are either clustered or not).  For every clustered vertex $v \in G_{light}$ (i.e., even sparse ones), let $c_1(v)$, denoted hereafter the \emph{cluster center} of $v$, be the closest vertex to $v$ in $Z_1 \cap N_{k/2-1}(v)$, breaking shortest-path ties based on IDs. Since $v$ knows the entire tree $T_{k/2-1}(v)$, it knows the distance to all the vertices in $N_{k/2-1}(v)$ and in addition, it can compute its next-hop $p(v)$ on the $v$-$c_1(v)$ shortest path in $G_{light}$. Each clustered vertex $v \in G_{light}$,   
adds the edge $(v,p(v))$ to the spanner $H_{dense}$. It is easy to see that this defines a $(k/2-1)$-depth clustering in $G_{light}$ that covers all dense vertices in $G_{light}$. In particular, each cluster $C$ has in the spanner a tree of depth at most $(k/2-1)$ that spans all the vertices in $C$. Note that in order for the clusters $C$ to be connected in $H_{dense}$, it was crucial that all vertices in $G_{light}$ compute their cluster centers in $N_{k/2-1}(v)$, if such exists, and not only the dense vertices.
We next turn to cluster the remaining dense vertices.
For every heavy vertex $v$, let $c_1(v)$ be its closest vertex in $\Gamma(v,G) \cap Z_1$. It then adds the edge $(v,c_1(v))$ to the spanner $H_{dense}$ and broadcasts its cluster center $c_1(v)$ to all its neighbors. Every neighbor $u$ of a heavy vertex $v$ that is not yet clustered, joins the cluster of $c_1(v)$ and adds the edge $(u,v)$ to the spanner. Overall, the clusters of $\mathcal{C}_1$ centered at the subset $Z_1$ cover all the dense vertices. In addition, all the vertices in a cluster $C$ are connected in $H_{dense}$ by a tree of depth $k/2-1$. 
Formally, $\mathcal{C}_1=\{C_1(s), ~\mid~ s \in Z_1\}$ where
$C_1(s)=\{v ~\mid~ c_1(v)=s\}$. 

\paragraph{Defining the second level of clustering.}
Every vertex $v$ that is clustered in $\mathcal{C}_1$ broadcasts its cluster center $c_1(v)$ to all its neighbors. 
This allows every dense vertex $v$ to compute the subset $N_{k/2}(v)=\{ s \in Z_1 ~\mid~ E(v,C_1(s),G)\neq \emptyset\}$ consisting of the centers of its adjacent clusters in $\mathcal{C}_1$.
Consider two cases depending on the cardinality of $N_{k/2}(v)$. Every vertex $v$ with $|N_{k/2}(v)|\leq n^{1/k}\log n$, adds to the spanner $H_{dense}$ an arbitrary edge in $E(v,C_1(s),G)$ for every $s \in N_{k/2}(v)$. 
It remains to handle the remaining vertices $V'_{dense}=\{ v \in V_{dense} ~\mid~ |N_{k/2}(v)|> n^{1/k}\log n\}$.
 These vertices would be clustered in the second level of clustering $\mathcal{C}_2$. 
To compute the centers of the clusters in $\mathcal{C}_2$, the algorithm applies the hitting-set algorithm 
$\cA$ on the collection of subsets $\mathcal{S}_2=\{N_{k/2}(v) ~\mid~ v \in V'_{dense}\}$ with $\Delta=n^{1/k}\log n$ and $V''=Z_1$. The output of $\cA$ is a subset $Z_2$ of cardinality $O(|Z_1|\log n/\Delta)=O(\sqrt{n}\log n)$ that hits all the subsets in $\mathcal{S}_2$.
The $2^{nd}$ cluster-center $c_2(v)$ of a vertex $v \in V'_{dense}$ is chosen to be an arbitrary $s \in N_{k/2}(v)\cap Z_2$. The vertex $v$ then adds some edge $(v,u) \in E(v,C_1(s),G)$ to the spanner $H_{dense}$. Hence, the trees spanning rooted at $s \in Z_2$ are now extended by one additional layer resulting in a $(k/2)$-depth clustering.  

\paragraph{Connecting adjacent clusters.}
Finally, the algorithm adds to the spanner $H_{dense}$ a single edge between each pairs of adjacent clusters $C,C' \in \mathcal{C}_1 \times \mathcal{C}_2$,  this can be done in $O(1)$ rounds as follows. 
Each vertex broadcasts its cluster ID in $\mathcal{C}_2$. Every vertex $v \in C$ for every cluster $C \in \mathcal{C}_1$ picks one incident edge to each cluster $C' \in \mathcal{C}_2$ (if such exists) and sends this edge to the corresponding center of the cluster of $C'$ in $\mathcal{C}_2$. Since a vertex sends at most one message for each cluster center in $\mathcal{C}_2$, this can be done in $O(1)$ rounds. Each cluster center $r$ of the cluster $C'$ in $\mathcal{C}_2$ picks one representative edge among the edges it has received for each cluster $C \in \mathcal{C}_1$ and sends  a notification about the selected edge to the endpoint of the edge in $C$. Since the cluster center sends at most one edge for every vertex this take one round. Finally, the vertices in the clusters $C \in \mathcal{C}_1$ add the notified edges (that they received from the centers of $\mathcal{C}_2$) to the spanner.
This completes the description of the algorithm. We now complete the proof of Lemma \ref{lem:spannerhit}.
\Proof
Recall that we assume $k\geq 6$ and thus $|\Gamma_{k/2-1}(v)|\geq n^{1/2-1/k}$, for every $v \in V_{dense}$. 
We first show that for every $(u,v) \in E_{dense}$, $\dist(u,v,H_{dense})\leq 2k-1$. 
The clustering $\mathcal{C}_1$ covers all the dense vertices. 
If $u$ and $v$ belong to the same cluster $C$ in $\mathcal{C}_1$, the claim follows as $H_{dense}$ contains an 
$(k/2-1)$-depth tree that spans all the vertices in $C$, thus $\dist(u,v,H_{dense})\leq k-2$. From now on assume that $c_1(u)\neq c_1(v)$. 
We first consider the case that for both of the endpoints it holds that $|N_{k/2}(v)|, |N_{k/2}(u)|\leq n^{1/k}\log n$. In such a case, since $v$ is adjacent to the cluster $C_1$ of $u$, the algorithm adds to $H_{dense}$ at least one edge in $E(v,C_1,G)$, let it be $(x,v)$. We have that $\dist(v,u,H_{dense})\leq \dist(v,x,H_{dense})+\dist(x,u,H_{dense})\leq k-1$ where the last inequality holds as $x$ and $u$ belong to the same cluster $C_1$ in $\mathcal{C}_1$.
Finally, it remains to consider the case where for at least one endpoint, say $v$, it holds that $|N_{k/2}(v)|> n^{1/k}\log n$. In such a case, $v$ is clustered in $\mathcal{C}_2$. Let $C_1$ be the cluster of $u$ in $\mathcal{C}_1$ and let $C_2$ be the cluster of $v$ in $\mathcal{C}_2$. Since $C_1$ and $C_2$ are adjacent, the algorithm adds an edge in $E(C_1,C_2,G)$, let it be $(x,y)$ where $x,u \in C_1$ and $y,v \in C_2$. We have that $\dist(u,v,H_{dense})\leq \dist(u,x,H_{dense})+\dist(x,y,H_{dense})+\dist(y,v,H_{dense})\leq 2k-1$, where the last inequality holds as $u,x$ belong to the same $(k/2-1)$-depth cluster $C_1$, and $v,y$ belong to the same $(k/2)$-depth cluster $C_2$.
Finally, we bound the size of $H_{dense}$. Since the clusters in $\mathcal{C}_1, \mathcal{C}_2$ are vertex-disjoint, the trees spanning these clusters contain $O(n)$ edges. For each unclustered vertex in $\mathcal{C}_2$, we add $O(n^{1/k}\log n)$ edges. By the properties of the hitting-set algorithm $\cA$ it holds that $|Z_1|=O(n^{1/2-1/k}\cdot \log n)$ and $|Z_2|=O(n^{1/2}\cdot \log n)$. Thus adding one edge between each pair of clusters adds $|Z_1|\cdot |Z_2|=O(n^{1+1/k}\cdot \log^2 n)$ edges. 
\QED

\textbf{Randomized spanners in $O(\log k)$ rounds.}
We now complete the proof of \Cref{lem:randspanner}.
For an edge $(u,v) \in E_{sparse}$, the correctness follows by the correctness of Alg. $\LocalSpanner$. We next consider the dense case. Let $\cA$ be the algorithm where each $v \in V'$ is added into $Z$ with probability of $\log /\Delta$. By Chernoff bound, we  get that w.h.p. $|Z|=O(|V'|\log n/\Delta)$ and $Z \cap S_i \neq \emptyset$ for every $S_i \in \mathcal{S}$. The correctness follows by applying Lemma \ref{lem:spannerhit}. 
\QED
\newpage
\begin{mdframed}[hidealllines=false]
Algorithm $\ConsSpannerDenseRegion$ 
\begin{enumerate}	
  \item Compute an $(k/2-1)$ clustering $\mathcal{C}_1=\{C(s) ~\mid~ s \in Z_1\}$ centered at subset $Z_1$.
	\item For every $v \in V_{dense}$, let $N_{k/2}(v)=\{ s \in Z_1 ~\mid~ E(v, C_1(s),G)\neq \emptyset\}$.
	\item For every $v \in V_{dense}$ with $|N_{k/2}(v)|\leq n^{1/k}\log n$, add to the spanner one edge in 
	$E(v, C(s),G)$ for every $s \in N_{k/2}(v)$.
	\item Compute an $(k/2)$ clustering $\mathcal{C}_2$ centered at $Z_2$ to cover the remaining dense vertices.
	\item Connect (in the spanner) each pair of adjacent clusters $C, C' \in \mathcal{C}_1 \times \mathcal{C}_2$ .
\end{enumerate} 
\end{mdframed}

\section{Derandomization of Hitting Sets}\label{sec:hitting}
\subsection{Hitting Sets with Short Seeds}

The main technical part of the deterministic construction is to completely 
derandomize the randomized hitting-set algorithm. We show two hitting-set constructions
with different tradeoffs. The first construction is
based on pseudorandom generators (PRG) for DNF formulas. The 
PRG will have a seed of length $O(\log n (\log \log n)^3)$. This would serve the basis for the construction of \Cref{lem:detspanner}. The second hitting-set construction is based on $O(1)$-wise
independence, it uses a small seed of length $O(\log n)$ but yields 
a larger hitting-set. This would be the basis for the construction of \Cref{lem:detspanner2}. 

We begin by setting up some notation.
For a set $S$ we denote by $x \sim S$ a uniform sampling from $S$. For a function $\PRG$ and an index $i$, let $\PRG(s)_i$ 
the $\ith{i}$ bit of $\PRG(s)$. 

\begin{definition}[Pseudorandom Generators]
A generator $\PRG \colon \BB^r \to \BB^n$ is an $\epsilon$-pseudorandom
generator (PRG) for a class $\cC$ of Boolean functions if for every $f \in 
\cC$:
$$
|\Exp{x \sim \BB^n}{f(x)} - \Exp{s \sim \BB^r}{f(\PRG(s))}| \le 
\epsilon.
$$
We refer to $r$ as the seed-length of the generator and say $\PRG$ is explicit 
if there is an efficient algorithm to compute $\PRG$ that runs in time 
$poly(n, 1/\epsilon)$.
\end{definition}


\begin{theorem}\label{thm:prg}
For every $\epsilon=\epsilon(n) > 0$, there exists an explicit pseudoranom 
generator, $\PRG 
\colon 	\BB^r \to \BB^n$ that fools all read-once DNFs on $n$-variables with 
error at most $\epsilon$ and seed-length $r = O((\log(n/\epsilon)) \cdot (\log 
\log(n/\epsilon))^3)$.
\end{theorem}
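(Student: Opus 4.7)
This statement is essentially a black-box quotation of the main PRG result of Gopalan, Meka, Reingold, Trevisan and Vadhan \cite{GopalanMRTV12}, so the honest plan is: cite that paper and move on. In our writeup the one-line ``proof'' would just reference the relevant theorem of \cite{GopalanMRTV12}, since the whole point of stating \Cref{thm:prg} here is to serve as a modular input to the hitting-set derandomization in \Cref{sec:hitting}.

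If I wanted to sketch the underlying construction at a high level, the plan would be to go through the GMRTV framework in three stages. First, reduce fooling a read-once DNF $f=T_1\vee\cdots\vee T_m$ to fooling the ``non-acceptance'' function $\prod_i(1-\mathbf{1}[T_i])$, i.e.\ a product of $m$ functions on disjoint blocks of variables. Second, expand this product by inclusion--exclusion truncated at a carefully chosen level $t=O(\log(1/\epsilon))$, which reduces the task to approximating expectations of low-degree polynomials whose variables are $\mathbf{1}[T_i]$. Third, feed this into the iterative ``pseudorandom pseudo-distribution'' (PRPD) machinery: build generators that fool these expectations at each of $O(\log\log(n/\epsilon))$ scales using a combination of $k$-wise independence on each block (to control the behaviour of a single term) with small-bias generators (to glue blocks together), then compose across scales to get a true PRG by the standard PRPD-to-PRG conversion.

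The seed accounting is where the cube in $(\log\log(n/\epsilon))^3$ is born. Each of the $O(\log\log(n/\epsilon))$ recursive levels costs an $O(\log(n/\epsilon))$-bit seed plus poly-$\log\log$ factors for the bounded-independence and small-bias pieces; balancing these across levels and summing the error budget gives the stated seed length. The main obstacle in the full proof is precisely this error-accumulation bookkeeping across recursion levels, together with proving the central lemma that $k$-wise independent hashes plus small-bias noise suffice to fool the truncated inclusion--exclusion sum with the required Kolmogorov-type error.

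Since none of this is needed to use the theorem downstream, for this paper the proof proposal is simply: apply \cite[Theorem on fooling read-once DNFs]{GopalanMRTV12} with error parameter $\epsilon$ and with the $n$ variables of the PRG identified with the $n$ vertices of the graph, obtaining the claimed seed length $r=O(\log(n/\epsilon)\cdot(\log\log(n/\epsilon))^3)$ and explicit $\mathrm{poly}(n,1/\epsilon)$-time evaluation.
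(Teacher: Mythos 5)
Correct: the paper also states \Cref{thm:prg} as a black-box import of the read-once-DNF PRG of Gopalan et al.\ \cite{GopalanMRTV12} and gives no proof of its own. Your one-line citation is exactly what is intended, and the optional sketch of the GMRTV machinery is extra context rather than a divergence.
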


Using the notation above, and \Cref{thm:prg} we formulate and prove 
the following Lemma:
\begin{lemma}\label{lem:hittingrandloglog}
Let $S$ be subset of $[n]$ where $|S| \ge 
\Delta$ for some parameter $\Delta \le n$ and let $c$ be any constant. 
Then, there 
exists a family of hash functions $\cH = \{h \colon [n] \to \BB \}$ such 
that choosing a random function from $\cH$ takes $r=O(\log n \cdot (\log\log n)^3)$ 
random bits and for $Z_h = \{u \in [n] : h(u)=0 \}$ it holds that:\\
(1) $\prob{h}{ |Z_h| \le \widetilde{O}(n/\Delta)} \ge 2/3$, and
(2) $\prob{h}{ S \cap Z_h \ne \emptyset } \ge 
	1-1/n^c$.
\end{lemma}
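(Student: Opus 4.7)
The plan is to derandomize the natural ``sample each element independently with probability $p \approx \log n / \Delta$'' algorithm by feeding the independent-bit input through the PRG of \Cref{thm:prg}. Concretely, fix $c$ and pick $\ell = \lfloor \log_2(\Delta/((c+2)\ln n)) \rfloor$ so that $p := 2^{-\ell}$ satisfies $(c+2)\ln n / \Delta \le p \le O(\log n/\Delta)$. (The degenerate regime $\Delta = O(\log n)$ is handled separately by taking $Z_h = [n]$, which already has size $\widetilde{O}(n/\Delta)$.) The plan is to associate to each $u \in [n]$ a disjoint block $x^{(u)} = (x^{(u)}_1,\ldots,x^{(u)}_\ell)$ of input bits, and define $h(u)=0$ iff $x^{(u)} = 0^\ell$. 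Under a uniform input in $\BB^{n\ell}$, the events $\{u \in Z_h\}$ are independent with probability $p$ each; the family $\cH$ will be obtained by replacing the uniform input with $\PRG(s)$ for a uniformly random short seed $s \in \BB^r$.

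The central observation, and the reason a read-once-DNF generator suffices, is that both target events can be expressed as read-once DNFs over the blocks $x^{(u)}$. The hitting event $S \cap Z_h \neq \emptyset$ is exactly
\[
\bigvee_{u \in S}\; \bigwedge_{j=1}^{\ell} \bigl(\neg x^{(u)}_{j}\bigr),
\]
and since the blocks $x^{(u)}$ are disjoint, each variable appears in exactly one clause. By \Cref{thm:prg} with error $\epsilon := n^{-(c+2)}$, the $\PRG$-probability of this event differs from the uniform one by at most $\epsilon$. Under uniform bits the event holds with probability at least $1 - (1-p)^{|S|} \ge 1 - e^{-p\Delta} \ge 1 - n^{-(c+2)}$, so under the PRG it holds with probability at least $1 - 2n^{-(c+2)} \ge 1 - n^{-c}$, proving (2).

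For (1), observe that each singleton indicator $\mathbf{1}[u \in Z_h]$ is itself a one-clause read-once DNF, so \Cref{thm:prg} gives $\Exp{s}{\mathbf{1}[u \in Z_h]} \le p + \epsilon$. Linearity of expectation then yields $\E{|Z_h|} \le n(p+\epsilon) = O(n\log n/\Delta) = \widetilde{O}(n/\Delta)$, and Markov's inequality gives $\pr{|Z_h| \le 3\,\E{|Z_h|}} \ge 2/3$, which is the claimed bound.

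Finally, the PRG is instantiated on $N = n\ell = O(n\log n)$ variables with $\epsilon = 1/\poly(n)$, so its seed length is $O(\log(N/\epsilon)\cdot(\log\log(N/\epsilon))^3) = O(\log n \cdot (\log\log n)^3)$, giving the promised $r$. The real content of the argument is the single observation that the ``missed set $S$'' event has read-once structure; once that is in hand, the two probability bounds and the seed-length bookkeeping are routine. The main care point is ensuring that $p$ is a negative power of two (so that sampling with exactly probability $p$ from the uniform distribution is literal, not approximate, making the DNF representation exact) and choosing $\epsilon$ small enough to absorb both the PRG error in (2) and the constant-factor slack inside $\widetilde{O}(\cdot)$ in (1).
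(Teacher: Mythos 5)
Your proposal is correct and matches the paper's proof in essence: both build the hash family by blocking the PRG output of \Cref{thm:prg} into $\ell \approx \log(\Delta/\log n)$ bits per element and exploiting that the hitting event $S\cap Z_h\neq\emptyset$ and the singleton-membership events are read-once DNFs over disjoint blocks, so \Cref{thm:prg} transfers the uniform-bit estimates to the pseudorandom seed. The only differences are cosmetic --- you bound $\Pr[S\cap Z_h=\emptyset]$ under uniform bits directly as $(1-p)^{|S|}$ rather than via the paper's Chernoff bound, you take care to make $p$ an exact power of two and to treat the degenerate regime $\Delta=O(\log n)$, and you use the consistent convention $u\in Z_h$ iff the block is all-zeros --- none of which changes the argument.
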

\begin{proof}
We first describe the construction of $\cH$. Let $p=c'\log n/\Delta$ for some large constant $c'$ (will be set later), and let 
$\ell=\lfloor \log 1/p \rfloor$. Let 
$\PRG \colon \BB^r \to \BB^{n\ell}$ be the PRG constructed in 
\Cref{thm:prg} for $r=O(\log n\ell \cdot (\log\log n\ell)^3)=O(\log n \cdot 
(\log\log n)^3)$ and for $\epsilon = 1/n^{10c}$. For a string $s$ of length $r$ 
we define 
the hash function $h_s(i)$ as follows. First, it computes $y=\PRG(s)$. 
Then, it interprets $y$ as $n$ blocks where each block 
is of length $\ell$ bits, and outputs 1 if 
and only if all the bits of the $\ith{i}$ block are 1. 
Formally, we define
$
h_s(i) = \bigwedge_{j=(i-1)\ell+1}^{i\ell}\PRG(s)_{j}.
$
We show that properties 1 and 2 hold for the set $Z_{h_s}$ where $h_s \in 
\cH$. 
We begin with property 1. For $i \in [n]$ let $X_i=h_s(i)$ be a random 
variable where $s \sim \BB^r$. Moreover, let $X=\sum_{i=1}^{n}X_i$. Using this 
notation we have that $|Z_{h_s}|=X$.
Thus, to show property 1, we need to show that
$
\Pr_{s \sim \BB^r}[X \le \widetilde{O}(n/\Delta)] \ge 2/3.
$
Let $f_i \colon \BB^{n\ell} \to \BB$ be a function that outputs 1 if the 
$\ith{i}$ 
block is all 1's. That is,
$
f_i(y)=\bigwedge_{j=(i-1)\ell+1}^{i\ell}y_{j}.
$
Since $f_i$ 
is a read-once DNF formula we have that
$$
\left |\Exp{y \sim \BB^{n\ell}}{f_i(y)} - \Exp{s \sim \BB^{r}}{f_i(\PRG(s))} 
\right| 
\le \epsilon.
$$
Therefore, it follows that
$$
\E{X} = \sum_{i=1}^{n}\E{X_i} = \sum_{i=1}^{n}\Exp{s \sim 
\BB^{r}}{f_i(\PRG(s))} \le \sum_{i=1}^{n}(\Exp{y \sim \BB^{n\ell}}{f_i(y)} + 
\epsilon) = n(2^{-\ell} + \epsilon) = \widetilde{O}(n/\Delta).$$
Then, by Markov's inequality we get that $\Pr_{s \sim \BB^r}[X > 3\E{X}] 
\le 1/3$ and thus
$$
\prob{s \sim \BB^r}{X \le \widetilde{O}(n/\Delta)} \ge
1-\prob{s \sim \BB^r}{X > 3\E{X}} \ge 2/3.
$$

We turn to show property 2. Let $S$ be any set of size at least $\Delta$ and 
let $g \colon \BB^{n\ell} \to 
\BB$ be an indicator function for the event that the set $S$ is covered. That 
is,
$$
g(y) = \bigvee_{i \in S}\bigwedge_{j=(i-1)\ell+1}^{i\ell}y_j.
$$
Since $g$ is a read-once DNF formula, and thus we have that 
$$
\left |\Exp{y \sim \BB^{n\ell}}{g(y)} - \Exp{s \sim \BB^{r}}{g(\PRG(s))} 
\right| \le \epsilon.
$$
Let $Y_i= \bigwedge_{j=(i-1)\ell+1}^{i\ell}y_j$, and let $Y=\sum_{i \in S}Y_i$. 
Then 
$
\E{Y} = \sum_{i \in S}\E{Y_i} \ge \Delta 2^{-\ell} \ge \Delta p = c'\log n.
$
Thus, by a Chernoff bound we have that $\Pr[Y = 0 ] \le \Pr[\E{Y} - Y \ge c'\log n ]  \le 1/n^{2c}$, for a large enough constant $c'$ (that depends on $c$). Together, we get that \\
$
\Pr_s[ S \cap Z_{h_s} \ne \emptyset ] = \Exp{s \sim \BB^{r}}{g(\PRG(s))} 
\ge  \Exp{y \sim \BB^{n\ell}}{g(y)} - \epsilon 
= \prob{y \sim \BB^{n\ell}}{Y \ge 1} - \epsilon 
\ge 1-1/n^{c}.
$
\end{proof}
We turn to show the second construction of dominating sets with short seed. In this construction the seed length of shorter, but the set is larger. By a direct application of Lemma 2.2 in \cite{celis2013balls}, we get the following lemma which becomes useful for showing \Cref{lem:detspanner2}.
\begin{lemma}\label{lem:hittingrandsubopt}
Let $S$ be a subset of $[n]$ where $|S| \ge 
\Delta$ for some parameter $\Delta \le n$ and let $c$ be any constant. 
Then, there 
exists a family of hash functions $\cH = \{h \colon [n] \to \BB \}$ such 
that choosing a random function from $\cH$ takes $r=O(\log n)$ 
random bits and for $Z_h = \{u \in [n] : h(u)=0 \}$ it holds that:
(1) $\prob{h}{ |Z_h| \le O(n^{17/16}/\sqrt{\Delta})} \ge 2/3$, and 
(2) $\prob{h}{ S \cap Z_h \ne \emptyset } \ge 
	1-1/n^c$.
\end{lemma}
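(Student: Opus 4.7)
The plan is to invoke Lemma~2.2 of Celis et al.~\cite{celis2013balls} as a black box. That lemma provides, for any desired marginal probability $p$, a family of hash functions sampled using only $O(\log n)$ bits (roughly $O(1)$-wise independent hashing with some polynomial-slack tail behavior) that approximates true independence well enough for balls-and-bins style arguments. The role of our proof is therefore just to choose the right parameter $p$ and to check that both properties $(1)$ and $(2)$ fall out.

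Concretely, I would set $p = c' \cdot n^{1/16}/\sqrt{\Delta}$ for a sufficiently large constant $c'$ (depending on $c$) and take $\cH$ to be the Celis et al.\ family, where each $u \in [n]$ is included in $Z_h$ with marginal probability $p$. For property~(1), linearity of expectation gives $\E{|Z_h|} = np = c'\cdot n^{17/16}/\sqrt{\Delta}$, and Markov's inequality applied to $|Z_h|$ yields $\prob{h}{|Z_h| \le 3\E{|Z_h|}} \ge 2/3$, which is exactly the required size bound.

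For property~(2), I would write the event $S \cap Z_h = \emptyset$ as the event that the sum $Y = \sum_{i \in S} X_i$ (where $X_i$ is the indicator of $i \in Z_h$) equals $0$. The expected value is $\E{Y} = p|S| \ge p\Delta = c' \cdot n^{1/16}\sqrt{\Delta} \ge c' \cdot n^{1/16}$. Since $\E{Y}$ is a polynomial in $n$ no matter how small $\Delta$ is, the polynomial-tail guarantee from Celis et al.'s Lemma~2.2 is enough to bound $\Pr[Y=0] \le 1/n^c$, for any desired constant $c$, by increasing $c'$ appropriately. This is exactly the regime the Celis et al.\ construction is designed to handle: constant-wise independence is too weak for exponential concentration but suffices for hitting/covering once $p|S|$ is polynomially large in $n$.

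The main obstacle is simply calibrating the exponents. Unlike \Cref{lem:hittingrandloglog}, where a PRG for read-once DNFs gave us Chernoff-quality tails and hence $p = \Theta(\log n / \Delta)$ was enough, here we are paying for the weaker concentration of short-seed distributions by oversampling. The exponent $1/16$ and the resulting $n^{17/16}$ term are precisely what is needed to make the tail bound of~\cite{celis2013balls} cross the $1/n^c$ threshold while keeping the seed length at $O(\log n)$; once the parameters are lined up, properties $(1)$ and $(2)$ follow immediately.
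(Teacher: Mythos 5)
Your proposal takes essentially the same route as the paper's proof: sample via a $d$-wise independent hash family (which costs $O(\log n)$ random bits by Lemma~\ref{lem:d-wise-independent}), set the marginal inclusion probability to $p = \Theta(n^{1/16}/\sqrt{\Delta})$, and apply the bounded-independence concentration bound from \cite{celis2013balls} (Fact~\ref{fc:kwise}). Your use of Markov for property~(1) is actually a slight simplification: the paper applies the tail bound there too, but since only success probability $2/3$ is required, Markov's inequality is enough, as you observe.

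There is one mechanical imprecision worth flagging in your treatment of property~(2). You write that $\Pr[Y=0]\le 1/n^c$ can be achieved ``by increasing $c'$ appropriately.'' That is not the right knob. The $d$-wise tail bound has the form
\[
\Pr\bigl[\,|Y-\E{Y}|\ge \E{Y}\,\bigr] \;\le\; \Bigl(\frac{O(d)\,|S|}{\E{Y}^2}\Bigr)^{d/2} \;=\; \Bigl(\frac{O(d)}{p^2|S|}\Bigr)^{d/2},
\]
so scaling $p$ by a constant $c'$ shrinks the bound only by a constant factor raised to the power $d/2$; it can never turn a fixed bound into $1/n^c$. The correct knob is the degree of independence: since $p^2|S| \ge n^{1/8}$, choosing $d=\Theta(c)$ (as the paper does when it invokes Lemma~\ref{lem:d-wise-independent} with $d=O(c)$) yields $n^{-\Omega(c)}$. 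Relatedly, the quantity that drives the argument is $\E{Y}^2/|S| = p^2|S|$, not $\E{Y}$ alone --- a set $S$ with $|S|\gg\Delta$ has large $\E{Y}$ but it is $p^2|S|$ that must stay polynomially large, which the choice $p=n^{1/16}/\sqrt{\Delta}$ guarantees for every $|S|\ge\Delta$. With those two points corrected the argument is exactly the paper's.
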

\begin{proof}
Let $p=n^{1/16}/\sqrt{\Delta}$ and let $\cH'$ be the hash family given in 
\Cref{lem:d-wise-independent} with $d=O(c)$, $\gamma = \log n$ and $\beta=\log 
1/p$. Thus, we can sample a random hash function using $O(\log n)$ bits. Then, 
we define $h \in \cH$ using $h' \in \cH$ by $h(x)=1$ if and only if $h'(x)=0$. 
This defines $n$ random variables $X_1,\ldots,X_n$ that are $d$-wise 
independent and where $\E{X_i}=p$, Let $X=\sum_{i=1}^{n}X_i$ then 
$\E{|Z_h|}=\E{X}=np=n^{17/16}/\sqrt{\Delta}$.
By Fact \ref{fc:kwise}, we have that
$$
\pr{|Z_h| \geq 2\E{|Z_h|}} = \pr{|Z_h| \geq 2(n^{17/16}/\sqrt{\Delta})}\leq 
(1/(n^{1/8}))^{O(c)}.
$$
Fix a set $S$, and let $Y=\sum_{i \in S}X_i$. We have that $\prob{h}{ S \cap 
Z_h \ne \emptyset } = \pr{Y \ge 1}$. We know that 
$\E{Y}=p\delta=\sqrt{\Delta}n^{1/16}$.
By Fact \ref{fc:kwise}, we have that
$$
\pr{Y = 0} \le \pr{|Y - \E{Y}| \ge \E{Y} } \leq (1/(n^{1/8}))^{O(c)}.
$$
\end{proof}

\subsection{Deterministic Hitting Sets in the Congested Clique}
We next present a deterministic construction of hitting sets by means of derandomization. The round complexity of the algorithm depends on the number of random bits used by the randomized algorithms.
\begin{theorem}\label{thm:generalderand}
Let $G=(V,E)$ be an $n$-vertex graph, let $V' \subset V$, let $\mathcal{S}=\{S_u \subset V : u \in V'\}$ be a set of subsets such that each node $u \in V'$ knows the set $S_u$ and $|S_u| \ge \Delta$, and let $c$ be a constant.
Let $\cH = \{h \colon [n] \to \BB \}$ be a family of hash functions such 
that choosing a random function from $\cH$ takes $g_{\cA}(n,\Delta)$ 
random bits and for $Z_h = \{u \in [n] : h(u)=0 \}$ it holds that: 
(1) $\pr{ |Z_h| \le f_{\cA}(n,\Delta)} \ge 2/3$ and
(2) for any $u \in V'$: $\pr{ S_u \cap Z_h \ne \emptyset } \ge 1-1/n^c$. \\
Then, there exists a deterministic algorithm $\cA_{det}$ that constructs a hitting set of size $O(f_{\cA}(n,\Delta))$ in $O(g_{\cA}(n,\Delta)/\log n)$ rounds. 
\end{theorem}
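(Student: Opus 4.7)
The plan is to derandomize the randomized hash-function selection by the method of conditional expectations, fixing $\log n$ seed bits per round by exploiting the all-to-all bandwidth of the congested clique. I define a pessimistic estimator $\Psi$ on partial seeds whose unconditional expectation is strictly less than $1$; then fix the seed chunk-by-chunk, greedily choosing each $\log n$-bit chunk so as to keep $\Psi<1$; and finally argue that after $\lceil g_{\cA}(n,\Delta)/\log n\rceil$ phases the fully specified seed must satisfy $\Psi=0$, certifying a hitting set of size $O(f_{\cA}(n,\Delta))$.

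Concretely, for a partial seed $s'$ (with the remaining bits sampled uniformly to form a full seed $s$), let
\[
\Psi(s')\;:=\;\Pr\bigl[|Z_{h_s}|>f_{\cA}(n,\Delta)\,\bigm|\,s'\bigr]\;+\;\sum_{u\in V'}\Pr\bigl[S_u\cap Z_{h_s}=\emptyset\,\bigm|\,s'\bigr].
\]
Hypothesis~(1) gives $\Pr[|Z_h|>f_{\cA}]\le 1/3$, and hypothesis~(2) with a union bound gives $\sum_{u\in V'}\Pr[S_u\cap Z_h=\emptyset]\le |V'|/n^c\le 1/n^{c-1}$; hence $\Psi(\emptyset)<1$ once $c\ge 2$. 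Once $s$ is fully specified, both summands are nonnegative integers, so $\Psi(s)<1$ forces $|Z_{h_s}|\le f_{\cA}$ together with $S_u\cap Z_{h_s}\ne\emptyset$ for every $u\in V'$, i.e., a valid hitting set of the required size.

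The distributed algorithm will run in $\lceil g_{\cA}/\log n\rceil$ phases; in phase $i+1$ it fixes the next block of $\log n$ seed bits. Let $s^{(i)}$ denote the prefix after phase $i$, with $s^{(0)}$ empty. For each of the $n=2^{\log n}$ candidate blocks $c'\in\{0,1\}^{\log n}$, every $u\in V'$ locally computes its own term $p_u(c'):=\Pr[S_u\cap Z_{h_s}=\emptyset\mid s^{(i)}c']$ by enumerating completions of the seed (permitted because internal computation is free), and sends the $O(\log n)$-bit value $p_u(c')$ to a dedicated ``collector'' vertex $v_{c'}$ (the collectors $v_{c'}$ are $n$ distinct pre-assigned vertices of the clique). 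Each vertex sends one message per candidate and each $v_{c'}$ receives at most $|V'|\le n$ messages, fitting within Lenzen's routing~\cite{lenzen2013route} in $O(1)$ rounds. Collector $v_{c'}$ sums its received values, locally adds the globally computable term $\Pr[|Z_{h_s}|>f_{\cA}\mid s^{(i)}c']$ (no communication needed, since every vertex knows $\cH$), and forwards the resulting $\Psi(s^{(i)}c')$ to a designated leader. The leader collects the $n$ candidate values in one round, selects $c^\ast$ minimizing $\Psi(s^{(i)}c^\ast)$, and broadcasts it in one more round. By the method of conditional expectations, $\Psi(s^{(i+1)})=\Psi(s^{(i)}c^\ast)\le\Psi(s^{(i)})$, so $\Psi$ remains below $1$ throughout the execution.

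After all $\lceil g_{\cA}/\log n\rceil$ phases the full seed $s$ has $\Psi(s)<1$, so $Z_{h_s}$ is the required hitting set of size $\le f_{\cA}(n,\Delta)$, and each phase uses $O(1)$ communication rounds, giving the claimed total. The main subtlety to keep an eye on will be routing the $n$ per-candidate messages without any congestion hotspot; this is precisely what the one-collector-per-candidate scheme avoids, keeping per-vertex send- and receive-loads at $O(n)$ per phase. A secondary concern, the potentially expensive exact evaluation of the conditional probabilities, is absorbed by the model's unbounded local computation; for the concrete families of Lemmas~\ref{lem:hittingrandloglog} and~\ref{lem:hittingrandsubopt} it amounts to enumerating at most $2^{g_{\cA}}\le n^{O((\log\log n)^3)}$ completions, which is polynomial in $n$ and imposes no round cost.
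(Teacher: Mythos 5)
Your proposal is correct and matches the paper's proof essentially line for line: the same pessimistic estimator $\Psi = X_A + \sum_{u\in V'} X_u$, the same $\lfloor\log n\rfloor$-bit chunking so that all $n$ candidate blocks can be evaluated in parallel, the same collector-per-candidate routing scheme followed by a single leader aggregating and broadcasting the winning block. The only immaterial cosmetic difference is whether the collector or the leader folds in the globally-computable $\Pr[X_A=1\mid\cdot]$ term.
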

\Proof
Our goal is to completely derandomize the process of finding $Z_h$ by using the 
method of conditional expectation. We follow the scheme of 
\cite{Censor-HillelPS16} to achieve this, and define two bad events that can 
occur when using a random seed of size $g=g_{\cA}(n,\Delta)$. Let $A$ be the 
event where the hitting set $Z_h$ consists of more than $f_{\cA}(n,\Delta)$ 
vertices. Let $B$ be the event that there exists an $u\in V'$ such that $S_u 
\cap Z_h=\emptyset$. Let $X_A, X_B$ be the corresponding indicator random 
variables for the events, and let $X=X_A+X_B$.

Since a random seed with $g_{\cA}(n,\Delta)$ bits avoids both of these events with high probability, we have that 
$\E{X} < 1$ where the expectation is taken over a seed of length $g$ bits. Thus, we can use the method of conditional expectations in order to get an assignment to our random coins such that no bad event occurs, \ie $X=0$. In each step of the method, we run a distributed protocol to compute the conditional expectation. Actually, we will compute a pessimistic estimator for the conditional expectation. 

Letting $X_u$ be indicator random variable for the event that $S_u$ is not hit 
by $Z_h$,
we can write our expectation as follows:
$
\E{X} = \E{X_A}+\E{X_B} = \pr{X_A=1} + \pr{X_B=1} = \pr{X_A=1} + \pr{\vee_{u} X_u=1}~$
Suppose we have a partial assignment to the seed, denoted by $Y$. Our goal is to compute the conditional expectation $\E{X|Y}$, which translates to computing $\pr{X_A=1|Y}$ and $\pr{\vee_{u} X_u=1|Y}$. Notice that computing $\pr{X_A=1|Y}$ is simple since it depends only on $Y$ (and not on the graph or the subsets $\mathcal{S}$). The difficult part is computing $\pr{\vee_{u} X_u=1|Y}$. Instead, we use a pessimistic estimator of $\E{X}$ which avoids this difficult computation. Specifically, we define the estimator:
$\Psi = X_A + \sum_{u \in V'} X_u.$
Recall that for any $u \in V'$ for a random $g$-bit length seed, it holds that $\pr{X_u=1}\leq 1/n^c$ and thus by applying a union bound over all $n$ sets, it also holds that
$\E{\Psi} = \pr{X_A=1} + \sum_u \pr{X_u=1} < 1$.
We describe how to compute the desired seed using the method of conditional expectation. We will reveal the assignment of the seed in chunks of $\ell=\lfloor \log n \rfloor$ bits. In particular, we show how to compute the assignment of $\ell$ bits in the seed in $O(1)$ rounds. Since the seed has $g$ many bits, this will yield an $O(g/\log n)$ round algorithm.


Consider the $\ith{i}$ chunk of the seed $Y_i=(y_1,\ldots, y_\ell)$ and assume that the assignment for the first $i-1$ chunks $Y_1\ldots,Y_{i-1}$ have been computed. For each of the $n$ possible assignments to $Y_i$, we assign a node $v$ that receives the conditional probability values $\pr{X_u=1|Y_1,\ldots,Y_{i}}$ from all nodes $u \in V'$.
Notice that a node $u$ can compute the conditional probability values $\pr{X_u=1|Y_1,\ldots,Y_{i}}$, since $u$ knows the IDs of the vertices in $S_u$ and thus has all the information for this computation.
The node $v$ then sums up all these values and sends them to a global leader $w$. The leader $w$ can easily compute the conditional probability $\pr{X_A=1|Y}$, and thus using the values it received from all the nodes it can compute $\E{X|Y}$ for of the possible $n$ assignments to $Y_i$. Finally, $w$ selects the assignment $(y^*_1,\ldots,y^*_\ell)$ that minimizes the pessimistic estimator $\Psi$ and broadcasts it to all nodes in the graph. After $O(g/\log n)$ rounds $Y$ has been completely fixed such that $X<1$. Since $X_A$ and $X_B$ get binary values, it must be the case that $X_A=X_B=0$, and a hitting set has been found.
\QED
Combining Lemma \ref{lem:hittingrandloglog} and Lemma \ref{lem:hittingrandsubopt} with \Cref{thm:generalderand}, yields:
\begin{corollary}\label{cor:hittingsetdet}
Let $G=(V,E)$ be an $n$-vertex graph, let $V',V'' \subset V$, let $\mathcal{S}=\{S_u \subset V : u \in V'\}$ be a set of subsets such that each node $u \in V'$ knows the set $S_u$, such that $|S_u| \ge \Delta$ and $\bigcup S_u \subseteq V''$. 
Then, there exists deterministic algorithms $\cA_{det}, \cA'_{det}$ in the congested 
clique model that construct a hitting set $Z$ for $\mathcal{S}$ such that: 
(1) $|Z|=\widetilde{O}(|V''|/\Delta)$ and $\cA_{det}$ runs in $O((\log\log 
	n)^3)$ rounds. (2) $|Z|=O(|V''|^{17/16}/\sqrt{\Delta})$ and $\cA'_{det}$ runs in $O(1)$ 
	rounds. 
\end{corollary}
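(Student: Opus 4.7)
The plan is to directly instantiate the derandomization framework of Theorem~\ref{thm:generalderand} with each of the two randomized hash constructions from Lemmas~\ref{lem:hittingrandloglog} and~\ref{lem:hittingrandsubopt}. Since the hypothesis $\bigcup_u S_u \subseteq V''$ lets us restrict attention to hitting sets inside $V''$, I would identify $V''$ with $[|V''|]$ and apply each hash family $\cH = \{h \colon [|V''|] \to \BB\}$ verbatim on this smaller universe. Both lemmas match the two premises of Theorem~\ref{thm:generalderand}: with constant probability $|Z_h|$ is small, and any fixed target set $S_u$ is hit with probability $1 - 1/n^c$, where the constant $c$ is chosen large enough so that the union bound $\E{\Psi} = \Pr[X_A=1] + \sum_{u \in V'}\Pr[X_u=1] < 1$ used inside Theorem~\ref{thm:generalderand} goes through (recall $|V'| \leq n$).

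For $\cA_{det}$, I would plug in the PRG-based family of Lemma~\ref{lem:hittingrandloglog} applied on the universe $V''$. The seed length is $g_{\cA}(n,\Delta) = O(\log |V''| \cdot (\log\log |V''|)^3) = O(\log n \cdot (\log\log n)^3)$ and the output size bound is $f_{\cA}(n,\Delta) = \widetilde{O}(|V''|/\Delta)$. Theorem~\ref{thm:generalderand} then yields a deterministic construction of a hitting set of the desired size in $O(g_{\cA}/\log n) = O((\log\log n)^3)$ rounds.

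For $\cA'_{det}$, I would instead plug in the $d$-wise independent family of Lemma~\ref{lem:hittingrandsubopt} on the universe $V''$, which gives $g_{\cA}(n,\Delta) = O(\log n)$ and $f_{\cA}(n,\Delta) = O(|V''|^{17/16}/\sqrt{\Delta})$. Applying Theorem~\ref{thm:generalderand} once more produces a deterministic hitting set of size $O(|V''|^{17/16}/\sqrt{\Delta})$ in $O(g_{\cA}/\log n) = O(1)$ rounds.

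Since both steps are purely black-box instantiations, there is no real obstacle. The only minor point to verify is that passing from the abstract universe $[n]$ in the two lemmas to the concrete subset $V''$ does not harm any guarantee: the seed-length bound and the $|Z_h|$ size bound scale with $|V''|$ (and are only improved by $|V''| \le n$), while the per-set miss probability $1/n^c$ is already phrased in terms of $n$, so the union bound in the proof of Theorem~\ref{thm:generalderand} over the $|V'| \leq n$ sets in $\mathcal{S}$ continues to give $\E{\Psi} < 1$, certifying that the pessimistic estimator can be driven to zero.
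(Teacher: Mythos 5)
Your proposal is correct and is exactly the paper's approach: the corollary is obtained by plugging the hash families of Lemmas~\ref{lem:hittingrandloglog} and~\ref{lem:hittingrandsubopt} into the derandomization of Theorem~\ref{thm:generalderand}. One small imprecision worth flagging: after re-instantiating the lemmas on the universe $V''$, the per-set failure probability is $1/|V''|^c$, not $1/n^c$ as you wrote, but since $c$ is an arbitrary constant (and $|V''|$ is polynomial in $n$ in every application) one simply chooses $c$ large enough, or equivalently sets the PRG error $\epsilon$ in the proof of Lemma~\ref{lem:hittingrandloglog} in terms of $n$ rather than $|V''|$, so the union bound $\E{\Psi}<1$ still goes through.
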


\textbf{Deterministic construction in $O(\log k+O((\log\log 
	n)^3))$ Rounds.} \Cref{lem:detspanner} follows by plugging Corollary \ref{cor:hittingsetdet}(1) into Lemma \ref{lem:spannerhit}.
\subsection{Deterministic $O(k)$-Spanners in $O(\log k)$ Rounds}\label{sec:det2}
%
Finally, we provide a proof sketch of \Cref{lem:detspanner2}. According to 
\Cref{sec:detsanner}, it remains to consider the construction of $H_{dense}$ 
for the dense edges $E_{dense}$. Recall that for every dense vertex $v$, it 
holds that $|\Gamma_{k/2}(v,G)|\geq n^{1/2-1/k}$. Similarly to 
\Cref{lem:spannerhit}, we construct a $(k/2-1)$ dominating set $Z$ for the 
dense vertices. However, to achieve the desired round complexity, we use the  
$O(1)$-round hitting set construction of Cor. \ref{cor:hittingsetdet}(II) with 
parameters of $\Delta=n^{1/2-1/k}$ and $V'=V$. The output is then a hitting set 
$Z$ of cardinality $O(n^{13/16+1/(2k)})=O(n^{7/8})$ that hits all the $(k/2-1)$ 
neighborhoods of the dense vertices. Then, as in Alg. 
$\ConsSpannerDenseRegion$, we compute a $(k/2-1)$-depth clustering 
$\mathcal{C}_1$ centered at $Z$. The key difference to Alg. 
$\ConsSpannerDenseRegion$ is that $|Z|$ is too large for allowing us to add an 
edge between each pair of adjacent clusters, as this would result in a spanner 
of size $O(|Z|^2)$. Instead, we essentially contract the clusters of 
$\mathcal{C}_1$ (i.e., contracting the intra-cluster edges) and construct the 
spanner recursively in the contracted graph $G''$. Every contracted 
node in $G''$ corresponds to a cluster with a small strong diameter in the 
spanner. 
Specifically, $G''$ will be decomposed into sparse and dense regions (as in our 
previous constructions). Handling the sparse part is done 
deterministically by applying Alg.  $\ConsSpannerSparseRegion$. To handle the 
dense case, we apply the hitting-set algorithm of 
Cor. \ref{cor:hittingsetdet}(II) to cluster the dense nodes (which are in fact, 
contracted nodes) into $|V(G'')|/\sqrt{\Delta}$ clusters for 
$\Delta=n^{1/2-1/k}$. After $O(1)$ repetitions of the above, we will be left 
with a contracted graph with $o(\sqrt{n})$ vertices. At this point, we will 
connect each pair of clusters (corresponding to these contracted nodes)  in the 
spanner. A na\"ive implementation of such an approach would yield a spanner 
with stretch $k^{O(1)}$, as the diameter of the clusters induced by the 
contracted nodes is increased by a $k$-factor in each of the phases. To avoid 
this blow-up in the stretch, we enjoy the fact that already after the first 
phase, the contracted graph $G'$ has $O(n^{13/16+o(1)})$ nodes and hence we can 
allow our-self to compute a $(2k'-1)$ spanner for $G'$ with $k'=7$ as this 
would add $O(n)$ edges to the final spanner. Since in each of the phases -- 
except for the first-- the stretch parameter is \emph{constant}, the stretch 
would be bounded by $O(k)$, and the number of edges by $O(n^{1+1/k})$. 
%
%
%

\paragraph{A detailed description of the algorithm.}
The algorithm consists of $O(1)$ phases. In each phase $i\geq 0$, we are given a virtual graph $\mathcal{G}_{i}=(\mathcal{V}_i,\mathcal{E}_i)$ where $\mathcal{V}_0=V$ and $\mathcal{V}_{i\geq 1}$ is the computed hitting set of phase $(i-1)$.
We are also given a clustering $\mathcal{C}_i$ centered at the vertices $\mathcal{V}_i$. For every $v \in  \mathcal{V}_i$, let $C_{i}(v)\subseteq V$ be the cluster of $v$ in $\mathcal{C}_i$. Initially, we have $\mathcal{G}_0=G$ and $C_0(v)=\{v\}$ for every $v$. We keep the following invariant for each edge $(u,v) \in \mathcal{E}_i$: (1) it corresponds to a unique $G$-edge between the clusters $C_{i}(u)$ and $C_{i}(v)$, and (2) both $u$ and $v$ know the endpoints of this $G$-edge. Note that eventhough the edges of $\mathcal{G}_{i}$ are virtual, by property (2), each vertex $v \in \mathcal{V}_i$ knows its edges in $\mathcal{E}_i$, and hence we can employ any graph algorithm on $\mathcal{G}_{i}$ at the same round complexity as if the edges were in $G$. We next describe phase $i$ that constructs a subgraph $\mathcal{H}_{i} \subseteq \mathcal{G}_{i}$. At the end of that phase, the vertices will add to the spanner $H$, the $G$-edges corresponding to the virtual edges in $\mathcal{H}_{i}$. 

Given the virtual graph $\mathcal{G}_i$ in phase $i$ we do as follows. Let $\mathcal{V}_{i,heavy} \subseteq \mathcal{V}_i$ be the vertices with degree at least $\sqrt{n}$ in $\mathcal{G}_i$. Let $\mathcal{G}_{i,light}=\mathcal{G}_i[\mathcal{V}_{i} \setminus \mathcal{V}_{i,heavy}]$ be the induced subgraph on the non-heavy vertices.  First, we apply the $O(\log (k'))$-round algorithm $\NearestNeighbors$ on $\mathcal{G}_{i,light}$ with parameter $k'$, where
$k'=k$ for $i=0$, and $k'=7$ for $i\geq 1$. We define $\mathcal{V}_{i,dense}$ and $\mathcal{V}_{i,sparse}$ in the exact same manner as in our previous construction (only with using $k'$ as the stretch parameter). 
The graph $\mathcal{G}_i$ is partitioned into sparse edges $\mathcal{E}_{i,sparse}$ and dense edges $\mathcal{E}_{i,dense}$ also as in the previous sections. 

To handle the sparse subgraph $\mathcal{G}_{i,sparse}$, we apply Alg. $\ConsSpannerSparseRegion$ with stretch parameter $k'$, resulting in the spanner $\mathcal{H}_{i,sparse} \subseteq \mathcal{G}_{i}$ for the sparse edges.
%
We next handle the dense subgraph $\mathcal{G}_{i,dense}$. The algorithm will be very similar to Alg. $\ConsSpannerDenseRegion$ of \Cref{sec:denseg}, the main difference will be that we will use the deterministic hitting-set algorithm of Cor. \ref{cor:hittingsetdet}(II) that results in a larger number of clusters. Then, 
instead of connecting each pair of clusters (as in Alg. $\ConsSpannerDenseRegion$), the clusters will be contracted into super-nodes, and the algorithm will continue recursively on that contracted graph.

For every non-heavy $v \in \mathcal{V}_i$ with $|\Gamma_{k'/2-1}(v,  \mathcal{G}_{i,light})|\geq  n^{1/2-1/k'}$, let $N_{k'/2-1}(v)$ be its closest $n^{1/2-1/k'}$ vertices in $\Gamma_{k'/2-1}(v, \mathcal{G}_{i,light})$ (as compute by Alg. $\NearestNeighbors$). For a heavy vertex $v$, let $N_{k'/2-1}(v)=\Gamma(v,\mathcal{G}_i)$. 
Let $\mathcal{V}'_i$ be the set all the non-heavy vertices that are neighbors of heavy vertices in $\mathcal{G}_{i}$. By definition, $\mathcal{V}'_i \subseteq \mathcal{V}_{i,dense}$. Note that for every dense vertex $v \in \mathcal{V}_{i,dense} \setminus \mathcal{V}'_i$, it holds that $|N_{k/2-1}(v)| \geq n^{1/2-1/k}$. The vertices $u$ of $\mathcal{V}'_i$ are in $\mathcal{G}_{i,light}$ and hence have computed the set $N_{k/2-1}(u)$, however, there is in guarantee on the size of there sets. 

We then apply the hitting-set algorithm of 
Cor. \ref{cor:hittingsetdet}(II) on the collection of sets $\mathcal{S}_i=\{N_{k/2-1}(v) ~\mid~ v \in \mathcal{V}_{i,dense} \setminus \mathcal{V}'_i\}$ with $\Delta=n^{1/2-1/k'}$, $V', V''=\mathcal{V}_i$, and compute a hitting-set 
$\mathcal{Z}_i \subseteq \mathcal{V}_i$ that hits all the sets in $\mathcal{S}_i$. The size of $\mathcal{Z}_i$ is $O(|\mathcal{V}_i|/\sqrt{\Delta})$. Next, the algorithm constructs a $(k'/2-1)$-depth clustering $\widetilde{\mathcal{C}}_i$ centered at the vertices of $\mathcal{Z}_i$ in the exact same manner as described in Alg. $\ConsSpannerDenseRegion$. This clustering is accompanied with $(k'/2-1)$-depth trees in the virtual graph $\mathcal{G}_i$ that  are added to $\mathcal{H}_{i,dense}$. 

For every $s \in \mathcal{Z}_i$, let $\widetilde{C}(s) \in \widetilde{\mathcal{C}}_i$ be the cluster of $s$ in $\mathcal{G}_{i}$ where $\widetilde{C}(s) \subseteq \mathcal{V}_i$. 
The clustering $\mathcal{C}_{i+1}$ is defined by $\mathcal{C}_{i+1}=\{C_{i+1}(s) ~\mid~ s \in \mathcal{Z}_i\}$ where $C_{i+1}(s)=\bigcup_{s' \in \widetilde{C}(s)}C_i(s') \subseteq V$ for every $s \in \mathcal{Z}_i$. 
The subgraph $\mathcal{G}_{i+1}$ is given by letting $\mathcal{V}_{i+1}=\mathcal{Z}_i$. The edge set $\mathcal{E}_{i+1}$ is defined by computing the unique $G$-edge between each pair of adjacent clusters $C_{i+1}(s)$ and $C_{i+1}(s')$ in $\mathcal{C}_{i+1}$. This can be computed in the same manner as in  Alg. $\ConsSpannerDenseRegion$. In particular, this procedure maintains the invariant that each $v \in \mathcal{V}_{i+1}$ knows the $G$-edges corresponding to its virtual edges in  $\mathcal{G}_{i+1}$. 

Let $\widetilde{H}_i=\widetilde{H}_{i,sparse}\cup \widetilde{H}_{i,dense}$ be the $\mathcal{G}_i$-edges added to the spanner in phase $i$. It remains to describe how to add the unique $G$-edge corresponding to each virtual edge in $\widetilde{H}_i$ within $O(1)$ rounds. By the invariant, for every virtual edge $(v,v') \in \widetilde{H}_i$, both endpoints $v$ and $v'$ know the $G$-edge $(x,y)$ such that $x \in C_{i}(v)$ and $y \in C_{i}(v')$. Each vertex $v \in \mathcal{V}_i$, for each of its edges $(v,v') \in  \widetilde{H}_i$ sends a message to $x \in C_{i+1}(v')$ with the identifier of the edge $(x,y)$, namely, the $G$-edge that corresponds to the virtual $(v,v')$. The vertex $x$ adds the edge $(x,y)$ to the spanner. Since the clusters $\mathcal{C}_{i}$ are vertex disjoint, each $v \in \mathcal{C}_i$ sends at most one messages to each of the vertices in the graph and hence all $G$-edges of $\widetilde{H}_i$ can be added in $O(1)$ rounds.
%
%
%

This completes the description of phase $i$. For an illustration see \Cref{fig:clusterscont}.
After $4$ phases, we show in the analysis section, that $\mathcal{G}_4$ contains $o(n^{1/2})$ vertices $\mathcal{V}_4$. We then connect each pair of clusters in $\mathcal{C}_4$ by adding one edge between each pair of adjacent clusters to the spanner. Let $H$ be the output spanner. This completes the description of the algorithm. 

\begin{figure}[t]
\includegraphics[scale=0.35]{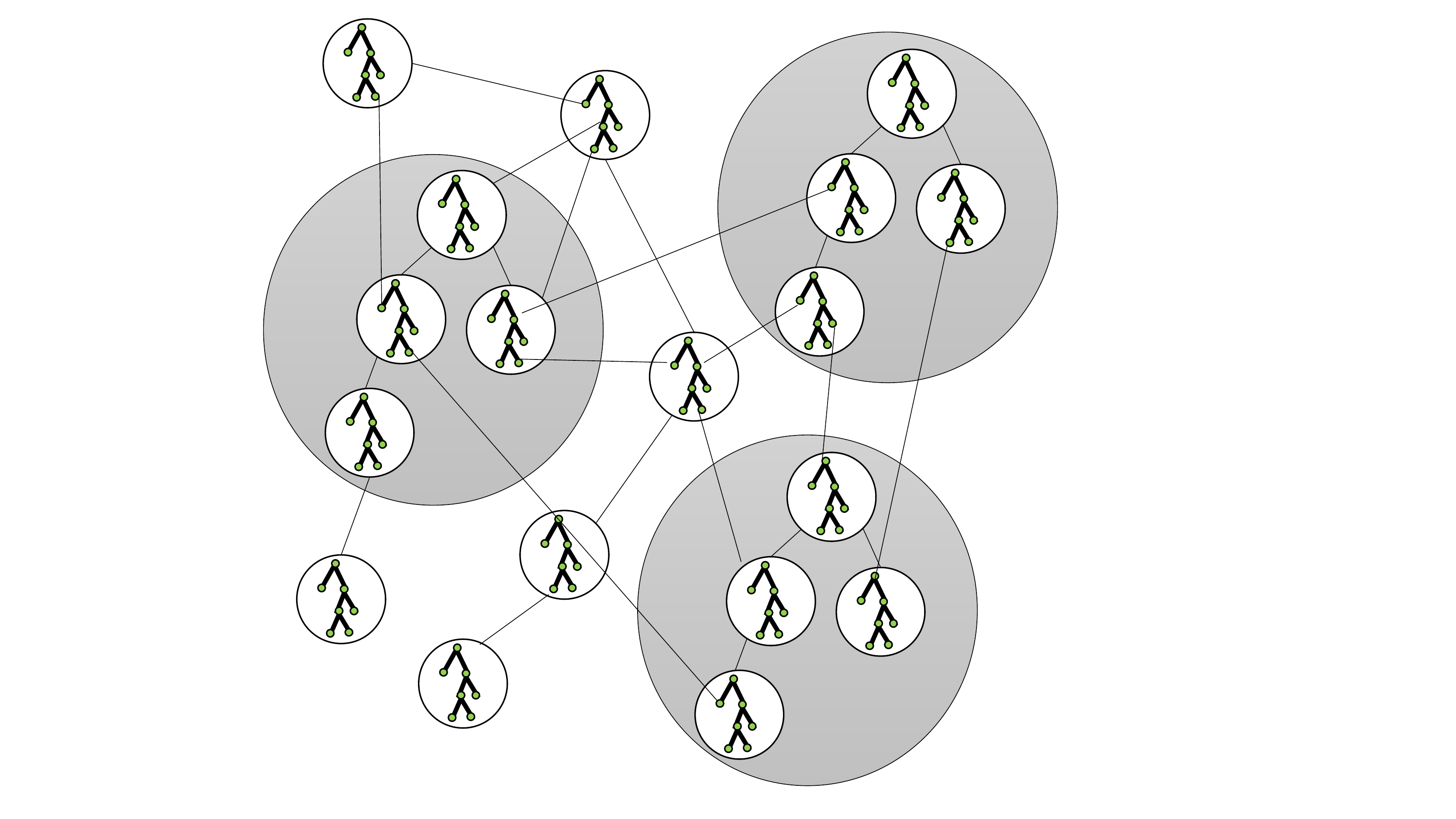}
\caption{\label{fig:clusterscont} An illustration of phase $i$. Small circles are nodes of $\mathcal{G}_i$, each corresponds to a cluster in $\mathcal{G}_{i-1}$. The greedy circles are the clusters of the dense vertices in $\mathcal{G}_i$, the centers of these clusters will be the nodes of $\mathcal{G}_{i+1}$.}
\end{figure}

\paragraph{Analysis.}

\begin{claim}\label{cl:lastphast}
$|\mathcal{V}_4|=o(\sqrt{n})$. 
\end{claim}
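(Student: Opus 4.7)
The plan is to unroll the size recurrence for $|\mathcal{V}_{i+1}|$ across the four phases and verify that the exponent of $n$ drops strictly below $1/2$ by $i=4$. By construction, $\mathcal{V}_{i+1}$ equals the hitting set $\mathcal{Z}_i$ returned by Corollary~\ref{cor:hittingsetdet}(II) applied to $\mathcal{S}_i$ with universe $V'' = \mathcal{V}_i$ and minimum set size $\Delta = n^{1/2-1/k'}$, where $k'=k$ in phase $0$ and $k'=7$ for $i \ge 1$. The corollary then gives
\[
|\mathcal{V}_{i+1}| \;=\; O\!\left(|\mathcal{V}_i|^{17/16}\,\big/\, n^{1/4-1/(2k')}\right).
\]

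Writing $\alpha_i := \log_n |\mathcal{V}_i|$ and absorbing $\log n$ factors into an $o(1)$ term, this becomes
\[
\alpha_1 \le \tfrac{17}{16} - \tfrac{1}{4} + \tfrac{1}{2k} = \tfrac{13}{16} + \tfrac{1}{2k}, \qquad \alpha_{i+1} \le \tfrac{17}{16}\alpha_i - \tfrac{5}{28} \text{ for } i \ge 1,
\]
using $\tfrac{1}{4}-\tfrac{1}{14}=\tfrac{5}{28}$. Starting from $\alpha_0 = 1$, a direct arithmetic check yields $\alpha_1 \le 13/16 + o(1) \approx 0.813$, then $\alpha_2 \le \tfrac{17}{16}\cdot\tfrac{13}{16} - \tfrac{5}{28} + o(1) \approx 0.685$, then $\alpha_3 \approx 0.549$, and finally $\alpha_4 \approx 0.405$. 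Since $\alpha_4$ is bounded away from $1/2$ by an absolute constant (provided $k$ exceeds a fixed constant threshold, which is harmless), we conclude $|\mathcal{V}_4| = O(n^{\alpha_4 + o(1)}) = o(\sqrt{n})$.

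The only nontrivial point I anticipate is confirming that no additional vertices enter $\mathcal{V}_{i+1}$ outside the hitting-set bound. This should follow directly from the definitions: $\mathcal{V}_{i+1}$ is literally $\mathcal{Z}_i$, and the sparse vertices of $\mathcal{V}_i$ (together with their heavy-neighbor partners) are entirely dispatched by $\ConsSpannerSparseRegion$ within the same phase rather than being carried forward. Hence the single recurrence above fully captures the growth of $|\mathcal{V}_i|$, and the claim reduces to the arithmetic already carried out.
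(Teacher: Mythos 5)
Your proof follows the same strategy as the paper: unroll the size recurrence implied by Corollary~\ref{cor:hittingsetdet}(II) across the four phases and verify that the exponent falls below $1/2$. In fact your arithmetic is \emph{more} careful than the paper's. The paper's proof asserts the closed form $|\mathcal{V}_i|=O(n^{1-i/4+1/(2k)+(i-1)/16})$, but this is not consistent with Corollary~\ref{cor:hittingsetdet}(II): already at $i=1$ it claims exponent $3/4+1/(2k)$, whereas the corollary (and the paper's own prose earlier in the same section) gives $17/16-1/4+1/(2k)=13/16+1/(2k)$. Your recurrence $\alpha_{i+1}\le\frac{17}{16}\alpha_i-\frac14+\frac{1}{2k'}$ is the correct one, and your unrolled values are right. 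Two small remarks. First, the paper's description is internally inconsistent about $k'$: the algorithm description sets $k'=7$ for $i\ge 1$ (which you adopted), while the proof of Claim~\ref{cl:lastphast} substitutes $k'=8$. Second, you correctly flag that the conclusion needs $k$ above a fixed constant. With $k'=7$ this threshold is $k\ge 8$: at $k=6$ one gets $\alpha_4\approx 0.505$, just above $1/2$ (with $k'=8$, as the paper uses, $\alpha_4\approx 0.476$ even at $k=6$). Since the paper claims $k\ge 6$ and handles $k\le 5$ separately, this boundary case is a genuine (if minor) loose end that is present in the paper as well; you at least make the dependence explicit rather than hiding it behind an incorrect closed form.
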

\begin{proof}
We show that for every $i\geq 1$, $|\mathcal{V}_i|=O(n^{1-i/4+1/2k+(i-1)/16})$. This is shown by induction on $i$. For $i=1$, $\mathcal{V}_1$ is the hitting set constructed for $V$ in $\mathcal{G}_0$. The claim holds as by applying Cor. \ref{cor:hittingsetdet}(II) with parameters $V'=V$ and $\Delta=n^{1/2-1/k}$. 
Assume that it holds up to phase $i$ and consider $i+1$. The set $\mathcal{V}_{i+1}$ is the hitting set for $\mathcal{V}_{i}$ computed by Cor. \ref{cor:hittingsetdet}(II). The claim holds as by applying Cor. \ref{cor:hittingsetdet}(II) with parameters $V'=O(n^{1-i/4+1/2k+(i-1)/16})$,$\Delta=n^{1/2-1/k'}$ and $k'=8$. We get that $|\mathcal{V}_4|=o(\sqrt{n})$ as required.
\end{proof}

\begin{claim}\label{cl:dettreesepth}
For every $C \in \mathcal{C}_i$, the spanner $H$ contains a tree of depth at most $7^{i}\cdot k$ spanning all the vertices in $C$. 
\end{claim}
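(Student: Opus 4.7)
The plan is to prove the claim by induction on $i$. Let $D_i$ denote an upper bound on the depth of the spanning tree in $H$ of any cluster $C \in \mathcal{C}_i$, rooted at its center. The base case $i=0$ is immediate since $\mathcal{C}_0$ consists of singletons, so $D_0 = 0 \leq k$. The inductive step reduces to establishing a recurrence of the form $D_{i+1} \leq (k'-1)\, D_i + O(k')$, where $k'$ is the stretch parameter used in phase $i$, namely $k' = k$ for $i=0$ and $k' = 7$ for $i \geq 1$.

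To derive this recurrence, I would exploit the algorithm's structure: $\mathcal{C}_{i+1}$ is obtained by clustering the vertices of the virtual graph $\mathcal{G}_i$ into $(k'/2-1)$-depth clusters $\widetilde{\mathcal{C}}_i$ centered at $\mathcal{Z}_i = \mathcal{V}_{i+1}$, and setting $C_{i+1}(s) = \bigcup_{s' \in \widetilde{C}(s)} C_i(s')$. By the invariant in \Cref{sec:det2}, every virtual edge of the $(k'/2-1)$-depth tree spanning $\widetilde{C}(s)$ corresponds to a unique $G$-edge that has been added to $H$. I would compose the sub-cluster spanning trees (of depth $\leq D_i$ by the inductive hypothesis) with these ``bridge'' $G$-edges to span $C_{i+1}(s)$.

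The key calculation is to bound the depth of the resulting tree. For a target vertex $w \in C_i(s')$ whose sub-cluster center $s'$ sits at depth $d \leq k'/2 - 1$ from $s$ in the virtual tree of $\widetilde{C}(s)$, the path from $s$ to $w$ in $H$ consists of: an initial root-to-endpoint traversal of $C_i(s)$ costing $\leq D_i$; then $d$ bridge $G$-edges interspersed with $d-1$ intra-cluster traversals, each of length $\leq 2 D_i$ (tree diameter); and finally a traversal within $C_i(s')$ from the bridge endpoint to $w$, also $\leq 2 D_i$. This yields $D_{i+1} \leq (2d+1) D_i + d \leq (k'-1) D_i + (k'/2 - 1)$. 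For $i=0$ we get $D_1 \leq k/2 - 1 \leq 7 k$, and for $i \geq 1$ with $k'=7$ we get $D_{i+1} \leq 6 D_i + 3$, which together with $D_i \leq 7^i k$ gives $D_{i+1} \leq 6 \cdot 7^i k + 3 \leq 7^{i+1} k$, closing the induction.

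The main obstacle will be verifying that heavy vertices and their neighbors (the set $\mathcal{V}'_i$) are handled uniformly. These are clustered not via the $N_{k'/2-1}$ sets but by direct edges to heavy vertices; I would argue that this still places them at virtual-graph distance $1$ from their cluster center, so they contribute at most one additional bridge edge plus an intra-cluster traversal of cost $\leq 2 D_i$, which is already absorbed in the recurrence. A minor subsidiary point is that for $k' = 7$ the half-depth should formally be $\lfloor k'/2 \rfloor = 3$, but since this only affects constants and the bound $7^i k$ offers ample slack over the tighter estimate derived above, the induction goes through without tightening.
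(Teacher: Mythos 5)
Your proof is correct and follows the same inductive route as the paper: compose the depth-$D_i$ sub-cluster trees via bridge edges along the $\lfloor k'/2\rfloor$-depth virtual tree, yielding (with $d$ the virtual-tree depth) the recurrence $D_{i+1} \le (2d+1)D_i + d$. Your bookkeeping is in fact slightly more careful than the paper's, which writes $3(2x_i+1)\le 7x_i$ and implicitly omits the final intra-cluster leg to the target vertex $w$; both arguments land within a small additive constant of $7^i k$ and both really rely on the slack coming from the much tighter base case $D_1 \le k/2-1$.
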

\begin{proof}
The proof is by induction on $i$. For $i=0$, each node in $G$ correspond to a singleton cluster in $\mathcal{C}_0$ and since we compute a $(k/2-1)$-clustering in $G$, this is immediate. Assume that the claim holds up to phase $(i-1)$ and consider phase $i$. 
Recall that the vertices of $\mathcal{V}_i$ are the cluster centers of the clustering $\mathcal{C}_{i-1}$.
By induction assumption for $i-1$, each node $v \in \mathcal{V}_i$ is a root of tree of depth at most $x_i=7^{i-1}\cdot k$ that spans all the vertices in $C_{i-1}(r)$ in the spanner. In phase $i$, the algorithm computes a virtual tree of depth $(k'/2-1)=3$ connecting the vertices of $\mathcal{V}_{i}$. Since there is an $H$-path of length $2x_i+1$ between each two neighbors $v_1,v_2$ in $\mathcal{G}_i$, overall we get that $H$ as a tree of depth at most $3(2x_i+1)\leq 7x_i=7^{i}\cdot k$ in $H$ that spans the vertices in each cluster $C \in \mathcal{C}_i$, as required.
\end{proof}

We are now ready to complete the proof of \Cref{lem:detspanner2}. 
\Proof
We begin with stretch analysis and show that $H$ is an $O(k)$-spanner. 
Consider an edge $(u,v)$. We say that a vertex $w$ is $i$-clustered if its belongs to one of the clusters of $\mathcal{C}_i$. Without loss of generality, assume that $u$ becomes unclustered \emph{not after} $v$. Let $i$ be the maximum integer such that $u$ is $i$-clustered. If $i=4$, then both $u$ and $v$ belongs to clusters in $\mathcal{C}_4$. By \Cref{cl:dettreesepth}, the spanner $H$ contains a depth-$O(k)$ tree connecting $v,u$ to their cluster centers. Since the algorithm connects each pair of neighboring clusters, either $(u,v)$ is in $H$ or an edge $(w,z)$ was added such that $u,w$ and $v,z$ belong to the same clusters respectively. 

It remains to consider the case where $i\leq 3$. Let $s_u,s_v \in \mathcal{V}_i$ be the cluster centers of $u,v$ in $\mathcal{C}_i$ respectively.  
Since $u$ is unclustered in $\mathcal{C}_{i+1}$, this implies that $s_u$ is a sparse node and hence $s_v$ is a non-heavy node in $\mathcal{G}_i$. Hence, the virtual edge $(s_u,s_v)$ belongs to $\mathcal{G}_{i,sparse}$. By applying Alg. $\ConsSpannerSparseRegion$ to $\mathcal{G}_{i,sparse}$, we get an $15$-spanner $\mathcal{H}_{i,sparse}$. Hence, the virtual spanner $\mathcal{H}_{i,sparse}$ contains an $s_u$-$s_v$ path $\mathcal{P}=[s_u=s_1,s_2,\ldots, s_\ell=s_v]$ of length at most $15$ in $\mathcal{G}_i$.
Since the algorithm adds a $G$-edges between each neighboring clusters $C_i(s_j)$ and $C_i(s_{j+1})$ for every $j \in \{1,\ldots, \ell-1\}$ and by \Cref{cl:dettreesepth}, each $C_i(s)$ contains a depth-$(7^{i}\cdot k)$ tree in $H$ rooted at $r$ that spans all the vertices in $C_i(s)$, we get that $H$ contains an $u$-$v$ path of length $O(k)$. 

We now bound the number of edges in $H$. In each phase $i \in \{0,\ldots,4\}$, we only add edges in the construction of the spanner for the sparse regions. Phase $i=0$, computes a $(2k-1)$ spanner for the sparse edges of $G$ and hence this add $O(k \cdot n^{1+1/k})$ edges. Phase $i \in \{1,\ldots, 4\}$ computes an $15$-spanner for a virtual graph with at most $n_i=O(n^{7/8})$ vertices. The virtual spanner $\mathcal{H}_{i,sparse}$ contains $O((n^{7/8})^{1+1/8})=O(n)$ edges. As the algorithm adds an $G$-edge for each virtual edge of  $\mathcal{H}_{i,sparse}$, at most $O(n)$ edges are added to the spanner. Finally, by \Cref{cl:lastphast}, in the last phase the virtual graph contains $O(\sqrt{n})$ nodes, since the algorithm adds to the spanner a $G$-edge between the clusters of each neighboring nodes, this adds $O(n)$ edges in total. 
\QED

\vspace{-12pt}
\section*{Acknowledgments}
\vspace{-6pt}
The first author is grateful for Mohsen Ghaffari for 
earlier discussions on randomized spanner constructions in the congested clique 
via dynamic streaming ideas. The second author thanks Roei Tell 
for pointing out 
\cite{GopalanMRTV12}.

\bibliographystyle{plainurl}
\bibliography{detspanner} 

\appendix


\section{Limited Independence}
\begin{definition}[{\cite[Definition 3.31]{Vadhan12}}]
	\label{def: d-wise independent}
	For	$N,M,d \in \mathbb{N} $ such that $d \leq N$, a family of functions $\Hcal = \set{h : [N] \rightarrow
		[M]}$ is $d$-wise independent if for all distinct $x_1,x_2,...,x_d \in [N],$ the
	random variables $H(x_1),...,H(x_d)$ are independent and uniformly distributed
	in $[M]$ when $H$ is chosen randomly from $\Hcal$.
\end{definition}
In \cite{Vadhan12} an explicit construction of $\Hcal$ is presented, with parameters as stated in the next Lemma.
\begin{lemma}[{\cite[Corollary 3.34]{Vadhan12}}]
	\label{lem:d-wise-independent}
	For every $\gamma,\beta,d \in \mathbb{N},$ there is a family of $d$-wise independent functions $\mathcal{H}_{\gamma,\beta} = \set{h : \set{0,1}^\gamma \rightarrow \set{0,1}^\beta}$ such that choosing a random function from $\mathcal{H}_{\gamma,\beta}$ takes $d \cdot \max \set{\gamma,\beta}$ random bits, and evaluating
	a function from $\mathcal{H}_{\gamma,\beta}$ takes time $poly(\gamma,\beta,d)$.
\end{lemma}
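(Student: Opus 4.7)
The plan is to give the standard polynomial-based construction of $d$-wise independent hash functions over an appropriate finite field. Set $m = \max\{\gamma, \beta\}$ and work in the finite field $\mathbb{F} = \mathbb{F}_{2^m}$, which can be built explicitly from an irreducible polynomial of degree $m$ over $\mathbb{F}_2$ and whose elements are canonically identified with $\{0,1\}^m$. For each tuple $\vec a = (a_0, \ldots, a_{d-1}) \in \mathbb{F}^d$, define $h_{\vec a} \colon \{0,1\}^\gamma \to \{0,1\}^\beta$ by embedding $x \in \{0,1\}^\gamma$ into $\mathbb{F}$ (padding with zeros, which is allowed since $\gamma \le m$), evaluating the polynomial $p_{\vec a}(x) = \sum_{i=0}^{d-1} a_i x^i$ in $\mathbb{F}$, and outputting the first $\beta$ bits of the resulting element of $\{0,1\}^m$. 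The family $\mathcal{H}_{\gamma,\beta}$ is the set of all such $h_{\vec a}$.

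Sampling a uniformly random member of $\mathcal{H}_{\gamma,\beta}$ amounts to sampling $\vec a$ uniformly in $\mathbb{F}^d$, which requires $d \cdot m = d \cdot \max\{\gamma,\beta\}$ uniform bits, matching the claim. For the evaluation-time claim, addition and multiplication in $\mathbb{F}_{2^m}$ run in $\poly(m)$ time, and Horner's rule evaluates $p_{\vec a}$ at a point using $O(d)$ field operations, giving an overall $\poly(\gamma,\beta,d)$ bound; truncating to the first $\beta$ bits is free.

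The heart of the argument is $d$-wise independence. Fix any distinct $x_1, \ldots, x_d \in \{0,1\}^\gamma$ (equivalently, distinct field elements) and any target values $y_1, \ldots, y_d \in \mathbb{F}$. By Lagrange interpolation there is exactly one polynomial of degree at most $d-1$ over $\mathbb{F}$ with $p(x_i) = y_i$ for all $i$. Consequently the map $\vec a \mapsto (p_{\vec a}(x_1), \ldots, p_{\vec a}(x_d))$ is a bijection from $\mathbb{F}^d$ to $\mathbb{F}^d$, so for uniformly random $\vec a$ the evaluations are mutually independent and each uniform on $\mathbb{F}$. Finally, truncating each independent uniform element of $\{0,1\}^m$ to its first $\beta$ bits yields mutually independent uniform elements of $\{0,1\}^\beta$, establishing $d$-wise independence per \Cref{def: d-wise independent}.

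The only real subtlety — and the one step I would want to be careful about — is the need for \emph{explicit} access to the field $\mathbb{F}_{2^m}$ so that the evaluation time bound is genuinely $\poly(\gamma,\beta,d)$ rather than only existential. This can be handled by invoking a deterministic algorithm for constructing an irreducible polynomial of degree $m$ over $\mathbb{F}_2$ in time $\poly(m)$ (e.g., Shoup's algorithm), or by noting that this step is independent of the input and can be pre-computed. Everything else is a direct consequence of polynomial interpolation and the canonical bit-string representation of $\mathbb{F}_{2^m}$.
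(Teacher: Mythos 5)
Your proof is correct and gives the standard polynomial-evaluation construction over $\mathbb{F}_{2^{\max\{\gamma,\beta\}}}$, which is exactly the construction underlying the cited result (Vadhan, Corollary~3.34); the paper itself invokes this lemma as a black box without reproducing a proof. The interpolation/bijection argument, the bit-count for sampling the $d$ coefficients, and the care about deterministically constructing an irreducible polynomial to make the family explicit are all handled appropriately.
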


\begin{fact}\label{fc:kwise}\cite{celis2013balls}
Let $X_1,\ldots, X_n \in \{0,1\}$ be $2k$-wise $\delta$-dependent random variables, for some $k \in \mathbb{N}$ and $0 \leq \delta <1$, and let $X=\sum_{i=1}^n X_i$ and $\mu=\E{X}$. Then, for any $t>0$ it holds that:
$$\pr{|X-\mu|>t}\leq 2(2nk/t^2)^k+\delta(n/t)^{2k}~.$$
\end{fact}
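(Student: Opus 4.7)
The plan is to apply Markov's inequality to a high central moment and then bound that moment using the $2k$-wise $\delta$-dependence assumption. Writing $Y_i=X_i-p_i$ with $p_i=\E{X_i}$, each $Y_i$ has mean $0$ and satisfies $|Y_i|\leq 1$, and
$$
\pr{|X-\mu|>t}=\pr{(X-\mu)^{2k}>t^{2k}}\leq \frac{\E{(X-\mu)^{2k}}}{t^{2k}},
$$
so the entire task reduces to bounding the $2k$-th central moment
$$
M \;:=\; \E{(X-\mu)^{2k}} \;=\; \sum_{(i_1,\ldots,i_{2k})\in [n]^{2k}} \E{\prod_{j=1}^{2k} Y_{i_j}}.
$$

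First I would handle the case of \emph{exact} $2k$-wise independence as a warm-up. In that setting, any tuple in which some distinct index appears exactly once contributes $0$, because that $Y_i$ factors out and $\E{Y_i}=0$. The surviving tuples are those whose index pattern is a set-partition of $[2k]$ into blocks of size at least two, and hence use at most $k$ distinct indices in $[n]$. A standard counting argument bounds the number of such tuples by $(2k-1)!!\cdot n^k \leq (2nk)^k$, and each of their expectations is at most $1$ in absolute value because $|Y_i|\leq 1$. I would absorb lower order slack into a factor of $2$ to get an independent-case contribution of at most $2(2nk)^k$ to $M$.

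Next I would account for the $\delta$-dependence. Each product $\prod_j Y_{i_j}$ involves at most $2k$ distinct variables and is bounded by $1$ in absolute value, so the $2k$-wise $\delta$-closeness of the joint distribution to a fully independent one implies that every per-tuple expectation differs from its independent-case value by at most $\delta$. Summing this slack over all $n^{2k}$ tuples adds at most $\delta\cdot n^{2k}$ to $M$. Combining the two bounds yields $M\leq 2(2nk)^k+\delta\, n^{2k}$, and dividing by $t^{2k}$ is exactly the stated inequality.

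The main obstacle is the combinatorial count of surviving tuples in the independent case: one must identify each tuple with the set-partition it induces on the $2k$ positions, argue that only partitions with non-singleton blocks survive, and bound those via the double factorial identity $(2k-1)!!=(2k)!/(2^kk!)$. Once that count is in hand, the $\delta$-dependence step and the assembly through Markov are routine bookkeeping.
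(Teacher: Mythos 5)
The paper does not prove Fact~\ref{fc:kwise}; it is imported verbatim from \cite{celis2013balls}, so there is no in-paper proof to compare your attempt against. Evaluating your proposal on its own merits: the overall skeleton (Markov on $(X-\mu)^{2k}$, expand into index tuples, observe that a tuple with a once-occurring index vanishes under exact $2k$-wise independence, then add a per-tuple $\delta$ error term) is indeed the standard route to such tail bounds, and the $\delta$-dependence step and the assembly are fine modulo a possible factor-of-two convention in the definition of statistical distance.

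The genuine gap is the combinatorial count. You identify a surviving tuple with the singleton-free partition of $[2k]$ it induces and then say you would ``bound those via the double factorial identity $(2k-1)!!$.'' But $(2k-1)!!$ counts only the \emph{perfect matchings} of $[2k]$, which are a strict (and, as $k$ grows, vanishingly small) subset of the singleton-free partitions: the latter admit blocks of size $\ge 3$, and their count is $4$ vs.\ $3$ for $2k=4$, $41$ vs.\ $15$ for $2k=6$, $715$ vs.\ $105$ for $2k=8$, and so on. Concretely, a tuple such as $(1,1,1,2,2,2)$ for $2k=6$ survives yet does not arise from any matching-plus-labeling, so the map you are implicitly invoking is not a surjection onto the surviving tuples, and the stated bound on their number does not follow. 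Whether the inequality (number of surviving tuples) $\le (2k-1)!!\,n^k$ is nonetheless true is a real question that your ``absorb lower-order slack into a factor of $2$'' does not resolve; the degree-$k$ (in $n$) coefficient matches, but one still has to show the contributions from partitions with fewer than $k$ blocks do not overwhelm it, which is exactly the content of the lemma. A correct proof either establishes that counting inequality carefully (summing $n^{\underline{m}}\,S_{\ge 2}(2k,m)$ over block counts $m\le k$), or replaces the crude per-tuple bound $|\E[\prod_j Y_{i_j}]|\le 1$ by the sharper $|\E[Y_i^m]| \le \E[Y_i^2]\le 1/4$ for $m\ge 2$ so that partitions with few blocks are damped. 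As written, the counting step is asserted rather than proved, and the justification given for it is incorrect.
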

%
%

\section{Deterministic $(2k-1)$ Spanners for $k \in \{2,\ldots, 5\}$} \label{sec:detspannersmallk}
We will simulate the Baswana-Sen algorithm as in \cite{Censor-HillelPS16}. In particular, the algorithm constructs $k$ levels of clustering $\mathcal{C}_0=\{\{v\}, v \in V\}$, $\mathcal{C}_1, \ldots, \mathcal{C}_{k-1}$ where the $i^{th}$ clustering $\mathcal{C}_i$ consists of $O(n^{1-i/k})$ clusters centered at vertices $Z_i$. In the spanner, for every cluster $C \in \mathcal{C}_i$ with a cluster center $r \in Z_i$, there is a depth-$i$ tree spanning the vertices of $C$. To pick the centers $Z_i$ for every $i \in \{1,\ldots, k-1\}$, we apply the deterministic hitting-set algorithm. A vertex $v$ is $i$-clustered if it belongs to some of the clusters of $\mathcal{C}_i$. 

We now describe phase $i$ in details where given $\mathcal{C}_{i-1}$, we construct the clustering $\mathcal{C}_i$ and add edges to the spanner incident to the vertices that are no longer clustered in $\mathcal{C}_i$.
For every $(i-1)$-clustered vertex $v$ that is incident to less than $O(n^{1/k}\log n)$ clusters in $\mathcal{C}_{i-1}$, we add one edge connecting $v$ to each of these clusters. Let $V_i$ be the remaining $(i-1)$-clustered vertices. For each $v \in V_i$, let $S_i(v) \subseteq Z_{i-1}$ be the centers of the clusters in $\mathcal{C}_{i-1}$ that are incident to $v$. By definition, $|S_i(v)|\geq O(n^{1/k}\log n)$. We apply the deterministic algorithm 
of \Cref{cor:hittingsetdet} and compute a hitting set $Z_{i} \subseteq Z_{i-1}$ of cardinality $O(|Z_{i-1}|\cdot n^{-1/k})$. We then define an $i$-clustering $\mathcal{C}_i$ by letting each vertex $v \in V_i$ connect to one of its neighbors $w$ that belong to a cluster in $\mathcal{C}_{i-1}$ centered at $r \in Z_i$. This defines a depth-$i$ trees centered at the vertices of $Z_i$. After $k-1$ rounds, we define a clustering $\mathcal{C}_{k-1}$, at this point, every vertex adds one edge to each of its incident clusters in $\mathcal{C}_{k-1}$. 

\section{Deterministic Spanners for Sparse-Subgraphs in the \local\ Model}\label{sec:localspanner}

\begin{figure}[!h]
\begin{boxedminipage}{\textwidth}
	\vspace{3mm} \textbf{Algorithm $
	\LocalSpanner(G)$} (slightly modified version of \cite{DerbelGPV08})
	\begin{enumerate}
	 	\item $W(u)=\Gamma(u,G)$, $L,C,R(u)=\{u\}$, $\sigma=n^{1/k}$. 
		\item For $i=1$ to $k$ do:
		\begin{enumerate}
		\item Node $u$ sends $R(u)$ to its active neighbors, and receives $R(w)$ from active neighbors. 
		\item Node $u$ removes from $W$ all non-active neighbors.
		\item While $\exists w \in W$ such that $R(u)\cap R(w)\neq \emptyset$ do:
		\begin{enumerate}
		\item $W \gets W \setminus \{w\}$
		\end{enumerate}
		\item While $\exists w \in W$ and $|L|\leq i \cdot \sigma$ do:
		\begin{enumerate}
		\item $W\gets W \setminus \{v \in W ~\mid~ R(v)\cap R(w)\neq \emptyset\}$.
		\item $L \gets L \cup \{w\}$
		\item $C \gets C \cup R(w)$.
		\item If $W=\emptyset$, inactivate $u$.
		\end{enumerate}
		\item $R(u)\gets C$
		\end{enumerate}
	\end{enumerate}
\end{boxedminipage}
\caption{Deterministic construction of $(2k-1)$ spanner in the \local\ model.}
\label{fig:greedy}
\end{figure}
Let $R_i(u)$ be the set $R(u)$ at the end of round $i\geq 1$ where $R_0(u)=\{u\}$ and let $H_i$ be the spanner at the end of round $i$ where $H_0=\emptyset$. 
\begin{claim}\label{cl:spasepath}
Let $H_i$ be the edges added to the spanner at the end of round $i$. We have: \\
(1) $\dist(u,v,H_i)\leq 2i-1$ for every vertex $u$ that omitted the edge $(u,v)$ from $W$ in round $i$.\\
(2) $\dist(u,w,H_i)\leq i$ for every node $u$ that is active in round $i$ and $w \in R_i(u)$.
\end{claim}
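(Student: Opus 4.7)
The plan is to proceed by a simultaneous induction on the round index $i$, with part (2) playing the role of the workhorse that feeds part (1), and part (2) at round $i$ built from part (2) at round $i-1$ together with the new edges $u$ inserts during round $i$. The base case $i=1$ is essentially automatic: since $R_0(v)=\{v\}$ for every $v$, the intersection tests at step 2(c) and at step 2(d)(i) can only fire when two singleton regions coincide, so the only $v$ that $u$ removes from $W$ in round $1$ is a neighbor $u$ itself picks as a representative, and the corresponding spanner edge $(u,v)$ gives distance $1$, discharging both statements.

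For the inductive step, I would first prove (2). Let $L_{new}(u)$ denote the set of representatives added to $L$ by $u$ during round $i$; the algorithm's update rule gives $R_i(u) = R_{i-1}(u) \cup \bigcup_{w \in L_{new}(u)} R_{i-1}(w)$, and for each such $w$ the edge $(u,w)$ is inserted into $H_i$. A vertex $w \in R_i(u)$ therefore falls into one of two cases: either $w \in R_{i-1}(u)$, and the inductive hypothesis for (2) at round $i-1$ gives $\dist(u,w,H_{i-1}) \le i-1$; or $w \in R_{i-1}(w')$ for some $w' \in L_{new}(u)$, and concatenating the fresh edge $(u,w') \in H_i$ with the length-$(i-1)$ path from $w'$ to $w$ guaranteed by induction yields $\dist(u,w,H_i) \le i$.

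For (1), I would split according to which step of round $i$ removed $v$ from $W$. A crucial bookkeeping point is that step 2(b) strips away only neighbors already inactive entering round $i$, so any $v$ omitted at step 2(c) or step 2(d) must itself have been active at round $i-1$, which legitimizes the application of (2) at round $i-1$ to $v$. \emph{Case A} (step 2(c)): one has $R_{i-1}(u) \cap R_{i-1}(v) \ne \emptyset$; picking $x$ in the intersection and applying the inductive hypothesis (2) to both $u$ and $v$ delivers $\dist(u,v,H_{i-1}) \le 2(i-1) \le 2i-1$. \emph{Case B} (step 2(d)): there is a representative $w \in L_{new}(u)$ with $R_{i-1}(w) \cap R_{i-1}(v) \ne \emptyset$; combining the fresh edge $(u,w) \in H_i$ with two applications of (2) (one to $w$, one to $v$) gives a path of length at most $1 + 2(i-1) = 2i-1$. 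The main subtlety I expect is tracking which snapshot of $R$ is the relevant one at each line of the algorithm (the decisions made at round $i$ all read the end-of-round-$(i-1)$ values), and verifying that (2) is invoked only for vertices that were active at round $i-1$, which is exactly what step 2(b) guarantees.
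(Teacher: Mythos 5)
Your proposal is correct and follows essentially the same route as the paper: a simultaneous induction on $i$ where part (2) at round $i-1$ is the engine driving part (1) at round $i$, with the spanner edge $(u,w)$ added for each representative $w$ chosen in step 2(d). You are in fact slightly more careful than the paper's own write-up, which in the inductive step for (1) implicitly folds the step 2(c) removals (your Case A, where the witness lies in $R_{i-1}(u)$ itself and no fresh edge is needed) into the step 2(d) case, and which in (2) discusses only the vertices newly added to $R_i(u)$ without explicitly noting the $w\in R_{i-1}(u)$ sub-case; your explicit treatment of both, together with the observation that step 2(b) is what licenses applying the hypothesis (2) to $v$ and $w$, tightens the argument without changing its structure.
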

\begin{proof}
The proof is by induction on $i$. For the induction base $i=1$, all $R(u)$ are distinct and hence no edges are omitted from $W$ and (1) holds .
Since $R_1(u)\subseteq \Gamma(u)$, (2) holds as well
 
Assume that the claims holds up to round $(i-1)$ and consider round $i$. In round $i$, an active vertex $u$ collects the $R(v)$ sets only of its active neighbors. 
We start with (1). For every edge $(u,a)$ that is omitted from the set $W$ of $u$, in round $i$, it holds that $u$ added an edge $(u, b)$ to $H_i$ in round $i$, such that there exists 
$x \in R_{i-1}(a) \cap R_{i-1}(b)$. 
By induction assumption (II), it holds that $\dist(a,x,H_{i-1}),\dist(b,x,H_{i-1})\leq (i-1)$. Hence 
$$\dist(u,a,H_i)\leq \dist(u,b,H_i)+\dist(b,x,H_i)+\dist(x,a,H_i)\leq 2(i-1)+1=2i-1~.$$
We  proceed with (2).
%
%
All the vertices $x$ added to $R_i(u)$ belong to some $R_{i-1}(w)$ where the edge $(u,w)$ was added to the spanner in round $i$. By induction assumption for $(i-1)$, $\dist(w,x,H_{i-1})\leq i-1$ and thus $\dist(u,x,H_i)\leq i$ for every $x \in R_i(w)$. 
\end{proof}

\begin{claim}\label{cl:inactivefast}
Every sparse vertex becomes inactive after $(k/2-1)$ rounds.
\end{claim}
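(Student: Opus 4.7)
The plan is to argue by contradiction: assume a sparse vertex $u$ is still active at the end of round $i = k/2-1$, and derive a lower bound on $|R_{i}(u)|$ that violates the sparseness of $u$. Setting $\sigma = n^{1/k}$, the key inductive statement I would prove is: for every $u$ that is active at the end of round $i \geq 1$, $|R_i(u)| > \sigma^i = n^{i/k}$. Once this is in hand, the claim follows by combining it with Claim~\ref{cl:spasepath}(2), which guarantees $R_i(u) \subseteq \Gamma_i(u,H_i)$, with the fact that $H_i \subseteq G_{sparse} \subseteq G_{light}$, and with the sparseness bound $|\Gamma_{k/2-1}(u,G_{light})|\leq n^{1/2-1/k}$.

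For the induction I would proceed as follows. Observe that $u$ surviving round $i$ means the inner while-loop exits because $|L|$ exceeded $i\sigma$ rather than because $W$ emptied; since the previous round exited with $|L|=(i-1)\sigma+1$, exactly $\sigma$ new vertices $w$ join $L$ during round $i$. Step 2(b) ensures each such $w$ was active at the end of round $i-1$, so by induction $|R_{i-1}(w)| > \sigma^{i-1}$. The first filtering loop uses the snapshot $R(u)=R_{i-1}(u)=C$ (which has not yet been updated in round $i$) to discard any $w$ whose $R_{i-1}(w)$ already overlaps $C$, and the removal in 2(d)(i) guarantees mutual disjointness among the $R_{i-1}(w)$'s added in round $i$. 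Summing the disjoint contributions gives $|R_i(u)| \geq |R_{i-1}(u)| + \sigma\cdot\sigma^{i-1} > \sigma^i$, as desired. The base case $i=1$ is easy: the $\sigma$ new neighbors contribute singleton sets $R_0(w)=\{w\}$ to $C$, so $|R_1(u)|\geq \sigma+1>\sigma$.

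Putting the pieces together, if a sparse $u$ were active at the end of round $k/2-1$ then $|R_{k/2-1}(u)|>n^{1/2-1/k}$, but by Claim~\ref{cl:spasepath}(2) together with $H_{k/2-1}\subseteq G_{light}$ we would have $R_{k/2-1}(u)\subseteq \Gamma_{k/2-1}(u,G_{light})$, whose size is at most $n^{1/2-1/k}$ by sparseness -- a contradiction. The main delicacy I expect is the careful bookkeeping behind the induction: one must justify that precisely $\sigma$ additions occur per round while $u$ stays active, and that the accumulated set $C$ across all rounds consists of pairwise disjoint $R_{j-1}(w)$ pieces (including disjointness between pieces added in different rounds, which follows because later additions are filtered against the current $R(u)$, and $R(u)$ already contains all previously added pieces). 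Everything else is routine arithmetic with $\sigma = n^{1/k}$.
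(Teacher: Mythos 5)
Your proof is correct and follows the same route as the paper's: an induction showing $|R_i(u)|$ grows geometrically (here, $>\sigma^i$ with $\sigma=n^{1/k}$) by summing the pairwise-disjoint $R_{i-1}(w)$ pieces added each round, combined with the containment $R_i(u)\subseteq\Gamma_i(u,G_{light})$, to contradict the sparseness bound $|\Gamma_{k/2-1}(u,G_{light})|\leq n^{1/2-1/k}$. Your only refinements are carrying a strict inequality and deriving the containment from Claim~\ref{cl:spasepath}(2) rather than re-proving it inside the induction; the strict version closes the boundary case $i=k/2-1$ a bit more cleanly than the paper's non-strict $\geq n^{i/k}$.
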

\begin{proof}
We prove by induction on $i\geq 1$, that if $u$ is still active in round $i+1$, then $|R_i(u)|\geq n^{i/k}$ and that $R_i(u)\subseteq \Gamma_i(u,G)$. This would imply that a sparse vertex $u$ becomes inactive at the end of round $k/2-1$.

The base of the induction $i=1$ follows as initially $R_0(w)=\{w\}$ and hence there is no overlap between the sets of the neighbors, and $u$ adds $n^{1/k}$ neighbors into $R_1(u)$ hence $R_1(u) \subseteq \Gamma_1(u,G)$. 
(If $u$ adds less than $n^{1/k}$ neighbors, then it becomes inactive and we are done).
Assume that the claim holds at the end of round $i-1$ and consider round $i\leq k/2-1$.
Since $u$ is active at the end of round $(i-1)$, it implies that we added $n^{1/k}$ \emph{active} neighbors $L'$ of $u$ into $L(u)$. By applying the induction assumption for $i-1$ on each of these active neighbors, we get that $R_{i-1}(w) \subseteq \Gamma_{i-1}(w)$ and $|R_{i-1}(w)|\geq n^{(i-1)/k}$. Since the $R_{i-1}$ sets of each $w \in L'$ are vertex disjoint, and since $L' \subseteq \Gamma(u)$, the claim follows.
\end{proof}

\begin{claim}\label{cl:localsim}
If a sparse vertex $u$ knows all the edges in $G_{sparse}(u)$ then it can 
locally simulate Alg. $\LocalSpanner$. 
\end{claim}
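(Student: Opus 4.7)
}
The plan is to show by induction on the simulation depth that, in order to reproduce every local decision made by $u$ during Alg.\ $\LocalSpanner$ executed on $G_{sparse}$, the vertex $u$ only needs the edges and identifiers contained in $G_{sparse}(u)=G_{sparse}[\Gamma_{k/2-1}(u,G)]$. The starting point is Claim~\ref{cl:inactivefast}, which guarantees that a sparse vertex becomes inactive after at most $r:=k/2-1$ rounds, so $u$'s entire contribution to the algorithm is determined by the first $r$ rounds of the simulation.

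Concretely, I would first unfold the data dependencies in the pseudocode of Fig.~\ref{fig:greedy}. In round $i$, the only inputs needed by a vertex $v$ beyond its own state are: (i) the identifiers of the neighbors of $v$ in $G_{sparse}$ (to build $W(v)$ and later remove non-active ones), and (ii) the sets $R_{i-1}(w)$ of each neighbor $w$ that is still active at the beginning of round $i$. Both pieces of information about $w$ can themselves be computed by simulating Alg.\ $\LocalSpanner$ on $w$ for the first $i-1$ rounds. I would then formalize the following inductive statement: for every $j\in\{0,1,\ldots,r\}$ and every vertex $v\in\Gamma_{r-j}(u,G_{sparse})$, the subgraph $G_{sparse}[\Gamma_{j}(v,G_{sparse})]$ contains all the information required to locally recompute $R_0(v),R_1(v),\ldots,R_{j}(v)$ as well as $v$'s activity status in rounds $1,\ldots,j$. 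The base case $j=0$ is trivial since $R_0(v)=\{v\}$ and $v$ is active. For the step, given the inductive hypothesis at depth $j-1$ applied to every neighbor $w$ of $v$ in $G_{sparse}$, $v$ can reconstruct $R_{j-1}(w)$ and $w$'s activity from $G_{sparse}[\Gamma_{j-1}(w,G_{sparse})] \subseteq G_{sparse}[\Gamma_{j}(v,G_{sparse})]$, and hence execute round $j$ of the pseudocode at $v$.

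Taking $j=r=k/2-1$ and $v=u$ gives the claim, provided $G_{sparse}(u)$ contains $G_{sparse}[\Gamma_{r}(u,G_{sparse})]$. This inclusion follows because $G_{sparse}\subseteq G$ implies $\Gamma_{k/2-1}(u,G_{sparse})\subseteq \Gamma_{k/2-1}(u,G)$, and $G_{sparse}(u)$ is by definition the subgraph induced by $\Gamma_{k/2-1}(u,G)$ in $G_{sparse}$, which therefore contains every $G_{sparse}$-edge both of whose endpoints lie within distance $k/2-1$ of $u$.

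The main obstacle I anticipate is a bookkeeping subtlety rather than a conceptual difficulty: one must verify that the recursive simulation performed by $u$ never consults an edge of $G_{sparse}$ whose endpoints lie outside $\Gamma_{k/2-1}(u,G)$. In particular, when the induction is applied at the boundary vertex $v\in\Gamma_{k/2-1}(u,G_{sparse})$ with $j=0$, we need nothing more than $v$'s identifier, which is indeed recorded in $G_{sparse}(u)$; and when simulating any interior vertex $v\in\Gamma_{k/2-2}(u,G_{sparse})$ one only inspects $v$'s neighbors, which all lie in $\Gamma_{k/2-1}(u,G_{sparse})\subseteq \Gamma_{k/2-1}(u,G)$. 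Thus the recursion stays within $G_{sparse}(u)$, and $u$ can determine precisely which of its incident edges Alg.\ $\LocalSpanner$ would add to the spanner, which is exactly what the claim asserts.
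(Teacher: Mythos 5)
Your proposal is correct and takes essentially the same route as the paper: induct on the number of simulated rounds, show that simulating $j$ rounds at a vertex requires only information within its $j$-neighborhood in $G_{sparse}$, and invoke Claim~\ref{cl:inactivefast} to cap the number of rounds a sparse vertex stays active at $k/2-1$. The only (immaterial) difference is that the paper's inductive hypothesis is slightly tighter — it requires only the edges incident to $\Gamma_{i-1}(u,G_{sparse})$ to simulate $i$ rounds, whereas you ask for the full induced subgraph $G_{sparse}[\Gamma_j(v,G_{sparse})]$ — but both edge sets are contained in $G_{sparse}(u)$, so the conclusion is unaffected.
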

\begin{proof}
We show that if a sparse vertex $u$ knows all the edges incident to $\Gamma_{k/2-2}(u,G_{sparse})$ then it can 
locally simulate Alg. $\LocalSpanner$. Since this edge set is contained in $G_{sparse}(u)$, this would prove the claim. 

By \Cref{cl:inactivefast}, every sparse vertex $u$ becomes inactive at the end of round $i \leq k/2-1$ of Alg. $\LocalSpanner$.
We prove by induction on $i$ that to simulate the first $i$ rounds of Alg. $\LocalSpanner$, it is sufficient for each vertex $u$ (either sparse or dense) to know all the edges incident to the vertices in $\Gamma_{i-1}(u,G_{sparse})$. For ease of notation, let $\Gamma_i(v,G_{sparse})=\Gamma_i(v)$. 
For $i=1$, since the $R_0$ sets are singletons, $u$ only needs to know its neighbors and the claim holds. 
Assume that the claim holds up to round $i-1$ and consider round $i$. 
In round $i$, $u$ should simulate the first $i-1$ rounds for each of its neighbors. By induction assumption, for simulating the first $i-1$ rounds for $w$, it is sufficient to know all edges incident to the vertices in $\Gamma_{i-1}(w)$. Hence, it is sufficient for $u$ to know all the edges incident to the vertices in $\Gamma_i(u)$ as $\Gamma_{i-1}(w) \subseteq \Gamma_i(u)$ for every $w \in \Gamma(u,G_{sparse})$. 
\end{proof}

\begin{claim}\label{cl:localsimpath}
Consider a sparse vertex $u$. For every edge $(u,v)$ not added to the spanner, when $u$ locally simulates $\LocalSpanner$ in $G_{sparse}(u)$, it holds that there is a path of length $k-3$ fully contained in $G_{sparse}(u)\cap H_{k/2-1}$ where $H_i$ is the current spanner of Alg. $\LocalSpanner$ at the end of round $i$ for every $i \in \{1,\ldots,k\}$. 
\end{claim}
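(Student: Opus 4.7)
My plan is to stitch together Claims~\ref{cl:inactivefast}, \ref{cl:localsim}, and \ref{cl:spasepath}(1). First, I would invoke Claim~\ref{cl:inactivefast} applied to $u$'s local simulation on $G_{sparse}(u)$: since $u$ is sparse (and since $G_{sparse}\subseteq G_{light}$ guarantees $|\Gamma_{k/2-1}(u,G_{sparse})|\le n^{1/2-1/k}$), $u$ becomes inactive at some round $i^\star\le k/2-1$. By Claim~\ref{cl:localsim}, the subgraph $G_{sparse}(u)$ carries enough information for $u$ to execute $k/2-1$ rounds of $\LocalSpanner$ faithfully, so the data structures $W(u), L(u), R_j(\cdot)$ and the spanners $H_j$ produced by the simulation coincide with what a genuine execution on $G_{sparse}$ would produce restricted to vertices within distance $k/2-1$ of $u$.

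Second, I would pin down exactly how an edge $(u,v)$ that is not added to the simulated spanner is ``omitted''. Initially $W(u)=\Gamma(u,G_{sparse})\ni v$, and since $W(u)=\emptyset$ by round $i^\star$, the vertex $v$ must have been removed from $W(u)$ at some round $j\le i^\star$. The removal cannot have happened through step~2d.ii of $\LocalSpanner$ (i.e.\ inserting $v$ into $L(u)$), because that step would have placed the edge $(u,v)$ in $H_j$, contradicting the assumption. Therefore $(u,v)$ falls under the hypothesis of Claim~\ref{cl:spasepath}(1) in round $j$.

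Third, applying Claim~\ref{cl:spasepath}(1) to $u$'s simulation, I obtain
\[
\dist(u,v,H_j) \;\le\; 2j-1 \;\le\; 2(k/2-1)-1 \;=\; k-3,
\]
and since $H_j\subseteq H_{k/2-1}$, this yields a path $P$ of length at most $k-3$ in $H_{k/2-1}$. The witness path has the form ``$u\to b$ in $H_j$, then a $b$--$x$ path in $H_{j-1}$, then an $x$--$v$ path in $H_{j-1}$'' for some $b\in\Gamma(u,G_{sparse})$ and some $x\in R_{j-1}(b)\cap R_{j-1}(v)$.

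The one thing worth verifying carefully is that every vertex and edge on $P$ sits inside $G_{sparse}(u)$, not just inside the abstract spanner $H_{k/2-1}$. This is where the choice to simulate on $G_{sparse}(u)$ pays off: by construction every $R_i(\cdot)$ set the simulation manipulates, and every edge ever inserted into any $H_i$, is drawn from $G_{sparse}(u)$, so all vertices of $P$ lie in $V(G_{sparse}(u))\subseteq\Gamma_{k/2-1}(u,G)$ and all edges of $P$ lie in $E(G_{sparse}(u))$ automatically. Consequently $P\subseteq G_{sparse}(u)\cap H_{k/2-1}$, which is the claim. I expect this last containment to be the only place where care is needed; the rest is a direct composition of the three cited claims.
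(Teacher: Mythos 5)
Your proof is correct and follows essentially the same route as the paper's: bound the discard round by $k/2-1$ via Claim~\ref{cl:inactivefast}, obtain the detour of length $\le 2j-1\le k-3$ from Claim~\ref{cl:spasepath}, and use Claim~\ref{cl:localsim} to guarantee that the simulation is faithful and the witness path lies inside $G_{sparse}(u)$. The only cosmetic difference is that you cite Claim~\ref{cl:spasepath}(1) directly, whereas the paper re-derives that bound inline from Claim~\ref{cl:spasepath}(2) by exhibiting the edge $(u,z)$, the common vertex $x\in R_{i-1}(v)\cap R_{i-1}(z)$, and the two $(i-1)$-length legs; your step ruling out removal via insertion into $L(u)$ is a small clarifying detail the paper leaves implicit.
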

\begin{proof}
Let $i$ be the round in which the edge $(u,v)$ is discarded from the set $W$ (and hence not added to $H_i$). 
For every edge $(u,v)$ that is omitted from $W$ in round $i$, there is another edge $(u,z)$ that is added to the spanner in round $i$ such that there exists $x \in R_{i-1}(v) \cap R_{i-1}(z)$. In addition, there are paths of length at most $(i-1)$ between $v-x$ and $z-x$ (by \Cref{cl:spasepath}(II)). Thus, $u$ can see two $u$-$x$ paths of length at most $i$. Since $i \leq k/2-1$ and $u$ sees all the edges incident to the vertices in $\Gamma_{k/2-2}(u,G_{sparse})$, it can see the alternative $(k-3)$-length $u$-$v$ path in $G_{sparse}(u)\cap H_{k/2-1}$. 
\end{proof}
The proof of \Cref{lem:alglocalsim} follows by \Cref{cl:localsim},\Cref{cl:localsimpath}.


\end{document}